\def\BibTeX{{\rm B\kern-.05em{\sc i\kern-.025em b}\kern-.08em
    T\kern-.1667em\lower.7ex\hbox{E}\kern-.125emX}}
\theoremstyle{plain}
\newcommand{\pos}{\text{Pos}}
\newcommand{\dom}{\textsf{dom}}
\newcommand{\rr}{\textsf{RR}}
\newcommand{\RAT}{\ensuremath{\textsf{Rat}}\xspace}
\newcommand{\LEX}{\ensuremath{\textsf{Lex}}\xspace}
\newcommand{\SST}{\ensuremath{\textsf{SST}}\xspace}
\newcommand{\pref}{\ensuremath{\textsf{pref}}}
\newcommand{\rev}{\ensuremath{\textsf{rev}}\xspace}
\newcommand{\sub}{\ensuremath{\textsf{sub}}}
\newcommand{\era}{\ensuremath{\textsf{del}}}
\newcommand{\id}{\ensuremath{\textsf{id}}}
\newcommand{\MSOSI}{\textsf{MSOSI}\xspace}
\newcommand{\MSO}{\textsf{MSO}\xspace}
\newcommand{\MSOI}{\textsf{MSOI}\xspace}
\newcommand{\FOI}{\textsf{FOI}\xspace}
\newcommand{\FO}{\textsf{FO}\xspace}
\newcommand{\FOC}{\textsf{FOC}\xspace}
\newcommand{\MSOT}{\textsf{MSOT}\xspace}
\newcommand{\NMT}{\textsf{NMT}\xspace}
\newcommand{\SEQ}{\textsf{Seq}\xspace}
\newcommand{\REG}{\textsf{Reg}\xspace}
\newcommand{\PREG}{\textsf{PolyReg}\xspace}
\newcommand{\NMTSPF}{\ensuremath{\textsf{NMT}_{\textsf{spf}}}\xspace}
\newcommand{\deltac}{\delta_{\textsf{call}}}
\newcommand{\deltar}{\delta_{\textsf{ret}}}
\newcommand{\AT}{\textsf{AT}\xspace}
\newcommand{\suc}{\textsf{succ}}
\newcommand{\lexenum}{\textsf{lex{-}enum}}
\newcommand{\map}{\ensuremath{\textsf{map}}\xspace}
\newcommand{\maplex}{\textsf{maplex}}
\renewcommand{\square}{\ensuremath{\textsf{square}}\xspace}
\newcommand{\sep}{|}
\newcommand{\ch}{{\triangleright}}
\newcommand{\arity}{\mathsf{arity}}
\newcommand{\nat}{\mathbb N}
\newcommand{\partialf}{\rightharpoonup}
\newcommand{\univ}{\mathrm{univ}}
\newcommand{\set}[1]{\left\{#1\right\}}
\newcommand{\sign}{\mathcal S}
\newcommand{\signout}{\mathcal T}
\newcommand{\pad}{\text{\textvisiblespace}}
\newcommand{\order}{\prec}
\newcommand{\neworder}{\vartriangleleft}
\newcommand{\ordera}{\lambda}
\newcommand{\out}{\mu}
\newcommand{\aut}{A}
\newcommand{\ie}{\textit{i.e.~}}
\newcommand{\eg}{\textit{e.g.~}}
\newcommand{\ptime}{\textsc{PTime}\xspace}
\theoremstyle{definition}
\newtheorem{definition}{Definition}[section]
\newtheorem{example}[definition]{Example}
\theoremstyle{plain}
\newtheorem{theorem}{Theorem}
\newtheorem{conjecture}{Conjecture}
\newtheorem{question}{Question}
\newtheorem{lemma}[definition]{Lemma}
\newtheorem{proposition}[definition]{Proposition}
\newtheorem{corollary}[definition]{Corollary}
\newtheorem{remark}[definition]{Remark}
\theoremstyle{remark}
\newcommand{\mm}[2]{\begin{smallmatrix}#1\\#2\end{smallmatrix}}
\newcommand{\pp}[2]{\begin{psmallmatrix}#1\\#2\end{psmallmatrix}}
\newcommand{\mmm}[3]{\begin{smallmatrix}#1\\#2\\#3\end{smallmatrix}}
\begin{document}

\title{Lexicographic transductions of finite words}

\author{}

\author{\IEEEauthorblockN{Emmanuel Filiot}
\IEEEauthorblockA{\textit{Universit\'e libre de Bruxelles}}
\and
\IEEEauthorblockN{Nathan Lhote}
\IEEEauthorblockA{\textit{Aix-Marseille Universit\'e} }
\and
\IEEEauthorblockN{Pierre-Alain Reynier}
\IEEEauthorblockA{\textit{Aix-Marseille Universit\'e}}
}

\maketitle

\begin{abstract}
Regular transductions over finite words have linear input-to-output growth. This class of transductions enjoys many characterizations, such as transductions computable by two-way transducers 
    as well as transductions definable in \MSO (in the sense of Courcelle). Recently, regular transductions have been extended by Bojanczyk to polyregular transductions, which have polynomial growth, and are characterized by pebble transducers and \MSO interpretations. Another class of interest is that of transductions defined by streaming string transducers or marble transducers, which have exponential growth and are incomparable with polyregular transductions.
    
    In this paper, we consider \MSO set interpretations (\MSOSI) over finite words which were introduced by Colcombet and Loeding. \MSOSI are a natural candidate for the class of ``regular transductions with exponential growth", and are rather well-behaved.
    However \MSOSI lack, for now, two desirable properties that regular and polyregular transductions have. The first property is being described by an automaton model, which is closely related to the second property of regularity preserving meaning preserving regular languages under inverse image.
    
    We first show that if \MSOSI are (effectively) regularity preserving then any automatic $\omega$-word has a decidable \MSO theory, an almost 20 years old conjecture of B\'ar\'any.
    
    Our main contribution is the introduction of a class of transductions of exponential growth, which we call lexicographic transductions. We provide three different presentations for this class: first, as the closure of simple transductions (recognizable transductions) under a single operator called maplex;
    second, as a syntactic fragment of \MSOSI (but the regular languages are given by automata instead of formulas); and third, we give
    an automaton based model called nested marble transducers, which generalize both marble transducers and pebble transducers.
    We show that this class enjoys many nice properties including being regularity preserving.

\end{abstract}

\begin{IEEEkeywords}
transducers, automata, MSO, logical interpretations, automatic structures
\end{IEEEkeywords}

\section{Introduction}

\subsubsection{\MSOSI and the connection to automatic structures}

\MSO set interpretations (\MSOSI) were introduced in \cite{DBLP:journals/lmcs/ColcombetL07}, as a generalization of automatic structures (as well as $\omega$-automatic, tree automatic and $\omega$-tree automatic structures). Indeed an automatic structure can be seen as an \MSOSI whose domain is a single structure with decidable $\MSO$ theory such as $(\nat,\leq)$. Using a framework of transformations turns out to be very fruitful, and most of the properties of automatic structures already hold for set-interpretations over structures with decidable $\MSO$ theory. The core property of automatic structures (and their generalizations) is that they have decidable \FO theory. More generally, \MSOSI have what we call the \emph{\FO backward translation property}, meaning that the inverse image of an \FO formula by an \MSOSI is \MSO definable. This property is obtained via simple, yet powerful, syntactic formula substitution. This technique actually allows to show more generally that \MSOSI are closed under post-composition by \FO interpretations (\FOI).

Generally speaking, automatic structures do not have a decidable \MSO theory. This has motivated a line of research looking for interesting structures with a decidable \MSO theory. For instance morphic $\omega$-words, as well as two generalizations called $k$-lexicographic $\omega$-words \cite{DBLP:journals/ita/Barany08} and toric $\omega$-words \cite{DBLP:journals/tcs/BertheKNOVW25}, have been shown to have a decidable \MSO theory.
Morphic $\omega$-words and $k$-lexicographic $\omega$-words are particular cases of \emph{automatic $\omega$-words}\footnote{Not to be confused with automatic \emph{sequences}.}. An automatic $\omega$-word is an automatic structure with unary relations and a single binary relation which is a total order isomorphic to $(\nat,\leq)$ (it is crucial that the structure be given by its order relation and not by the successor).
To the best of our knowledge, it is not known whether an automatic $\omega$-word with an undecidable \MSO theory exists, which raises the following conjecture:
\begin{conjecture}\label{conj:automso}\cite[Conjecture~(1)~Section~9]{DBLP:journals/ita/Barany08}    
    Any automatic $\omega$-word has a decidable \MSO theory\footnote{B\'ar\'any actually conjectures more strongly that any automatic $\omega$-word has a so-called \emph{canonical presentation}.}. 
\end{conjecture}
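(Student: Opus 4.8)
\emph{Proof plan.} Since this is a long-standing open conjecture, I can only describe the strategy I would pursue, which follows the architecture set up in this paper. Fix an automatic presentation $(D;\leq,(P_a)_{a\in\Sigma})$ of the $\omega$-word, with $D\subseteq\Gamma^*$ regular and $\leq$ recognised by a synchronous automaton $B$ on $D\times D$. Listing the elements of $D$ in $\leq$-order by the enumeration map $f\colon\nat\to D$ and reading off the unary predicates yields an $\omega$-word $w\in(2^\Sigma)^\omega$, and ``the automatic $\omega$-word has a decidable \MSO theory'' is precisely the statement that the \MSO theory of $w$ is decidable. Now $w=g\circ f$ where $g$ is a letter-to-letter (recognizable) map, so everything reduces to the transduction $\tau_B\colon(\nat,\leq)\to\Gamma^*$ that enumerates $D$ in the automatic order $\leq$: decidability of the \MSO theory of $w$ follows as soon as $\tau_B$ has the \emph{\MSO backward translation} property, i.e. pulls back \MSO-definable $\omega$-languages to \MSO-definable subsets of $(\nat,\leq)$, since the latter theory is decidable by Büchi. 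Because \MSOSI already enjoy \emph{\FO} backward translation, the entire difficulty is the gap between \FO and \MSO. Thus the plan splits into: (i) exhibit a concrete class $\mathcal C$ of transductions containing all the $\tau_B$ arising from automatic presentations of $\omega$; (ii) show $\mathcal C$ is effectively \emph{regularity preserving}, i.e.\ has \MSO backward translation.

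For step (i) the goal is a structure theorem for automatic presentations of $\omega$. Starting from a deterministic synchronous automaton $B$ for $\leq$, I would run a Ramsey/pumping analysis on $B$: elements of $D$ of a fixed length are linearly ordered by a regular order, hence essentially by a lexicographic-type order; and the way the blocks of a given length sit inside the global order of type $\omega$ is again a regular order on shorter ``index'' words. Iterating this decomposition, with depth bounded by a parameter of $B$ (e.g.\ its number of states), should express $w$ as a \emph{nested lexicographic} word, i.e.\ realise $\tau_B$ by an iterated application of the \maplex{} operator, equivalently by a nested marble transducer \NMT{} on $(\nat,\leq)$. The target statement is: every automatic $\omega$-word is \MSO-equivalent to the output of an \NMT{} on $(\nat,\leq)$; this would simultaneously recover morphic, $k$-lexicographic, and toric $\omega$-words as instances of bounded nesting depth. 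Granting this, step (ii) is supplied by the paper: lexicographic transductions / nested marble transducers are shown to be regularity preserving, which transfers decidability of the \MSO theory from $(\nat,\leq)$ to $w$.

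Hence the \textbf{main obstacle is entirely step (i)}: proving that no automatic presentation of $\omega$ can encode an ordering more complex than a bounded nesting of lexicographic orders. This is exactly where the unconditional statement escapes us — the paper obtains only the conditional ``if \MSOSI is effectively regularity preserving then Conjecture~\ref{conj:automso} holds'', which corresponds to taking $\mathcal C$ to be all of \MSOSI and leaving (ii) as a hypothesis. I would therefore invest the effort into the structure theorem, and, as a fallback, pursue the weaker targets of (a) extending the known decidable class beyond toric $\omega$-words by allowing one further nesting level, and (b) isolating a combinatorial invariant of $B$ that is provably stable under automatic isomorphism and bounds the required nesting depth.
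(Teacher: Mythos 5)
This statement is an open conjecture (due to B\'ar\'any), and the paper does not prove it: it only establishes the \emph{conditional} implication in Corollary~\ref{coro:conje} that regularity preservation of word-to-word \MSOSI would imply it, together with the unconditional fact that the subclass \LEX is regularity preserving. So there is no paper proof to compare against, and your submission is, as you say yourself, a research plan rather than a proof.

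As a plan it is coherent and correctly wired into the paper's architecture: step (ii) is genuinely supplied by the paper's results (\LEX, equivalently \NMT, is regularity preserving, and via Proposition~\ref{prop:lex-corr} a lexicographic presentation of an automatic $\omega$-word transfers \MSO decidability from $(\nat,\leq)$). But you should be aware that your step (i) --- the structure theorem asserting that every automatic presentation of $\omega$ decomposes into a bounded nesting of lexicographic orders --- is not merely ``the main obstacle''; it is essentially B\'ar\'any's \emph{stronger} conjecture that every automatic $\omega$-word admits a canonical ($k$-lexicographic) presentation, which the paper's footnote to this very conjecture flags as open. The Ramsey/pumping sketch you give for it does not engage with the real difficulty, namely that the decomposition must be stable under automatic isomorphism of presentations (two presentations of the same $\omega$-word can order words of a given length in wildly different regular ways), and the paper's Corollary~\ref{coro:stricthierarch} shows the nesting depth cannot be bounded uniformly, only per presentation. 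So the gap is that step (i) is itself an open problem at least as hard as the statement you are trying to prove in the lexicographic case; your fallback targets (a) and (b) are the honest content of the proposal.
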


In \cite[Corollary~5.6]{DBLP:journals/ita/Barany08}, the author even shows that $k$-lexicographic $\omega$-words are closed under sequential transductions. As we show in Proposition~\ref{prop:back-seq} this property is deeply connected to preserving \MSO definable sets by inverse image (which we call regularity preserving\footnote{This property is sometimes called regular continuity \cite{DBLP:journals/lmcs/CadilhacCP19}.}) and is stronger than having a decidable \MSO theory.

A different setting where one can obtain regularity preserving transductions, is 
provided in \cite{BKL19} where it is shown that \MSO interpretations (\MSOI) from finite words to finite words characterize the polyregular transductions. Once again, as for automatic $\omega$-words, the output structure must be defined by its order and not by the successor.

This calls for a more unifying argument and systematic study of \MSOSI whose output structures are linearly ordered, that we phrase as a conjecture\footnote{One could even venture stating stronger conjectures extending the structures to trees, $\omega$-words or infinite trees.}:
\begin{conjecture}\label{conj:mso}
    \MSOSI from finite words to finite words are regularity preserving.
\end{conjecture}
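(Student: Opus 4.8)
The plan is to reduce regularity preservation to an algebraic statement about products along definable orders, and then to reduce the general case to the lexicographic fragment via an order-normalization lemma. First I would reformulate the conjecture algebraically. Fix a word-to-word \MSOSI $f$ of some dimension $k$, given by a domain formula, label formulas, and an order formula $\psi_{\leq}$ defining the linear order on output positions (which are $k$-tuples of sets of input positions). Let $L$ be regular, recognised by a morphism $h\colon \Gamma^* \to M$ into a finite monoid $M$ with accepting set $F$. Since $M$ is finite, it suffices to show that for each $m \in M$ the input language $\{w : h(f(w)) = m\}$ is \MSO-definable. Collapsing each output label through $h$, the output $f(w)$ becomes an $M$-labelled linear order whose points are the selected $k$-tuples of input sets, and $h(f(w))$ is exactly the ordered product of these labels in $M$. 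Thus the conjecture is equivalent to the statement that \MSO over the input can compute the monoid product of the \MSO-definable $M$-labelling along the \MSO-definable order $\psi_{\leq}$.

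Second, I would attack the order itself through an \emph{order-normalization lemma}: every \MSO-definable linear order on $k$-tuples of sets of positions of a word is \MSO-isomorphic, via an \MSO-definable isomorphism, to a nested-lexicographic order, that is, finitely many layers of lexicographic comparisons with \MSO-definable coordinate priorities. Granting this lemma, the transduction $f$ coincides with a bounded nesting of the \maplex operator, hence is a lexicographic transduction; and the paper's main results show that lexicographic transductions are regularity preserving. On the algebraic side, the product then becomes an iterated lexicographic product of monoid elements, which a finite automaton on the input can evaluate by the Feferman--Vaught / composition method, one lexicographic layer at a time.

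The main obstacle is the order-normalization lemma, and it is genuinely hard: the paper already proves that (effective) regularity preservation implies Conjecture~\ref{conj:automso}, so any complete proof must in particular settle B\'ar\'any's conjecture. An \MSO-definable order on tuples of sets is precisely an automatic linear order, and normalizing it amounts to a structure theory for such orders. I would approach it by a Ramsey / Simon factorization-forest argument relativised to \MSO, tracking the \MSO $q$-type of input prefixes to show that the order type stabilises into a bounded-depth combination of lexicographic blocks; the delicate point is to make this factorization \MSO-definable uniformly in the input length, so that a finite automaton can follow it despite the exponentially many output positions. As an intermediate target I would first prove the lemma in dimension $k=1$, where output points are single sets and $\psi_{\leq}$ is a definable linear order on the subsets of positions, and then lift to arbitrary $k$ by induction on the dimension, peeling off one coordinate at a time and using closure of \MSOSI under post-composition by \FO interpretations (the \FO backward translation property) to absorb each peeled layer into the remaining interpretation.
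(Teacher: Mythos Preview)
The statement you are trying to prove is Conjecture~\ref{conj:mso}: it is \emph{not} proven in the paper. The authors explicitly leave it open, and in fact devote Corollary~\ref{coro:conje} to showing that a proof would settle B\'ar\'any's twenty-year-old Conjecture~\ref{conj:automso}. So there is no ``paper's own proof'' to compare against; any claimed proof here is necessarily a proof of an open problem.

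Your algebraic reformulation in the first paragraph is correct and standard: regularity preservation is equivalent to being able to compute, in \MSO over the input, the product in a finite monoid $M$ along the definable output order. This is fine as a restatement but contains no new content.

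The substantive step is your order-normalization lemma, and here there is a concrete reason to believe it is false as you have stated it. You claim the lemma would show that every word-to-word \MSOSI ``coincides with a bounded nesting of the \maplex operator, hence is a lexicographic transduction''. But this is exactly the inclusion $\MSOSI \subseteq \LEX$, and the paper's discussion section says explicitly: ``We have not shown that \MSOSI \emph{strictly} subsumes \LEX, although we suspect it does.'' The authors believe $\LEX \subsetneq \MSOSI$, in which case your normalization lemma is false, not merely hard. A Ramsey/Simon argument will not rescue this: factorization forests control associative products, but the obstruction here is group-theoretic (the discussion points to non-solvable groups such as $\mathcal A_5$ as the place where regularity preservation might fail), and lexicographic orders in the sense of \LEX cannot in general encode arbitrary automatic orders.

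In short: your first paragraph is a sound reformulation; your second paragraph assumes a lemma that is likely false; and your third paragraph correctly identifies that lemma as the whole difficulty, but the proposed attack (factorization forests plus induction on dimension) gives no mechanism for handling the non-solvable-group obstruction that the paper singles out as the crux.
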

In this article we focus on transductions from finite words to finite words for two main reasons: it is already quite challenging and it captures part of the difficulty of $\omega$-words. Indeed we show in Corollary~\ref{coro:conje} that a positive answer to Conjecture~\ref{conj:mso} entails that Conjecture~\ref{conj:automso} holds.

\subsubsection{On regular transductions with exponential growth}

The theory of finite word transducers has a long history (in fact as long as automata theory) and is still actively studied.

Various classes of transductions have been introduced, most notably (and ordered inclusion-wise): sequential (\SEQ), rational (\RAT), regular (\REG) and the more recent polyregular transductions (\PREG) as well as transductions defined by streaming string transducers (\SST), which subsume \REG but are incomparable to \PREG. For a recent survey, see~\cite{DBLP:conf/stacs/MuschollP19}.

These classes are rather well-known and enjoy nice \emph{regularity} properties, including being closed under composition  (except for \SST which is still closed under post-composition by \SEQ) which entails\footnote{\label{note1}In Proposition~\ref{prop:back-seq} we see that the two are closely related.} being regularity preserving. However some important questions remain open, such as equivalence of \PREG transductions which is not known to be decidable.

The two classes of \REG and \PREG enjoy natural logical characterizations, namely word-to-word \MSO transductions (\MSOT) and \MSOI, respectively. The fact that \MSOT are regularity preserving is again obtained by simple formula substitution and holds for arbitrary structures. In contrast in the case of \MSOI, the only known proof is \textit{via} a translation into an automaton model called pebble transducers.
This raises a natural question for \MSOSI:

\begin{question}\label{ques:1}
    Can one obtain an automaton model corresponding to \MSOSI over finite words ?
\end{question}
A positive answer to this question would hopefully provide a proof of Conjecture~\ref{conj:mso}, since natural automata models are usually closed under post-composition by \SEQ\textsuperscript{\ref{note1}}.
While hope plays an important part in research, we have good reasons to think this is a hard problem: as mentioned above this would solve a long standing open problem on automatic structures.

It is rather clear that \PREG captures the ``right" notion of regular transductions with polynomial growth. While \MSOSI seems like a natural candidate, not enough is known about this class yet to say that it captures the ``right" notion of regular transductions with exponential growth.

Let us more humbly describe what should be, in our view, a \emph{nice} class of regular transductions with exponential growth:
this class should be characterized by different, somewhat natural\footnote{As opposed to an artificial model like the union of \PREG with \SST.}, computation models which subsume the well-behaved classes of \PREG and \SST. It should be regularity preserving and potentially\textsuperscript{\ref{note1}} have extra closure properties by pre- or post-composition with smaller classes. In this article we introduce the class of \emph{lexicographic transductions}  (\LEX) which meets all the above criteria.

\subsubsection{Contributions}

The first contribution of the article is a hardness result: showing that word-to-word \MSOSI are regularity preserving is at least as hard as showing that any automatic $\omega$-word has a decidable \MSO theory (Corollary~\ref{coro:conje}). To obtain this result we define \emph{automatic transduction} (\AT) which are naturally equivalent to \MSOSI but formulated in a way that makes the connection with automatic structures clearer. That way we obtain a one-to-one correspondence between automatic $\omega$-words and automatic transductions over a unary alphabet which define a total function (Proposition~\ref{prop:corr}).

The main contribution of the article is the introduction of a new class of transductions, called \emph{lexicographic transductions} (\LEX). We give three different characterizations of this class and show that it enjoys many nice properties, including being regularity preserving.

The first definition of \LEX is in the spirit of list functions of \cite{DBLP:conf/lics/BojanczykDK18,DBLP:polyreg}: we start with \emph{simple functions} which are recognizable transductions whose range contains words of length at most $1$ only. Then we close the class under a single type of operator called \maplex\ which works as follows: $\maplex\ f$ maps a word $u$ to the concatenation $f(u_1)f(u_2)\dots f(u_n)$ where $u_1,\dots,u_n$ are all the labellings of $u$ over some fixed and totally ordered alphabet, enumerated in lexicographic order.

Secondly, we show that this class can be expressed as a syntactic restriction of \AT, which we call lexicographic automatic transductions ($\AT_\LEX$). These two characterizations are actually syntactically equivalent but quite different in spirit.
We leverage the aforementioned correspondence between automatic $\omega$-words and automatic transduction, as well  as a result of B\'ar\'any to show that the nesting of \maplex\ operators generates a strict hierarchy of transduction classes (Corollary~\ref{coro:stricthierarch}).

Thirdly, we introduce an automaton model called \emph{nested marble transducers} (\NMT).
Nested marble transducers are quite expressive: they generalize marble transducers~\cite{DBLP:journals/actaC/EngelfrietHB99,DBLP:conf/mfcs/Doueneau-TabotF20} which are known to coincide with \SST, they also naturally generalize \PREG. Informally, a level $k$ nested marble transducer can annotate its input as a marble transducer (i.e. it drops a marble whenever moving left and lifts a marble whenever moving right), and call a level $k{-}1$ nested marble transducer to run on this annotated configuration. This call returns both an output string and a state which the top-level transducer can use to take its next transition. This passing of information from the lower levels to the higher levels is what allows
to prove strong closure properties of \NMT: we show that \NMT have regular domains, are regularity preserving, and more generally are closed under post-composition by \PREG. 

Regarding expressiveness, \LEX can be expressed by \NMT in a rather direct way (Theorem~\ref{thm:lex2marble}).
In the other way, transductions expressed in \LEX do not have such a state-passing mechanism, hence showing that 
\NMT is included in \LEX constitutes the technical heart of this article (Theorem~\ref{thm:marble2lex}). An important step consists in
showing that one can remove the state-passing mechanism in \NMT (Theorem~\ref{thm:sp-removal}). On top of being technical, we show it is computationally costly: there is an unavoidable non-elementary blow-up to transform a nested marble transducers into nested marble transducers without state-passing.

\subsubsection{Outline of the paper}
Preliminaries on languages and transductions
are given in Section~\ref{sec:prel}, we then
detail the definitions of structures and interpretations,
as well as their connection to automatic structures,
in Section~\ref{sec:msosi}. We present the class of
lexicographic transductions in Section~\ref{sec:lex}
and detail examples in Section~\ref{sec:examples}.
The model of nested marble transducers is presented
in Section~\ref{sec:nmt}, its equivalence with \LEX is proven
in Section~\ref{sec:equivalence}, and its closure properties
are detailed in
Section~\ref{sec:ppties}. Proof details
can be found in the Appendix.

\section{Word languages and transductions}
\label{sec:prel}

We let $\mathcal{P}(X)$ the powerset of any set $X$. 

\paragraph{Words and languages} Given an alphabet $\Sigma$, a
$\Sigma$-word $u$ (or just word if $\Sigma$ is clear from the context) is a sequence of letters from
$\Sigma$. We denote by $\epsilon$ the empty word, and by $|u|$ the
length of a word $u$. In particular $|\epsilon|=0$. For all integers
$n\geq 0$, we let $\Sigma^n$ (resp. $\Sigma^{\leq n}$) be the set of
words of length $n$ (resp. at most $n$). We let $\pos(u) =
\{1,\dots,|u|\}$ be the set of \emph{positions} of $u$, and for all
$i\in\pos(u)$, $u[i]\in\Sigma$ is the $i$-th letter of $u$. We write
$\Sigma^*$ for the set of words over $\Sigma$, and $\Sigma^+$ for the
set of non-empty words. A word language over $\Sigma$ is a subset of
$\Sigma^*$. \emph{In this paper, we let $\sep$ be a symbol called
  separator, assumed to be distinct from any alphabet symbol.}

\paragraph{Convolution} Let $\Sigma_1,\Sigma_2$ be two alphabets,  $\ell\in\nat$
and $u_1\in\Sigma_1^\ell,u_2\in\Sigma_2^\ell$ be two words of length $\ell$. The \emph{convolution} of $u_1$ and $u_2$ is the word
of length $\ell$ over $\Sigma_1\times \Sigma_2$ denoted $u_1\otimes
u_2$, such that for all $1\leq i\leq \ell$, $(u_1\otimes u_2)[i] =
(u_1[i],u_2[i])$.

\paragraph{Finite automata} A (non-deterministic) finite automaton (NFA) over an alphabet $\Sigma$ is denoted as a tuple $A=(Q, q_0, F, \Delta)$ where $Q$ is the set of states, $q_0$ the initial state, $F\subseteq Q$ the final states, and $\Delta\subseteq Q\times \Sigma\times Q$ the transition relation. We write $q\xrightarrow{u}_A q'$ when there exists a run of $A$ from state $q$ to state $q'$ on $u$, and denote by $L(A) = \{ u{\in}\Sigma^*\mid q_0\xrightarrow{u}_A q_f\in F\}$ the language recognized by $A$. When $A$ is a deterministic finite automaton (DFA), the transition relation is denoted by a (partial) function $\delta : Q\times \Sigma \partialf Q$.

\paragraph{Word transductions} A \emph{word transduction} (or just transduction for short) over $\Sigma,\Gamma$ two alphabets is a (partial) function $f : \Sigma^*\partialf \Gamma^*$. We denote by $\dom(f)$ its domain. Given two transductions
$f_1,f_2:\Sigma^*\partialf \Gamma^*$ with disjoint domains, we
let $f_1+f_2$ be the transduction of domain $\dom(f_1)\cup \dom(f_2)$
such that $(f_1+f_2)(u) = f_i(u)$ if $u\in \dom(f_i)$. 
Given $f : \Sigma^*\partialf \Gamma^*$, $g:\Gamma^*\partialf \Lambda^*$, we write $(g\ f) : \Sigma^*\partialf \Lambda^*$ the composition $g\circ f$. Given  $h:\Lambda^*\partialf \Delta^*$, $(h\ g\ f)$ stands for $(h\ (g\ f))$. For $u\in\Sigma^*$, we also write $(h\ g\ f\ u)$ for $(h\ g\ f)(u)$.

A transduction $f$ has \emph{exponential growth} if there exists $c\in\mathbb{N}$ such that for all $u\in\dom(f)$, $|f(u)|\leq 2^{c|u|}$ holds.
A transduction $f$ has \emph{polynomial growth} if there exists $c,k\in\mathbb{N}$ such that for all $u\in\dom(f)$, $|f(u)|\leq c|u|^k$ holds.

\begin{example}[Reverse and copy]
Let $\Sigma$ be an alphabet. The transduction $\rev:\Sigma^*\rightarrow\Sigma^*$ takes as input any word  
$u=\sigma_1\dots \sigma_n$ and outputs its reverse $\sigma_n\dots \sigma_1$,  for all
    $\sigma_i\in\Sigma$. The transduction $\textsf{copy}$ takes $u$ and returns $uu$. 
\end{example}

\begin{example}[Square]\label{ex:squareex}
    Let $\Sigma$ be an alphabet and $\underline{\Sigma} =
    \{\underline{\sigma}\mid \sigma\in\Sigma\}$. Given a word $u=\sigma_1\dots\sigma_n$ and
    a position $i\in\pos(u)$, we let $\textsf{under}_i(u) =\sigma_1\dots
    \sigma_{i{-}1}\underline{\sigma_i}\sigma_{i+1}\dots \sigma_n$.
The transduction
    $\square:\Sigma^*\rightarrow (\Sigma\cup\underline{\Sigma})^*$ is
    defined as
    $\square(u) = \textsf{under}_1(u)\dots \textsf{under}_{|u|}(u)$.
    For example $\square(abc) =
    \underline{a}bca\underline{b}cab\underline{c}$.
\end{example}

\begin{example}[Map]\label{ex:map}
Let $\Sigma$ be some alphabet and $|\not\in \Sigma$ be some
    separator symbol. Let $\Sigma_\sep = \Sigma\cup \{\sep\}$. 
Let $f : \Sigma^*\partialf\Gamma^*$. The transduction $\map\ f:\Sigma_|^*\partialf \Sigma_|^*$ takes any input word of the
    form $u = u_1\sep u_2\sep\dots \sep u_n$ where $u_i\in\Sigma^*$ for all
    $i\in\{1,\dots,n\}$, and returns $f(u_1)\sep f(u_2)\sep\dots
    \sep f(u_n)$ (if all the $f(u_i)$ are defined, otherwise $(\map\ f)(u)$ is undefined.
\end{example}

\paragraph{Sequential and rational transductions} Sequential transductions are transductions recognized by sequential transducers. A \emph{sequential transducer} over some alphabets $\Sigma$ and $\Gamma$ (not necessarily disjoint), is a pair $T = (A, \out)$ where $A = (Q,q_0,F,\delta)$ is a DFA over $\Sigma$ and $\out : \dom(\delta)\rightarrow \Gamma^*$ is a total function. We write $q\xrightarrow{u/v}_T q'$ whenever there exists a sequence of states $q_1=q, q_2,\dots, q_{n+1} = q'$ such that
$q_1\xrightarrow{u[1]}_A q_2\dots q_n\xrightarrow{u[n]}_A q_{n+1}$
where $n=|u|$, and $v = \out(q_1,u[1])\dots \out(q_n,u[n])$. The
transduction $f_T$ recognized by $T$ is defined for all $u\in L(A)$ by
$f_T(u) = v$ such that $q_0\xrightarrow{u/v} q_f\in F$. Note that
$\dom(f_T) = L(A)$. We denote by $\SEQ$ the class of sequential
transductions. Like sequential transducers, a (non-deterministic, functional)
\emph{finite state transducer}\footnote{This class is also called
  real-time finite state transducers in the literature.}
  is defined as a pair $T = (A,\out)$ but
$A$ can be non-deterministic, with the \emph{functional} restriction: for all words
$u\in L(A)$, the
outputs of all the accepting runs over $u$ are all equal. With this
restriction, $T$ recognizes a transduction $f_T$. A \emph{rational}
transduction is a transduction $f_T$ for some $T$, and we denote by
$\RAT$ the class of rational transductions~\cite{DBLP:books/ems/21/HarjuK21}.

\paragraph{Regular and polyregular transductions} The class of
regular (resp. polyregular) transductions is defined as the smallest class of
transductions which is closed under composition of transductions and
\map, and contains the sequential transductions, \textsf{copy} and \rev (resp. the sequential
transductions, \rev and
\square)~\cite{DBLP:conf/icalp/BojanczykS20,DBLP:polyreg}. We denote by \textsf{PolyReg} the class of polyregular transductions.

\section{\MSO set interpretations, properties and limitations}
\label{sec:msosi}
\subsubsection{\MSO set interpretations}

\paragraph{Signatures, formulas and structures} 
A \emph{relational signature} (or simply signature) is a set $\sign$ of symbols together with a function $\arity: \sign\rightarrow \nat$.

We consider a set of \emph{first-order variables} denoted by lower case letter $x,y,z,\ldots$ as well as a set of \emph{second-order variables} denoted by upper case letter $X,Y,Z\ldots$.
The \MSO-formulas over signature $\sign$, denoted by $\MSO [\sign] $
are given by the following grammar $\phi::=$
$$\exists x \phi \mid \exists X \phi \mid  \phi\wedge \phi \mid \neg \phi \mid X(x) \mid R(x_1,\ldots,x_r)$$


where $x,x_1,\ldots, x_r$ are first-order variables, $X$ is a second-order variable and $R\in \sign$ with $\arity(R)=r$.
We denote by $\FO[\sign]$ the formulas which don't use second-order variables.

A \emph{relational structure} $u$ over signature $\sign$ is a set $U$ called the \emph{universe} of the structure, together with, for each symbol $R\in\sign$ of arity $r$, an \emph{interpretation} denoted $R^u$ which is a subset of  $U^r$.

\paragraph{Regularity preserving} A function from $\sign$-structures to $\signout$-structures is called \emph{regularity preserving} if the inverse image of an $\MSO[\signout]$ definable set is $\MSO[\sign]$ definable. We say that a class of functions is regularity preserving if all functions in the class are.

\paragraph{Word structures} The \emph{word signature over $\Sigma$} is the tuple $S_\Sigma =
((\sigma(x))_{\sigma\in \Sigma}, \leq(x,y))$ where $\sigma(x)$ are unary
predicate symbols and $\leq(x,y)$, usually written $x\leq y$, is a
binary predicate symbol. Any word $u$ can be naturally associated with
an $S_\Sigma$-structure $\tilde{u} = (U,(\sigma^{\tilde{u}})_{\sigma\in\Sigma},\leq^{\tilde{u}})$ where
$U = \pos(u)$, $\sigma^{\tilde{u}}$ is a set of positions labeled $\sigma$, for
all $\sigma\in\Sigma$, and $\leq^{\tilde{u}}$ is the natural (linear) order on
$\pos(u)$.  We write $u$ instead of $\tilde{u}$ if it is clear from the
context that $u$ is an $S_\Sigma$-structure. A \emph{word structure} over $\Sigma$ is an
$S_\Sigma$-structure isomorphic to some $\tilde{u}$. 
Note that being a word structure is \FO definable.

\paragraph{\MSO set interpretations}
We define \MSO set interpretations as in~\cite{DBLP:journals/lmcs/ColcombetL07}.

\begin{definition}
    An \MSO set interpretation (\MSOSI) $T$ from $\sign$-structures to $\signout$-structures, is given by

    \begin{itemize}
        \item $k\in \nat\setminus \set{0}$ called the dimension,
        \item a domain formula $\phi_\dom\in \MSO[\sign]$,
        \item an output universe formula $\phi_\univ(\overline X)\in\MSO[\sign] $,
        \item for each relation symbol $R\in \signout$ of arity $r$ a formula $\phi_R(\overline{X_1},\ldots,\overline{X_r})\in \MSO[\sign]$
    \end{itemize}

    where $\overline{X},\overline{X_1},\ldots$ are $k$-tuples of variables.

    We now define the semantics of $T$ which is a partial transduction $f_T$ from $\sign$-structures to $\signout$-structures. The \emph{domain} of $f_T$ is the set of structures $u$ such that $u\models \phi_\dom$. Given such a $u$ with universe $U$, we define its image $v=f_T(u)$:

    \begin{itemize}
        \item The universe of $v$ is the set $V = \{ \overline{P}\in \mathcal P(U)^k\mid u\models
          \phi_{univ}(\overline{P})\}$,
          \item for $R\in \signout$ of arity $r$, $R^{v} = \{ (\overline{P_1},\ldots,\overline{P_r})\in V^k\mid u\models
            \phi_{R}(\overline{P_1},\ldots,\overline{P_r})\}$.
    \end{itemize}

    We say that an \MSOSI is (finite) word-to-word if its domain and co-domain only contain word structures over some respective alphabets $\Sigma,\Gamma$.
\end{definition}

\begin{remark}
Given an \MSOSI from $\sign_\Sigma$-structures to $\sign_\Gamma$-structures, one can restrict the domain formula to word structures whose image are word structures. This is because being a linear order is \FO-definable.
\end{remark}

\paragraph{\MSO transductions, \MSO and \FO interpretations}
An \MSO interpretation (\MSOI) is an \MSOSI whose free set variables are restricted to be singleton sets. This can be syntactically enforced in the universe formula $\phi_{univ}$, as being a singleton is an \MSO definable property. Equivalently, \MSOI are defined as \MSOSI but instead the free variables are first-order. Note that transductions realized by \MSOI have only polynomial growth. An \FO interpretation (\FOI) is an \MSOI whose formulas are all \FO-formulas.
Finally an \MSO transduction (\MSOT) is (roughly\footnote{Classically, one adds a bounded number of copies of the input to get the full class of \MSOT.} speaking) an \MSO interpretation of dimension $1$.
\todo{Mention that MSOT captures exactly regular transductions?}

The following theorem is at the core of the theory of set interpretations, and automatic structures. It holds in all generality, and furthermore the compositions can be done by simple formula substitutions.
\begin{theorem}\cite[Proposition~2.4]{DBLP:journals/lmcs/ColcombetL07}
    \MSO set interpretations are effectively closed under pre-composition by \MSOT and post-composition by \FOI.
\end{theorem}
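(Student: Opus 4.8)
The plan is to establish the two closure properties separately, relying in both cases on the fact that \MSOSI are defined by \MSO formulas over the input structure, so that composition amounts to substituting one family of formulas into another.

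\emph{Post-composition by \FOI.} Let $T$ be an \MSOSI of dimension $k$ from $\sign$-structures to $\signout$-structures, given by formulas $\phi_\dom$, $\phi_\univ(\overline X)$, and $\phi_R(\overline{X_1},\dots,\overline{X_r})$ for $R\in\signout$; and let $T'$ be an \FOI of dimension $m$ from $\signout$-structures to $\signout'$-structures, given by $\psi_\dom$, $\psi_\univ(y_1,\dots,y_m)$ and $\psi_{R'}$ for $R'\in\signout'$. The output of $T$ on input $u$ has universe $V\subseteq \mathcal P(U)^k$, so an element of the universe of $T'\circ T$ on $u$ should be coded by an $m$-tuple of elements of $V$, i.e. by $km$ set variables over $U$. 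The key step is the standard relativisation/substitution argument: in each first-order formula $\psi$ of $T'$, replace every first-order quantifier $\exists y\,\theta$ by $\exists \overline Y\,(\phi_\univ(\overline Y)\wedge \widehat\theta)$ where $\overline Y$ is a fresh $k$-tuple of set variables, and replace every atom $R'$-atom and every use of a relation of $\signout$ by the corresponding formula $\phi_R$ of $T$ (with the appropriate $k$-tuples plugged in); the domain formula of the composition is $\phi_\dom \wedge \widehat{\psi_\dom}$, where $\widehat{\cdot}$ denotes this substitution. Since $\psi$ is first-order, only first-order quantifiers need to be relativised, and each becomes a (second-order) quantifier block of length $k$; the result is a legitimate \MSO formula over $\sign$. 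One checks by induction on the structure of $\psi$ that for all $u\models\phi_\dom$ and all tuples $\overline P_1,\dots$ in $V$, $f_T(u)\models\psi(\overline P_1,\dots)$ iff $u\models\widehat\psi(\overline P_1,\dots)$; applying this to $\psi_\univ$, $\psi_\dom$ and the $\psi_{R'}$ yields exactly an \MSOSI of dimension $km$ computing $f_{T'}\circ f_T$.

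\emph{Pre-composition by \MSOT.} Let now $S$ be an \MSOT from $\sign_0$-structures to $\sign$-structures, of dimension $1$ (the bounded-copies case is handled by a routine disjoint-union bookkeeping, multiplying the dimension by the number of copies), given by a domain formula $\chi_\dom$, a universe formula $\chi_\univ(x)$ and formulas $\chi_R(x_1,\dots,x_r)$ for $R\in\sign$. Now I want $T\circ S$ as an \MSOSI from $\sign_0$ to $\signout$. The universe of $S(w)$ is a subset of the universe of $w$, so a $k$-tuple of \emph{sets} of elements of $S(w)$ is naturally a $k$-tuple of sets of elements of $w$; hence the dimension stays $k$. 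The key step is again substitution, but now dualised: in each \MSO formula $\phi$ of $T$ we relativise every first-order quantifier $\exists x\,\theta$ to $\exists x\,(\chi_\univ(x)\wedge\widehat\theta)$ and every second-order quantifier $\exists X\,\theta$ to $\exists X\,(X\subseteq \chi_\univ \wedge \widehat\theta)$ (the inclusion being \MSO-expressible), and replace each $\sign$-atom $R(x_1,\dots,x_r)$ by $\chi_R(x_1,\dots,x_r)$; free set variables $\overline X$ of $\phi_\univ,\phi_R$ are simply required to range inside $\chi_\univ$. Set $\phi_\dom^{new} = \chi_\dom\wedge\widehat{\phi_\dom}$. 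An induction on $\phi$ shows that for all $w\models\chi_\dom$, writing $u=S(w)$, and all set tuples over the universe of $u$, $u\models\phi(\dots)$ iff $w\models\widehat\phi(\dots)$; instantiating with $\phi_\dom,\phi_\univ$ and the $\phi_R$ gives the desired \MSOSI for $f_T\circ f_S$.

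\emph{Effectiveness and the main obstacle.} Effectiveness is immediate from the fact that all the above substitutions are syntactic operations on formulas, producing the new \MSOSI uniformly from the descriptions of the two input transductions. I expect the only genuinely delicate point to be the bookkeeping of free versus bound variables and of dimensions: one must ensure the fresh tuples introduced by relativisation do not clash with the free tuples $\overline X_1,\dots,\overline X_r$ that encode universe elements, and that in the \MSOT case the single output copy (or the bounded family of copies) is correctly threaded through all formulas so that atoms of $\sign$ get the matching $\chi_R$. None of this is conceptually hard — it is the classical interpretation-composition lemma — but it is the part where a careless index can break the semantic correspondence, so the induction invariant must be stated with the variable correspondence made fully explicit. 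Since the excerpt attributes the statement to \cite[Proposition~2.4]{DBLP:journals/lmcs/ColcombetL07} and only needs it ``in all generality'' with compositions realised by formula substitution, the sketch above — relativisation plus atom substitution, together with a straightforward induction — is exactly what is required.
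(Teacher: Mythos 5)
Your proof is correct and follows exactly the route the paper indicates: the statement is only cited there (from Colcombet--L\"oding, Proposition~2.4) with the remark that both compositions are obtained by simple formula substitutions, which is precisely your relativisation-plus-atom-substitution argument, including the key point that post-composition works for \FOI because each first-order quantifier over the output lifts to a $k$-tuple of set quantifiers over the input. No gaps worth flagging beyond the bookkeeping you already identify.
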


\subsubsection{Properties of word-to-word set interpretations}

\paragraph{Exponential versus polynomial growth}
There is a dichotomy for the growth of set interpretations defined over words, deeply connected to the similar dichotomy for the ambiguity of automata, between \emph{exponential growth} and \emph{polynomial growth}. Moreover for polynomial growth transductions, the level of growth exactly coincides with the minimum dimension of an \MSOSI defining the transduction.
The result can be obtained in the more general case of trees.
\begin{theorem}\cite[Theorem~1.5]{GLN2025},\cite[Theorem~2.3]{Bojanczyk23a}\label{thm:growth}
    A set interpretation over words has growth either $2^{\Theta(n)}$, or $\Theta(n^k)$ for some $k\in \nat$, and this can be computed in \ptime.
    In the latter case\footnote{Note that to get this tight correspondence, we need to allow a bounded number of copies of the input, see \cite[Definition~4.3]{GLN2025}.}, one can compute an equivalent \MSOI of dimension $k$.
\end{theorem}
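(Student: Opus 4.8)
The plan is to reduce the growth of a word-to-word set interpretation to the \emph{ambiguity} of a finite automaton, invoke the classical dichotomy for the ambiguity of NFAs, and then, in the polynomial case, use the structure theory of polynomially ambiguous automata to re-encode the output structure with fewer dimensions. Concretely, fix an \MSOSI\ $T$ of dimension $d$ with domain formula $\phi_\dom$ and output-universe formula $\phi_{\univ}(X_1,\dots,X_d)$. For an input word $u$ with $\pos(u)$ of size $n$, encode a $d$-tuple of subsets of $\pos(u)$ as a word $v\in(\{0,1\}^d)^n$; then, when $u\models\phi_\dom$, the size of the output universe is the number of $v$ with $u\otimes v\models\phi_{\univ}$. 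Turn $\phi_{\univ}$ into a DFA $A_{\univ}$ over $\Sigma\times\{0,1\}^d$ and $\phi_\dom$ into a DFA $A_\dom$ over $\Sigma$ (Büchi--Elgot), and let $N$ be the NFA over $\Sigma$ obtained from $A_{\univ}\times A_\dom$ by existentially projecting the $\{0,1\}^d$-component of the transitions. Since $A_{\univ}$ and $A_\dom$ are deterministic, on any $u$ the accepting runs of $N$ are in bijection with the tuples $v$ such that $u\otimes v\models\phi_{\univ}$ when $u\models\phi_\dom$, and there are none otherwise; so, after trimming $N$ to $N'$, the growth function $n\mapsto\max_{|u|=n}|f_T(u)|$ coincides with the ambiguity function of $N'$ (the maximal number of accepting runs on inputs of length $n$).

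Next I would invoke the theorem of Weber and Seidl on the ambiguity of finite automata: the ambiguity function of a trimmed NFA is either bounded by a polynomial, in which case it is $\Theta(n^k)$ with $k$ the largest integer for which the combinatorial criterion $\mathrm{IDA}_k$ holds, or it is $2^{\Theta(n)}$, which happens exactly when the criterion $\mathrm{EDA}$ holds; moreover $\mathrm{EDA}$ and the $\mathrm{IDA}_k$ are decidable in polynomial time in $N'$. Combined with the a priori bound $|f_T(u)|\le 2^{d|u|}$ (which rules out anything faster than exponential), this yields the stated dichotomy $2^{\Theta(n)}$ versus $\Theta(n^k)$ and its computability in \ptime\ once $T$ is presented by automata (if $T$ is presented by formulas one first pays the in general non-elementary cost of building $A_{\univ}$, which is why the \ptime\ bound is meant relative to the automata presentation).

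In the polynomial case, where $N'$ is polynomially ambiguous of degree $k$, I would appeal to the structure theory of polynomially ambiguous automata to get a ``chain'' decomposition: every accepting run of $N'$ on $u$ is uniquely determined by a nondecreasing tuple of at most $k$ positions of $u$ (its ``level changes'') together with a bounded amount of control information (the visited strongly connected components and the states at the boundary transitions). This decomposition is uniform enough to be rendered in \MSO\ over $u$: one builds an \MSO\ formula $\Psi(x_1,\dots,x_k)$, with a bounded number of copies of the input absorbing the control information, whose satisfying tuples are in \MSO-definable bijection with the accepting runs of $N'$ on $u$, the inverse being \MSO-definable as well once a run is encoded as a tuple of position sets. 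Composing with the \MSO-definable correspondence ``accepting run of $N'$ on $u$ $\leftrightarrow$ tuple $v$'' from the first step, the output universe of $T$ at $u$ is, for $u\in\dom(T)$, in \MSO-definable bijection with $\{(x_1,\dots,x_k):\Psi\}$. One then obtains the desired \MSOI\ $T'$ of dimension $k$ by taking $\Psi$ as its universe formula and, for each output relation $R$, substituting this bijection into $\phi_R$ (legal, since \MSO\ is closed under the relevant substitutions and the tuples $v$ that get quantified over are second-order objects). By construction $f_{T'}=f_T$, and the output structure of $T'$ is isomorphic to that of $T$, hence a word structure in the word-to-word case.

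The main obstacle is precisely the last step: one must not merely know that each accepting run of a degree-$k$ polynomially ambiguous automaton is pinned down by $k$ positions plus bounded data, but make both the map $\text{run}\mapsto(\text{positions},\text{data})$ and its inverse \MSO-definable over the input word, so that the output relations can be transported by formula substitution. Establishing this \MSO-definable chain decomposition — and, for the tree version of the statement, its analogue for tree automata of polynomial ambiguity — is the technical heart of the argument; the reduction to ambiguity and the appeal to Weber--Seidl are comparatively routine.
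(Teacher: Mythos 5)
This theorem is imported by the paper from \cite[Theorem~1.5]{GLN2025} and \cite[Theorem~2.3]{Bojanczyk23a}; the paper itself contains no proof to compare against, so your proposal has to be judged on its own. Your first half follows what is indeed the standard route, which the paper itself hints at (``deeply connected to the similar dichotomy for the ambiguity of automata''): translate $\phi_{\univ}$ into a deterministic automaton over $\Sigma\times\{0,1\}^d$, project away the advice component, and apply the Weber--Seidl dichotomy ($\mathrm{EDA}$ versus $\mathrm{IDA}_k$, both \ptime-checkable). One point needs care: after projection, accepting runs viewed as \emph{state sequences} are not in bijection with the tuples $v$ (a one-state DFA accepting everything collapses $2^{dn}$ tuples onto a single state sequence). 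You must keep the projected-away letter as part of the identity of each transition, i.e.\ count runs as sequences of transitions of the \emph{original} automaton; with that convention the bijection holds because $A_{\univ}$ is deterministic. Your caveat that \ptime\ is relative to an automata presentation is correct and matches the paper's \AT\ viewpoint.

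The genuine gap is the second half, and you have correctly located it yourself: asserting that every accepting run of a degree-$k$ polynomially ambiguous automaton is ``pinned down by $k$ positions plus bounded data'' in a way that is \MSO-definable \emph{in both directions} is precisely the content of the cited theorems, not a routine consequence of Weber--Seidl. The $\mathrm{IDA}_k$ criterion gives a combinatorial witness of degree-$k$ ambiguity but does not by itself produce a uniform, \MSO-definable parametrization of all runs by $k$-tuples of positions; constructing one (and verifying that the induced re-indexing of the output universe can be substituted into each $\phi_R$ while keeping the output a word structure, which is where the footnote about allowing a bounded number of copies of the input comes from) is the technical heart of \cite{GLN2025}. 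As written, your argument for the ``equivalent \MSOI\ of dimension $k$'' clause is a plan rather than a proof; to close it you would either have to carry out the chain decomposition in detail or simply cite the result, as the paper does.
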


Quite a lot is known about word-to-word set interpretations with polynomial growth, which are called polyregular transductions and enjoy many different characterizations~\cite[Theorem~7]{BKL19}.

\paragraph{Regularity preserving}

An open question on word-to-word \MSOSI is whether they are regularity preserving. This can actually be formulated in terms of closure properties.
\begin{restatable}{proposition}{backseq}
\label{prop:back-seq}
    The following are equivalent:
    \begin{itemize}
        \item Word-to-word \MSOSI are regularity preserving,
        \item The class of word-to-word \MSOSI is closed under post-composition with transductions computed by Mealy machines,
        \item The class of word-to-word \MSOSI is closed under post-composition with polyregular transductions.
    \end{itemize}
\end{restatable}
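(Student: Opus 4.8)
The plan is to prove the cycle of implications
$(\text{regularity preserving}) \Rightarrow (\text{closure under Mealy post-composition}) \Rightarrow (\text{closure under \PREG post-composition}) \Rightarrow (\text{regularity preserving})$,
although the most natural route is actually $(1)\Leftrightarrow(2)$ with $(3)\Rightarrow(1)$ and $(2)\Rightarrow(3)$.

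\textbf{From regularity preserving to Mealy post-composition.} Let $f$ be a word-to-word \MSOSI from $\Sigma^*$ to $\Gamma^*$ and let $g:\Gamma^*\to\Delta^*$ be computed by a Mealy machine $M=(P,p_0,\delta,\out)$, so $g$ is length-preserving and letter-to-letter. I want to exhibit $g\circ f$ as an \MSOSI. The idea is that the output structure $v=f(u)$ already has a universe $V\subseteq\mathcal P(\pos(u))^k$ with a definable linear order; the composed transduction has the \emph{same} universe (since $g$ is length-preserving), and I only need to redefine the letter predicates. The letter of $g(v)$ at a position $p$ is $\out(q,v[p])$ where $q$ is the state of $M$ after reading $v[1]\cdots v[p-1]$. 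Since $M$ is a DFA, "the state of $M$ at position $p$ is $q$" is expressible by an \MSO formula over the \emph{output} structure $v$ (standard run-guessing: existentially quantify a labeling of positions by states, check consistency with $\delta$ and initiality, read off the state at $p$). Here is where regularity preservation is used: this is an \MSO property of $v$, i.e. it is of the form $v\models\psi_q(p)$ for some $\psi_q\in\MSO[S_\Gamma]$ with one free first-order variable. Regularity preservation (applied to the unary-predicate-enriched output signature, or more precisely the \FO/\MSO backward translation along $f$) lets me pull $\psi_q$ back through $f$ to an \MSO formula over $u$ with $k$ free set variables; combined with the already-available formulas $\phi_\sigma$ of $f$, I can write, for each $d\in\Delta$, the new letter formula $\phi'_d(\overline X) \equiv \bigvee_{\out(q,\sigma)=d}\big(\phi_\sigma(\overline X)\wedge (\text{``state at }\overline X\text{ is }q\text{''})\big)$. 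The order formula and universe formula are unchanged. One subtlety: regularity preservation as stated concerns inverse images of \MSO-definable \emph{sets} (sentences), not formulas with free variables; but the free-variable version follows by the usual trick of encoding a first-order free variable by a fresh unary predicate and applying the sentence-level statement over the expanded signature, which the backward translation of \MSOSI handles (it pulls back relations symbol by symbol).

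\textbf{From \PREG post-composition to regularity preserving.} Given $f$ an \MSOSI and $L\subseteq\Gamma^*$ regular, I want $f^{-1}(L)$ regular. Pick a Mealy-machine-like sequential transduction, or simply observe that the characteristic function "is $w\in L$" can be realized as a \PREG transduction $g_L$ mapping $L$ to a fixed word $a$ and $\Gamma^*\setminus L$ to a fixed word $b$ (this is sequential, hence polyregular). Then $g_L\circ f$ is an \MSOSI by assumption; \MSOSI have \MSO-definable domains and, since the codomain contains only the two structures for $a$ and $b$, the preimage of $\{a\}$ under $g_L\circ f$ — which is exactly $f^{-1}(L)$ — is \MSO-definable over $\Sigma$, hence regular. (That \MSOSI have \MSO-definable domains is immediate from the definition via $\phi_\dom$.)

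\textbf{From Mealy post-composition to \PREG post-composition.} This is the step I expect to be the main obstacle, since \PREG is much richer than Mealy machines (it has polynomial blow-up, not length preservation). The plan is to use the known structural characterization of \PREG: by Theorem~\ref{thm:growth} and \cite{BKL19}, every polyregular transduction is an \MSOI (of some dimension), and \MSOI factor through basic building blocks. More usefully, \PREG is generated from sequential transductions, \rev, and \square\ under composition and \map. So it suffices to show the class of word-to-word \MSOSI is closed under post-composition by each generator, given closure under post-composition by Mealy machines. Post-composition by \FOI (hence by \MSOI, hence in particular by \square) already holds by \cite[Proposition~2.4]{DBLP:journals/lmcs/ColcombetL07} quoted above, since \square\ is an \MSOI. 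Closure under post-composition by \rev\ is easy directly on the interpretation (reverse the order formula). Closure under post-composition by a sequential transduction $T=(A,\out)$ with $A=(Q,q_0,F,\delta)$: this generalizes the Mealy case but now $\out$ outputs words, possibly empty, so the output is no longer length-preserving — one must build a new \MSOSI of dimension $k+1$ (or increase $k$) whose universe pairs each output position $\overline P$ of $f$ with a position inside the word $\out(q,v[\overline P])$, which has bounded length, hence can be encoded by a bounded number of extra "marker" coordinates; the state $q$ at position $\overline P$ is again \MSO-definable over the output structure and pulled back through $f$ using regularity preservation — but we have \emph{not yet} established regularity preservation at this point in the cycle, so instead I should note that post-composition by a sequential transduction can itself be decomposed as: apply the Mealy-like state-annotation (which is length-preserving, hence covered by the Mealy hypothesis), then apply a fixed \emph{relabeling-and-bounded-duplication} map which is an \MSOT, and \MSOSI are closed under post-composition by \MSOT? — no: \cite[Proposition~2.4]{DBLP:journals/lmcs/ColcombetL07} gives closure under \emph{pre}-composition by \MSOT, not post. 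So the cleaner argument is to prove directly, from the Mealy hypothesis plus the trivial closure under post-composition by \rev\ and by \FOI, that \MSOSI are closed under post-composition by sequential transductions: decompose a sequential transduction as a length-preserving state-labeling Mealy machine followed by an \FOI-computable "expand each letter to its (bounded-length) output word" map — the expansion map is an \FOI of bounded dimension, so this works. Finally, closure under \map\ of an already-handled class requires the standard separator-bookkeeping argument and is where one genuinely uses that \MSOSI can quantify set-variables to locate the $i$-th block; I would carry this out by a direct construction on interpretations. Assembling: \MSOSI closed under post-composition by the generating set of \PREG and (from step 1) by Mealy machines, and \PREG being the closure of that generating set under composition and \map, gives closure under all of \PREG by induction on the construction of a \PREG transduction, taking care that the intermediate codomains are word structures (guaranteed by the Remark after the \MSOSI definition). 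The delicate point throughout is that "the state of a DFA at a given position" must be read off the \emph{output} structure and transported back to the \emph{input} via the backward-translation machinery — and making that transport legitimate without circularly assuming regularity preservation is exactly why one routes through the length-preserving Mealy case first.
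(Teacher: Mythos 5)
Your directions $(1)\Rightarrow(2)$ and $(3)\Rightarrow(1)$ are essentially sound and close in spirit to the paper's. Your device of marking the designated output position and pulling the resulting regular language back through an auxiliary \MSOSI is exactly the role played in the paper by the transduction $f_{\text{prefix}}$, which maps $u(\overline P)$ to the prefix of $f(u)$ strictly before $\overline P$ and then pulls back the regular languages $L_p$ of words reaching state $p$ of the Mealy machine; you leave this auxiliary interpretation implicit, but the idea is the same. Your collapse-to-$\{a,b\}$ argument for $(3)\Rightarrow(1)$ likewise matches the paper's use of the \FO backward translation property (the paper instead proves $(2)\Rightarrow(1)$ with a state-annotating Mealy machine and the \FO-definable language ``the last letter carries a final state'').

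The genuine gap is in $(2)\Rightarrow(3)$. By routing through the generator presentation of \PREG (sequential transductions, \rev, \square, closed under composition and \map), you oblige yourself to prove: if the class of word-to-word \MSOSI is closed under post-composition by $g$, then it is closed under post-composition by $\map\ g$. You explicitly defer this to ``a direct construction on interpretations,'' but it is the hardest step of your route and it is not clear it goes through: the blocks $u_i$ of $f(u)=u_1\sep\cdots\sep u_n$ live inside the \emph{output} structure, whose positions are already $k$-tuples of subsets of $\pos(u)$, and one must simulate $g$ relativized to the $i$-th block uniformly in $i$ within a single \MSOSI over $u$ --- the inductive fact ``$g\circ f'$ is an \MSOSI for every \MSOSI $f'$'' gives nothing block-uniform. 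The paper sidesteps \map entirely via a different decomposition: by \cite[Lemma~9]{BKL19} every polyregular transduction factors as an \FOI after a letter-to-letter rational transduction, and rational transductions factor into sequential and co-sequential parts, the co-sequential ones being $\rev\circ(\text{sequential})\circ\rev$ with $\rev$ an \FOI; everything then reduces to exactly the two closures already in hand, namely post-composition by Mealy machines (hypothesis $(2)$) and by \FOI (Colcombet--L\"oding). You should replace the generator argument by this decomposition. A smaller slip: your parenthetical ``post-composition by \FOI (hence by \MSOI\ldots)'' is backwards, since closure under the smaller class \FOI cannot yield closure under \MSOI (which is essentially statement $(3)$ itself); what saves the step is only that \square\ is itself an \FOI.
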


\subsubsection{Automatic transductions}
We describe an automata-based presentation of \MSOSI, which we call \emph{automatic transductions}. This presentation has two main advantages: firstly it is more amenable to efficient processing, as it is based on automata instead of \MSO. Secondly it makes the connection between automatic structures and set interpretations more obvious.

 \begin{definition}
 An automatic transduction (\AT for short) from $\Sigma^*$ to $\signout$-structures is described as a tuple $T=(\Sigma,B,\aut_\dom,\aut_\univ, (\aut_R)_{R\in \sign})$
 where:
 \begin{itemize}
 \item $B$ is a finite alphabet describing a work alphabet
 \item $\aut_\dom$ is an automaton over $\Sigma$ recognizing the domain of the transduction,
 \item $ A_\univ$ is an automaton over $\Sigma\times B$ describing the possible configurations,
 \item for each $R\in \sign$ of arity $r$, $\aut_R $ is an automaton over $\Sigma\times B^r$ describing tuples of the relation $R$.
 \end{itemize}

 \noindent\emph{Semantics} Given a word $u\in \Sigma^*$, the output $\signout$-structure $v=f_T(u)$ is defined, whenever $u\in L(\aut_\dom)$, as follows:
 \begin{itemize}
     \item its universe is the set $V=\set{x\in B^*|\ u\otimes x\in L(\aut_\univ)}$,
     \item a predicate symbol $R\in \signout$ of arity $r$ is interpreted as:
     
     $R^v=\set{(x_1,\ldots, x_r)\in V^r|u\otimes x_1\otimes \ldots \otimes x_r\in L(\aut_R)}$.
 \end{itemize}

  \end{definition}

\begin{remark}~

\begin{itemize}
    \item Automatic transductions are essentially identical to \MSOSI, except restricted to input word structures, where one can leverage the classical equivalence between \MSO and automata.

    \item Automatic transductions can be naturally generalized to work over input structures such as $\omega$-words, trees and infinite trees, giving rise the notions of $\omega$-automatic, tree-automatic and $\omega$-tree-automatic transductions.
    Note that an $\omega$-automatic structure is precisely an $\omega$-automatic transduction whose domain is a single infinite word $a^\omega$.
    \end{itemize}
\end{remark}

\subsubsection{Transduction/structure correspondence}
\label{subsec:msosi}
Here we give a rather natural correspondence between word-to-word automatic transductions and automatic $\omega$-words.

We use definitions inspired from \cite{DBLP:journals/ita/Barany08}. We extend the convolution operation to words of different lengths by adding a padding symbol $\pad$ to the right of the shortest word until the lengths match.

\begin{definition}

An \emph{automatic structure} over the signature $\signout$ is given by a tuple $S=(B,\aut_\univ,(\aut_R)_{R\in  \sign})$, $\aut_\univ$ is an automaton over alphabet $B$ and for $R\in \sign$ of arity $r$, $\aut_{R}$ is an automaton over the alphabet $(B{\cup}\set{\pad})^r$. The structure $u$ associated with $S$ has universe $U=L(\aut)$, the interpretation of $R\in \sign$ is the set $R^u=\set{(x_1,\ldots,x_r)\in V^r|\ x_1\otimes \cdots \otimes x_r\in L_R}$.
An automatic structure which is an $\omega$-word structure over some alphabet is called an automatic $\omega$-word.
\end{definition}

\paragraph{Length monotonous} An automatic presentation of an $\omega$-word is \emph{length monotonous} if shorter words always appear before longer words with respect to the linear order. It turns out that any automatic $\omega$-word can be presented in a length monotonous way: this is done by adding an extra padding letter $\sharp$, and padding any word $v$ with enough symbols so that it is longer than any word appearing before in the order. The proof of this is rather simple and left to the reader. In the following, we shall assume that all automatic presentations of $\omega$-words are length monotonous.

Given a total transduction over a unary alphabet $f:a^*\rightarrow \Sigma^*$, we define its \emph{product} $\Pi f=f(\varepsilon)\cdot f(a)\cdot f(aa)\cdots$. Note that $\Pi f$ may be a finite word.

\begin{proposition}\label{prop:corr}
    There is a one-to-one correspondence between length monotonous automatic presentations of $\omega$-words and total automatic transductions from finite words over a unary alphabet to finite words. Moreover, when an automatic $\omega$-word $w$ corresponds to a function $f$, then $w=\Pi f$.
\end{proposition}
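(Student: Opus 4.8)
The plan is to exhibit two maps between the two sets and show they are mutually inverse. Given a length monotonous automatic presentation $S=(B,\aut_\univ,(\aut_R)_{R\in\signout})$ of an $\omega$-word $w$ over alphabet $\Sigma$ (so $\signout$ consists of the unary predicates $(\sigma)_{\sigma\in\Sigma}$ and a binary order $\leq$), I would build an automatic transduction $T$ from $a^*$ to $\Sigma^*$ as follows. The work alphabet of $T$ will be essentially $B$ (after fixing the length monotonous padding convention), the domain automaton accepts all of $a^*$, and $\aut_\univ^T$ is designed so that on input $a^n\otimes x$ it accepts iff $x$ is the $n$-th element of $U=L(\aut_\univ)$ in the $\leq$-order; similarly $\aut_\sigma^T$ on $a^n\otimes x$ accepts iff $x\in\sigma^w$ and $x$ is the $n$-th element, and $\aut_\leq^T$ compares indices. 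The point is that a unary input $a^n$ just encodes the natural number $n$, so selecting the $n$-th element of an automatic linear order is itself an automatic operation — here I would invoke that $\leq$ is (isomorphic to) $(\nat,\leq)$, hence every element has finitely many predecessors, and "being the $n$-th element" is expressible by an \MSO formula with one free first-order-style counting parameter encoded in unary, hence recognizable by a finite automaton reading $a^n\otimes x$. Conversely, given a total automatic transduction $f:a^*\to\Sigma^*$, I set $w=\Pi f=f(\varepsilon)f(a)f(aa)\cdots$ and must produce an automatic presentation of it; the universe elements will be pairs (position index $n$, position inside $f(a^n)$) suitably encoded as convolutions using the existing automata of $f$, with the order being lexicographic on (n, inner position) — and this is automatic because the automata defining $f$ already let one recognize, from $a^n\otimes y$, that $y$ names a position of $f(a^n)$, together with its letter.

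The key steps, in order: (1) fix notation and make precise the length monotonous normal form so that for a fixed $n$ the set of universe words of $S$ has a well-defined enumeration and the "index" of a word is recoverable by an automaton; (2) construct $T$ from $S$ and verify $f_T$ is total on $a^*$ and that $\Pi f_T = w$, i.e. the concatenation of the images equals the original $\omega$-word — this amounts to checking that the $n$-th letter of $w$ (in the $\leq$-order of $S$) coincides with the $n$-th letter of $f_T(\varepsilon)f_T(a)\cdots$; (3) construct $S'$ from an arbitrary total automatic $f$, using a product/lexicographic construction on indices, and verify $S'$ presents an $\omega$-word structure (in particular that its order is a linear order of type $\omega$, which uses totality of $f$ and that each $f(a^n)$ is finite); (4) check the two constructions are inverse to each other, at least up to the natural notion of equality for automatic presentations (isomorphism of presented structures, or literal equality after normal form) and up to the obvious identification that makes the "one-to-one correspondence" statement meaningful; (5) conclude, and observe that the "moreover" clause $w=\Pi f$ is exactly what step (2) established.

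The main obstacle I expect is step (3), and within it the bookkeeping of the lexicographic order on pairs $(n,\text{inner position})$ and proving it is automatic: one has to encode the unbounded index $n$ in a way compatible with the convolution format of the target automata, and then show that comparing two such encoded pairs — first comparing the $n$-components, then, when equal, comparing inner positions of $f(a^n)$ — is recognizable by a finite automaton reading a single convolution of the two codes. A secondary subtlety is handling the degenerate/boundary cases: when some $f(a^n)=\varepsilon$ (so $\Pi f$ can even be a finite word, as the statement warns), and symmetrically when the automatic $\omega$-word's universe restricted to a fixed index $n$ is empty — the constructions and the claimed bijection must be set up so these cases are consistent on both sides. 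Once the encoding of indices and the automaticity of the lexicographic comparison are pinned down, the verifications in steps (2) and (4) are routine: they reduce to unwinding the semantics of automatic transductions and automatic structures and matching positions letter by letter.
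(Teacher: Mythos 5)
There is a genuine gap in your forward direction, and it is fatal to the stated bijection. You propose that on input $a^n$ the transduction should output the \emph{single} $n$-th letter of $w$, by building an automaton that accepts $a^n\otimes x$ iff $x$ is the $n$-th element of the universe in the $\leq$-order. This relation is in general \emph{not} recognizable by a finite automaton: the index of an element $x$ (its number of $\leq$-predecessors) can be exponential in $|x|$, and relating a unary counter $a^n$ to such an index is a counting property that automatic presentations do not support (e.g.\ for universe $\{0,1\}^*$ ordered length-lexicographically, the relation amounts to $\{(a^n,\mathrm{bin}(n))\}$, which a pumping argument shows is not regular under padded convolution). Your appeal to ``expressible by an \MSO formula'' does not help either: preservation of \MSO, and in particular of this kind of counting, is exactly what automatic structures lack --- this is the point of the surrounding discussion in the paper. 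A secondary problem is that even if the forward map existed, it would only produce transductions whose outputs are single letters, while your backward map accepts arbitrary total automatic transductions, so the two maps cannot be mutually inverse; the ``up to the obvious identification'' in your step (4) is papering over a real mismatch.

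The paper's proof avoids all of this by taking the correspondence to be essentially syntactic and \emph{block-based}: $f(a^n)$ is the finite word formed by \emph{all} universe elements of length exactly $n$ (these are precisely the $x$ with $a^n\otimes x\in L(\aut_\univ)$, since unpadded convolution forces $|x|=n$), ordered by the restriction of $\aut_\leq$ to equal-length pairs. One simply reuses $\aut_\univ$ and the $\aut_\sigma$ unchanged and deletes the padded transitions of $\aut_\leq$; length monotonicity is exactly the hypothesis guaranteeing that concatenating these blocks in order of increasing $n$ reproduces $w$, i.e.\ $w=\Pi f$. Your backward direction is essentially the paper's (the unary index $n$ is redundant since it equals $|x|$, and the order is the length-monotonous extension of $\aut_\leq$), and with the forward direction corrected as above the two constructions are literally inverse at the level of presentations, which is what the ``one-to-one correspondence'' means here.
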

\begin{proof}
    Let $S=(\aut_\univ,(\aut_\sigma)_{\sigma\in \Sigma},\aut_{\leq})$ be a length monotonous automatic presentation of a word $w$.
    Since $S$ is length monotonous, we only care about the part of $\aut_{\leq}$ which reads letters over $B\times B$, and we define $\aut_{\leq}'$ the automaton obtained by removing transitions which use a $\pad$ padding symbol.

    We define the automatic transduction over $\set{a}$, $T=(\top,\aut_\univ,(\aut_\sigma)_{\sigma\in \Sigma},\aut_{\leq}')$ realizing a transduction $f$. By construction, $w=\Pi f$.

    Conversely, let $T=(\top,\aut_\univ,(\aut_\sigma)_{\sigma\in \Sigma},\aut_{\leq})$ realize a total transduction $f:\set{a}\rightarrow \Sigma^*$. If $\aut_{\leq}$ is over $B\times B$, we extend it to an automaton $\aut_{\leq}$ in a length monotonous way to convolutions of words of possibly different lengths.
    Again by construction the automatic $\omega$-word $w$ presented by $S=(\aut_\univ,(\aut_\sigma)_{\sigma\in \Sigma},\aut_{\leq}')$ satisfies $w=\Pi f$.

    The two constructions are indeed inverse of each other.
\end{proof}

We can now prove that if Conjecture~\ref{conj:mso} holds then Conjecture~\ref{conj:automso} also holds.

\begin{corollary}\label{coro:conje}
    If word-to-word \MSOSI are regularity preserving, then automatic $\omega$-words have a decidable \MSO theory\footnote{One could actually prove the stronger implication that any automatic $\omega$-word has a canonical presentation, as in \cite{DBLP:journals/ita/Barany08}.}.
\end{corollary}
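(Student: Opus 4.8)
The plan is to combine Proposition~\ref{prop:corr} with the hypothesis that word-to-word \MSOSI are regularity preserving, together with the fact that \MSOSI have the \FO backward translation property. Let me sketch the argument.

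\textbf{Setup.} Let $w$ be an automatic $\omega$-word over alphabet $\Sigma$, given by a length monotonous automatic presentation $S$. By Proposition~\ref{prop:corr}, there is a total automatic transduction $f : \{a\}^* \to \Sigma^*$ from finite words over the unary alphabet to finite words with $w = \Pi f$. Since automatic transductions are just the automata-based presentation of \MSOSI restricted to word inputs, $f$ is (effectively) realized by a word-to-word \MSOSI. The idea is to reduce the \MSO theory of $w$ to the \MSO theory of the unary input $a^\omega$ — but since we only have finite-word \MSOSI at hand, we first pass to a finite-word approximation.

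\textbf{Key steps.}
First, I would observe that the \MSO theory of $w = \Pi f$ can be captured by \MSO over the "product-like" structure built from the images $f(\varepsilon), f(a), f(aa), \dots$. Concretely, consider for each $n$ the finite word $w_n = f(\varepsilon) f(a) \cdots f(a^n)$, which is a prefix of $w$; the map $a^n \mapsto w_n$ is again an \MSOSI (it is $\Pi f$ "truncated", obtainable by a minor modification of the order automaton of the presentation of $f$, using length monotonicity exactly as in the proof of Proposition~\ref{prop:corr}). Call this \MSOSI $g$. Now suppose toward contradiction that $w$ has an undecidable \MSO theory; I want to derive that the \MSO theory of $(\nat,\leq)$ — i.e.\ of $a^\omega$ — is undecidable, which is absurd. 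The bridge is: an \MSO sentence $\varphi$ over $\Sigma$-word structures can be "spread over the blocks" of $w$; since $w$ is the limit of the $w_n$ and \MSO over $\omega$-words is captured by Büchi automata, $w \models \varphi$ is determined by which states a fixed Büchi automaton $\mathcal B_\varphi$ can loop in infinitely often along the sequence $w_n$. This in turn is an \MSO property of the sequence of finite words $w_n$, hence (pulling back along $g$, which is allowed because \MSOSI have the \FO — in fact \MSO — backward translation property) an \MSO property of the unary inputs $a^n$, i.e.\ of $(\nat,\leq)$.

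More carefully, here is where \emph{regularity preserving} is needed rather than mere backward translation: the condition "$\mathcal B_\varphi$ accepts $w$" is naturally phrased as membership of $w$ in an $\omega$-regular language, but to turn it into a statement about the \emph{finite} prefixes $w_n$ and then about $a^\omega$ via a finite-word \MSOSI, one composes $g$ with a suitable sequential (Mealy) or regular reading of the blocks — precisely the kind of post-composition that Proposition~\ref{prop:back-seq} tells us is equivalent to being regularity preserving. Under the hypothesis, this composite is still an \MSOSI, and then the \MSO backward translation property yields an \MSO sentence $\psi$ over unary word structures such that $a^n \models \psi$ iff some finite-prefix condition holds; taking the appropriate limit/threshold behavior of $\psi$ over all $n$ (which is \MSO-expressible over $(\nat,\leq)$) decides $w \models \varphi$. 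Since the \MSO theory of $(\nat,\leq)$ is decidable (Büchi), and the whole reduction is effective, the \MSO theory of $w$ is decidable, contradicting the assumption.

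\textbf{Main obstacle.} The delicate point is the passage from the finite prefixes $w_n$ back to the infinite word $w$: one must phrase "$w \models \varphi$" purely in terms of an \MSO-expressible condition on the family $\{w_n\}_n$ that is preserved under the \MSOSI pullback to $a^\omega$, taking care that length monotonicity of the presentation makes the block structure of each $w_n$ recoverable by the \MSOSI. This is essentially repackaging B\'ar\'any's argument that $k$-lexicographic $\omega$-words have a decidable \MSO theory (\cite[Corollary~5.6]{DBLP:journals/ita/Barany08}) and that the relevant closure is by sequential transductions; the role of Conjecture~\ref{conj:mso} / the regularity-preserving hypothesis is exactly to supply the post-composition closure that makes that argument go through for all automatic $\omega$-words at once rather than only the $k$-lexicographic ones. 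I would also need to double check the effectivity bookkeeping so that "decidable \MSO theory" — not merely "no known undecidable instance" — is what comes out.
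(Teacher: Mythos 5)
Your overall strategy is the same as the paper's: use Proposition~\ref{prop:corr} to get the total automatic transduction $f:\set{a}^*\rightarrow\Sigma^*$ with $w=\Pi f$, fix a B\"uchi automaton for an arbitrary $\omega$-regular language, decompose its behaviour on $w$ at the block boundaries $f(\varepsilon),f(a),f(aa),\dots$, use the regularity-preserving hypothesis to pull regular conditions on blocks back to MSO-definable sets of integers, and conclude by decidability of $\MSO(\nat,\leq)$. So the skeleton is right, and you correctly identify where the work is.

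The gap is precisely the step you defer to ``appropriate limit/threshold behavior'': you never construct the $\MSO[(\nat,\leq)]$ sentence expressing B\"uchi acceptance, and the route you sketch for it does not obviously work. A B\"uchi automaton for a general $\omega$-regular language is nondeterministic, so ``which states $\mathcal B_\varphi$ can loop in infinitely often along the sequence $w_n$'' is not obtained by post-composing your prefix interpretation $g$ with a Mealy machine (Proposition~\ref{prop:back-seq} only gives closure under deterministic sequential / polyregular post-composition), and tracking reachable state sets loses the ability to certify that a \emph{single} run visits final states infinitely often. The paper's resolution avoids prefixes entirely: it works with $f$ itself, defines for each pair of states $p,q$ and bit $b$ the regular languages $L^b_{p,q}$ of finite words admitting a run from $p$ to $q$ visiting a final state iff $b=1$, uses regularity preservation only to make $\set{x \mid f(a^x)\in L^b_{p,q}}$ MSO-definable over the unary inputs, and then \emph{existentially quantifies in MSO over $(\nat,\leq)$} on sets $X_p^b$ that guess a run of the B\"uchi automaton cut at block boundaries; acceptance becomes ``some $X_p^1$ is infinite,'' plus initialization and consecution constraints. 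That run-guessing construction is the missing idea; without it (or an equivalent determinization via, say, parity/Rabin automata, which would need its own handling), your reduction to $\MSO(\nat,\leq)$ is not established. Your prefix interpretation $g$ is also an unneeded complication: the paper only ever needs inverse images under $f$ of regular languages of finite words.
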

\begin{proof}
    Assume that word-to-word \MSOSI are regularity preserving.
    Let $u\in \Sigma^\omega$ be an automatic $\omega$-word and let $L\subseteq \Sigma^\omega$ be a regular language.
    We define $f:\set{a}^*\rightarrow \Sigma^*$ the automatic transduction given by Proposition~\ref{prop:corr}.

    Let $\aut$ be a Büchi automaton recognizing $L$, let $Q$ be the set of states, $q_0$ the initial state and $F\subseteq Q$ be the set of accepting states.
    For $p,q\in Q$, we define $L_{p,q}^1$ the set of finite words which have a run from $p$ to $q$ visiting at least one final state and $L_{p,q}^0$ the set of finite words which have a run from $p$ to $q$ visiting no final state.
    We define a formula which starts with a block of existential quantification over variables $X_p^b$ with $p\in Q$ and $b\in \set{0,1}$. 
    The idea is that an integer $x$ will belong to $X_p^b$ if there is a run of $\aut $ over $\Pi_{i\leq x}f(a^i)$ which ends in state $p$ and sees at least $b$ final state \emph{after} having read  $\Pi_{i< x}f(a^i)$.
    The formula will be a conjunction of four parts: firstly all the $X_p^b$'s are pairwise disjoint. Secondly there is a $p$ and an infinite set of positions $x$ such that $X_p^1(x)$ holds.
    The third part is here to ensure that for the minimal $x$, if $X_p^b(x)$ then $f(a^1)\in L_{q_0,p}^b$ can reach state $p$ by visiting at least $b$ final states. It is a conjunction for all $p\in Q$, $b\in \set{0,1}$ of $\forall x\  (X_p^b(x)\wedge \forall y\ x\leq y) \rightarrow f(a)\in L_{q_0,p}^b$. Here $f(a)\in L_{q_0,p}^b$ is simply a boolean which can be computed.
    The last part of the formula will be the conjunction for all $p,q\in Q$, $b,c\in \set{0,1}$ of $\forall x\ X_p^b(x)\wedge X_q^c(x+1) \rightarrow f(a^x)\in L_{p,q}^c$.
    The fact that $f(a^x)\in L_{p,q}^c$ is definable in \MSO comes from the assumption that \MSOSI are regularity preserving.

    We have transferred an \MSO property of $u$ to an \MSO property of $(\nat,<)$, which is decidable.
\end{proof}

\section{Lexicographic transductions}\label{sec:lex}

As we have seen in the latter section, we do not know whether $\MSOSI$
are regularity preserving, and as a consequence of
Corollary~\ref{coro:conje}, proving that it enjoys this property
would prove a long-standing conjecture of the theory of automatic
structures. In this section, we introduce a subclass of \MSOSI transductions which
enjoys this property, called
\emph{lexicographic transductions}.

\subsubsection{Definition of lexicographic transductions}

We first define this class in
terms of closure of basic transductions, called \emph{simple
  transductions}, under a \emph{lexicographic map} operation. The
connection with \MSOSI is done at the end of this section (paragraph~\ref{subsec:lexaut}), via a
corresponding subclass of automatic transductions. 

\paragraph*{Simple transductions}
A  \emph{regular constant} (partial) transduction of type $\Sigma^*\partialf \Gamma^*$ can be
denoted by an expression of the form $L\ch w$, where $L$ is a
regular language over $\Sigma$ and $w$ is a word in $\Gamma^*$, such that for
all $u\in\Sigma^*$, $(L\ch w)(u)$ is defined only if $u\in L$, by
$(L\ch w)(u) = w$. 
A \emph{simple transduction}\footnote{It is a restriction of the known class of
  \emph{recognizable transduction} to output words of length at most $1$.} $f$ is a finite union of regular constant transductions whose codomain only contains words of length at most $1$.
A simple transduction $f:\Sigma^*\partialf \Gamma^*$ is denoted by $f=\sum_{i=1}^n L_i\ch w_i$ such that
 $L_1,\dots,L_n\subseteq \Sigma^*$ are pairwise disjoint regular languages, and $w_1,\dots,w_n\in\Gamma^{\leq 1}$.

\paragraph*{Lexicographic enumerators}

    An ordered alphabet is a pair $\ordera=(B,\order)$ such that $B$ is finite set and $\order$ is a linear order over $B$.
The order $\order$ is extended lexicographically (using the same notation) to words  of same length over $B$, with most significant letter to the right: for all $n$ and all $u,v\in B^n$, $u\order v$ if there exists a position $i\leq n$ such that
$u[i]\order v[i]$ and for all $i<j\leq n$, $u[j]=v[j]$. Note that $\order$ is a total order over $B^n$, for all $n$. We denote $\suc_\ordera:B^*\partialf B^*$ the successor function on $B^*$ induced by $\order$.

We recall that $|$ is a fixed
separator symbol. The \emph{$\ordera$-lexicographic enumerator} is the
function $\lexenum_{\ordera}$:
$$
\begin{array}{llllllll}
   \bigcup_{\Sigma,\Gamma\text{ alphabets}} & \Sigma^* & \rightarrow & ((\Sigma\times
                                                        B)^*|)^* \\
   & w & \mapsto & (w\otimes u_1)|(w\otimes
u_2)|\dots | (w\otimes u_k)
  \end{array}
$$
where $|w| = |u_1|=\dots=|u_k|$, $u_1$ is minimal for $\order$, $u_k$
is maximal for $\order$ and for all $1\leq i<k$, $u_{i+1} =
\suc_\ordera(u_i)$. Note that $k = |B|^{|w|}$.

\begin{example}
Let $\Sigma=\{a,b\}$ let $\ordera=(B,\order)$ be a finite order with $B=\{0,1\}$ and $0\order 1$. 
For all $\sigma\in\Sigma$ and $b\in B$, we write
    $\mm{\sigma}{b}$ instead of $(\sigma,b)$ and $\mm{\Sigma}{b}$
  to denote the set of pairs $(\sigma,b)$ for all $\sigma\in \Sigma$. Then
$\lexenum_{\ordera}(abb) =$
$$
\mm{a}{0}\mm{b}{0}\mm{b}{0}|
\mm{a}{1}\mm{b}{0}\mm{b}{0}|
\mm{a}{0}\mm{b}{1}\mm{b}{0}|
\mm{a}{1}\mm{b}{1}\mm{b}{0}|
\mm{a}{0}\mm{b}{0}\mm{b}{1}|
\mm{a}{1}\mm{b}{0}\mm{b}{1}|
\mm{a}{0}\mm{b}{1}\mm{b}{1}|
\mm{a}{1}\mm{b}{1}\mm{b}{1}
$$
\end{example}

\paragraph*{MapLex combinator} Let $\ordera=(B,\order)$ be an ordered alphabet. We define the function
$$
\begin{array}{llllll}
\maplex_{\ordera} & : & \bigcup_{\Sigma,\Gamma\text{ alphabets}}((\Sigma\times B)^*\rightarrow \Gamma^*)\rightarrow \Sigma^*\rightarrow \Gamma^*\\
\end{array}
$$
such that for all $\Sigma,\Gamma$ alphabets, all $f:(\Sigma\times B)^*\rightarrow \Gamma^*$  and
$u\in\Sigma^*$, 
$$
\maplex_{\ordera}\ f\ u\ =\ f(v_1)f(v_2)\dots f(v_k)
$$
where $\lexenum_{\ordera}(u) = v_1|v_2|\dots |v_k$.
Note that $u$ is in the domain of $\maplex_{\ordera}\ f$ if and only if $v_1,\ldots,v_k$ are all in the domain of $f$.
We write $\maplex$ when $\order$ is clear from the
context.  \todo{encore utile?}

\begin{definition}[Lexicographic transductions]
Lexicographic transductions, denoted by  $\LEX$, are defined inductively by $\LEX_0$ the class of simple transductions and $\LEX_{k+1}=\set{\maplex_{\ordera}\  f| f\in \LEX_k,\ \ordera\ \text{ordered alphabet}}$.

\end{definition}

\begin{example}[Identity and Reverse]
Take $\ordera=(B,\order)$ and $\ordera'=(B,\order')$ with $B=\{0,1\}$, $0\order 1$, and $1\order'
    0$.  
  For all $\sigma\in\Sigma$, let $L_\sigma =
  \pp{\Sigma}{0}^*\pp{\sigma}{1}\pp{\Sigma}{0}^*$ and $L_\epsilon=(\Sigma\times B)^*\setminus(\bigcup_\sigma L_\sigma)$.
    $$
    \begin{array}{rcl}
      \textsf{id} & = & \maplex_{\ordera}\ (L_\epsilon\ch \epsilon +
    \sum_{\sigma\in \Sigma}L_\sigma\ch \sigma) \\ 
    \rev & = & \maplex_{\ordera'}\ (L_\epsilon\ch \epsilon +
    \sum_{\sigma\in \Sigma}L_\sigma\ch \sigma)
      \end{array}
      $$

This is illustrated below on input $abc$, with the output of the simple function below every word of the enumeration.
\renewcommand{\arraystretch}{1.5}
      $$      
      \begin{array}{ll}
\id & \underbrace{\mm{a}{0}\mm{b}{0}\mm{c}{0}}_{\epsilon}|
\underbrace{\mm{a}{1}\mm{b}{0}\mm{c}{0}}_{a}|
\underbrace{\mm{a}{0}\mm{b}{1}\mm{c}{0}}_{b}|
\underbrace{\mm{a}{1}\mm{b}{1}\mm{c}{0}}_{\epsilon}|
\underbrace{\mm{a}{0}\mm{b}{0}\mm{c}{1}}_{c}|
\underbrace{\mm{a}{1}\mm{b}{0}\mm{c}{1}}_{\epsilon}|
\underbrace{\mm{a}{0}\mm{b}{1}\mm{c}{1}}_{\epsilon}|
\underbrace{\mm{a}{1}\mm{b}{1}\mm{c}{1}}_{\epsilon}
\\
\rev&\underbrace{\mm{a}{1}\mm{b}{1}\mm{c}{1}}_{\epsilon}|
\underbrace{\mm{a}{0}\mm{b}{1}\mm{c}{1}}_{\epsilon}|
\underbrace{\mm{a}{1}\mm{b}{0}\mm{c}{1}}_{\epsilon}|
\underbrace{\mm{a}{0}\mm{b}{0}\mm{c}{1}}_{c}|
\underbrace{\mm{a}{1}\mm{b}{1}\mm{c}{0}}_{\epsilon}|
\underbrace{\mm{a}{0}\mm{b}{1}\mm{c}{0}}_{b}|
\underbrace{\mm{a}{1}\mm{b}{0}\mm{c}{0}}_{a}|
\underbrace{\mm{a}{0}\mm{b}{0}\mm{c}{0}}_{\epsilon}
\end{array}
$$
\renewcommand{\arraystretch}{1}
\end{example}

Several other examples are given in Section~\ref{sec:examples}.

\begin{lemma}\label{lem:regdomlex}
    For all $f\in\LEX$, its domain $\dom(f)$ is regular.
\end{lemma}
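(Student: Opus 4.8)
The plan is to proceed by induction on the integer $k$ such that $f\in\LEX_k$.

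For the base case $k=0$, $f$ is a simple transduction, hence of the form $\sum_{i=1}^n L_i\ch w_i$ with $L_1,\dots,L_n$ pairwise disjoint regular languages; then $\dom(f)=\bigcup_{i=1}^n L_i$ is regular, being a finite union of regular languages.

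For the inductive step I would take $f=\maplex_{\ordera}\ g$ with $g\in\LEX_{k}$ and $\ordera=(B,\order)$, and set $D:=\dom(g)$, which is regular by the induction hypothesis. Unfolding the definitions of $\maplex_{\ordera}$ and of $\lexenum_{\ordera}$, a word $u\in\Sigma^*$ lies in $\dom(f)$ exactly when every block $u\otimes u_i$ of $\lexenum_{\ordera}(u)$ lies in $D$; since $u_1,\dots,u_k$ enumerate all of $B^{|u|}$, this is precisely the universal condition
\[
\dom(f)=\{\,u\in\Sigma^*\mid \forall w\in B^{|u|},\ u\otimes w\in D\,\}.
\]
So the task reduces to showing that such a language is regular whenever $D$ is. The cleanest way is to observe that it equals $\Sigma^*\setminus\pi\big((\Sigma\times B)^*\setminus D\big)$, where $\pi$ is the length-preserving morphism erasing the $B$-component, and then to invoke closure of regular languages under complement and under length-preserving morphic images. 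Equivalently, one can run a ``subset construction over all $B$-annotations'' on a \emph{complete} DFA $A=(Q,q_0,F,\delta)$ for $D$: states $\mathcal P(Q)$, initial state $\{q_0\}$, transitions $S\xrightarrow{\sigma}\{\delta(q,(\sigma,b))\mid q\in S,\ b\in B\}$, accepting states those $S$ with $S\subseteq F$; after reading $u$ the reached state is $\{\delta^*(q_0,u\otimes w)\mid w\in B^{|u|}\}$, which is accepting iff every annotation leads $A$ to acceptance.

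I do not expect any genuine obstacle here; the only points requiring a little care are recognising the domain of $\maplex_{\ordera}\ g$ as a \emph{universal} (rather than existential) condition over the $B$-annotations of the input, completing the DFA for $D$ so that the alternative subset construction is well defined, and checking the degenerate input $u=\epsilon$, where $\lexenum_{\ordera}(\epsilon)$ consists of the single empty block and the condition correctly reduces to $\epsilon\in D$.
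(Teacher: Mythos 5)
Your proof is correct and rests on exactly the same key observation as the paper's: the domain of a $\maplex$ is a universal condition over annotations, hence equals the complement of the (length-preserving) projection of the complement of the inner domain, and regularity follows from closure of regular languages under complement and morphic image. The only organizational difference is that the paper flattens the whole nesting $\maplex_{\ordera_1}\cdots\maplex_{\ordera_n}\ s$ at once and applies this complement--projection--complement argument a single time to the simple transduction $s$ at the bottom, whereas you apply it once per level by induction; both are equally valid.
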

\begin{proof}
    Observe that any \LEX transduction $f:\Sigma^*\partialf
    \Gamma^*$ is equal to $\maplex_{\ordera_1}\ (\maplex_{\ordera_2} \dots\ (\maplex_{\ordera_n}\ s)\dots)$
    for some $n\geq 0$, some ordered alphabets $(\ordera_i=(B_i,\order_i))_i$ and some simple
    transduction $s : (\Sigma\times B_1\times \dots\times
    B_n)^*\partialf \Gamma^*$. Then, $f$ is defined on
    $u\in\Sigma^*$ iff for all $1\leq i\leq n$ and all $b_i\in
    B_i^{|u|}$, $s(u\otimes b_1\otimes\dots
    \otimes b_n)$ is defined. Now, observe that $\dom(f)$ is the
    complement of the $\Sigma$-projection of the complement of $\dom(s)$. This entails
    the result as $\dom(s)$ is regular and regular languages are
    closed under morphisms (and complement). 
\end{proof}

\subsubsection{Presentation as automatic transductions}
\label{subsec:lexaut}


We give an alternative presentation in terms of automatic transductions. Let $k\ge 1$ be a positive integer, and
%
     $\overline \ordera =((B_1,\order_1),\ldots,(B_k,\order_k))$ be a $k$-tuple of ordered alphabets and let $B=B_1\times \cdots \times B_k$.
    We define the associated \emph{$k$-lexicographic order} for words
    of the same length over $B^*$ by $u\order_{\overline \ordera} v$ if
    $u=u_1\otimes \cdots \otimes u_k$, $v=v_1\otimes \cdots \otimes
    v_k$, and there is $i\in \set{1,\ldots, k}$ such that $u_i
    \order_{i} v_i$ and for all $j<i$, $u_j=v_j$.

\begin{definition}
   Let $\overline \ordera=((B_1,\order_1),\ldots,(B_k,\order_k))$ be a $k$-tuple of ordered alphabets, let $B=B_1\times \cdots \times B_k$ and let $\order_{\overline \ordera}$ be the associated $k$-lexicographic order.

    A \emph{$k$-lexicographic automatic transducer} over the alphabet $B$ is an automatic transducer with work alphabet $B$ such that the order is exactly $\order_{\overline \ordera}$. A transduction is said \emph{$k$-lexicographic automatic} if it can be defined
    by a $k$-lexicographic automatic transducer. We denote by $\AT_{k\text{-}\LEX}$ the class of $k$-lexicographic automatic transductions and by $\AT_\LEX$ the union of these, which we call lexicographic automatic transductions.
\end{definition}

The next proposition is rather immediate (Appendix.~\ref{app:atklex2klex}).

\begin{restatable}[$\AT_{k\text{-}\LEX} = \LEX_k$]{proposition}{atklextoklex}
\label{prop:atklex2klex}
    For all $k\geq 1$, a transduction is $k$-lexicographic iff it is $k$-lexicographic automatic.
\end{restatable}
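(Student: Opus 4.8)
The plan is to prove both inclusions by induction on $k$, exploiting the fact that a $k$-lexicographic order factors as a lexicographic refinement: the "most significant" coordinate (the $B_k$-component, since in the definition the most significant letter is to the right and index $i$ with smallest $j<i$ all equal dominates) groups the enumeration into $|B_k|^{|w|}$ blocks, and within each block the order is the $(k{-}1)$-lexicographic order on the remaining coordinates. Concretely, for a $k$-tuple $\overline\ordera = ((B_1,\order_1),\ldots,(B_k,\order_k))$ write $\overline\ordera' = ((B_1,\order_1),\ldots,(B_{k-1},\order_{k-1}))$ and $\ordera_k = (B_k,\order_k)$; then the $\ordera_k$-lexicographic enumeration of $w$ lists all labellings $w\otimes b$ with $b\in B_k^{|w|}$ in $\order_k$-order, and for each such $b$ the remaining enumeration over $B_1\times\cdots\times B_{k-1}$ is exactly $\order_{\overline\ordera'}$ on $(w\otimes b)$. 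This is the algebraic identity that makes $\maplex_{\ordera_k}\,(\text{something in }\LEX_{k-1})$ agree with a single $k$-lexicographic pass, and symmetrically.

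\textbf{From $\LEX_k$ to $\AT_{k\text{-}\LEX}$.} For $k=0$, a simple transduction $s=\sum_{i=1}^n L_i\ch w_i$ is directly an automatic transduction: take work alphabet trivial (or a singleton), let $\aut_\dom$ accept $\bigcup_i L_i$, and for the output order relation use automata recognizing, on the convolution of two (padded) positions, that the first is before the second — here the universe has one element per output letter, so one builds the finitely many output positions by a bounded-dimension encoding as in the standard \MSOT-to-automaton translation; since $|w_i|\le 1$ the output structure has at most one position and the construction is immediate. For the inductive step, suppose $f\in\LEX_k$, so $f = \maplex_{\ordera}\,g$ with $g\in\LEX_{k-1}$, and by induction $g$ is $(k{-}1)$-lexicographic automatic, say via work alphabet $B' = B_1\times\cdots\times B_{k-1}$. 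The work alphabet for $f$ is $B = B'\times B_k$; the universe automaton $\aut_\univ$ for $f$ on input $w$ should accept $w\otimes (b',b_k)$ iff $b'$ encodes a position in the output structure of $g$ run on $w\otimes b_k$ (which $\aut_\univ^g$ checks, reading $(w\otimes b_k)\otimes b'$), and the order automaton $\aut_\leq$ accepts $w\otimes(b',b_k)\otimes(c',c_k)$ iff either $b_k \prec_k c_k$ (strictly, lexicographically), or $b_k=c_k$ and $(w\otimes b_k)\otimes b'\otimes c'$ is accepted by $\aut_\leq^g$. All of these are regular conditions obtained by relabelling the alphabet and intersecting with a regular lexicographic-comparison language, so one gets an automatic transducer whose order is precisely $\order_{\overline\ordera}$ with $\overline\ordera=(\overline\ordera',\ordera_k)$; hence $f\in\AT_{k\text{-}\LEX}$.

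\textbf{From $\AT_{k\text{-}\LEX}$ to $\LEX_k$.} Conversely, let $T=(\Sigma,B,\aut_\dom,\aut_\univ,(\aut_\sigma)_\sigma,\aut_\leq)$ be a $k$-lexicographic automatic transducer with $B=B_1\times\cdots\times B_k$ and order $\order_{\overline\ordera}$. The output word $f_T(w)$ is obtained by listing the elements $x\in B^{|w|}$ with $w\otimes x\in L(\aut_\univ)$ in $\order_{\overline\ordera}$-order and emitting, for each, the unique $\sigma$ with $w\otimes x\in L(\aut_\sigma)$ (or nothing, if $x$ is not in the universe, which we fold into "emit $\epsilon$"). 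Because $\order_{\overline\ordera}$ is the lexicographic order with $B_k$ most significant, this is exactly $\maplex_{\ordera_k}$ applied to the transduction $g$ that, on input $w\otimes b_k \in (\Sigma\times B_k)^*$, outputs the $(k{-}1)$-lexicographic enumeration of its positions that lie in the universe, labelled by the appropriate $\sigma$. That $g$ is itself a $(k{-}1)$-lexicographic automatic transducer over $\Sigma\times B_k$ with work alphabet $B'$ — its universe automaton is $\aut_\univ$ read with $(w\otimes b_k)$ as base, its output automata are the $\aut_\sigma$ similarly reparsed, and its order is $\order_{\overline\ordera'}$; the only subtlety is handling positions $x$ not in the universe, which one handles by having $g$ emit $\epsilon$ there, exactly as in the \textsf{id}/\textsf{rev} examples with the catch-all language $L_\epsilon$. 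By the induction hypothesis $g\in\LEX_{k-1}$, hence $f_T = \maplex_{\ordera_k}\,g \in \LEX_k$. The base case $k=0$ (a $0$-lexicographic automatic transducer has trivial work alphabet, so its output structure has one position whose label is determined by a regular test on $w$) is precisely a simple transduction.

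\textbf{Main obstacle.} The genuinely fiddly point is the bookkeeping around positions that fail $\phi_\univ$ (equivalently, words $x$ not accepted by $\aut_\univ$): \maplex enumerates \emph{all} labellings and applies the simple/lower-level function to each, so one must arrange that non-universe labellings contribute $\epsilon$, and one must check that "being in the universe" and "being the $\order_{\overline\ordera'}$-image inside a fixed $b_k$-block" are compatible regular conditions after the alphabet reparsing $(\Sigma\times B_k\times B')\leftrightarrow(\Sigma\times B)$ — i.e. that projecting/relabelling does not break regularity, which it does not. A secondary point is making the "most significant coordinate first" convention in $\order_{\overline\ordera}$ line up with the "most significant letter to the right" convention inside each $\order_i$ on words; this is just tracking which index is peeled off at each induction step, but it must be stated carefully so the block decomposition is literally correct. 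Once these conventions are pinned down, both directions are routine automata manipulations, which is why the proposition is "rather immediate."
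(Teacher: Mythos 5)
Your overall strategy rests on the same key observation as the paper's proof, namely that the enumeration produced by $k$ nested $\maplex$ operators coincides with the enumeration of $B_1\times\cdots\times B_k$-labellings in the order $\order_{\overline\ordera}$, and that the $\epsilon$-outputs of the simple function correspond exactly to the non-universe labellings. The difference is organizational: the paper does not induct. It unfolds $f\in\LEX_k$ once and for all as $\maplex_{\ordera_1}\cdots\maplex_{\ordera_k}\ s$ with $s$ simple, and reads the automatic presentation directly off $s$ (the $\aut_\gamma$ are automata for the languages $L_\gamma$ defining $s$, $\aut_\univ$ comes from $\dom(s)$, and $\aut_\dom$ from $\dom(f)$, which is regular by Lemma~\ref{lem:regdomlex}); conversely it packages $\aut_\univ$ and $(\aut_\gamma)_\gamma$ into the single simple function $L_\epsilon\ch\epsilon+\sum_\gamma L(\aut_\gamma)\ch\gamma$. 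Since in $\AT_{k\text{-}\LEX}$ the order is fixed by definition to be $\order_{\overline\ordera}$, no order automaton needs to be constructed at all; your level-by-level reconstruction of it is correct but does more work than necessary. Both routes are fine; the paper's is shorter because the flattening identity is immediate from the inductive definition of $\LEX_k$.

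One concrete slip you must fix: you peel off $B_k$ as the most significant (outermost) coordinate, but with the paper's conventions it is $B_1$. In $\order_{\overline\ordera}$ the condition is that $u_i\order_i v_i$ with $u_j=v_j$ for all $j<i$, so the smallest differing index decides the comparison, and in $\LEX_k$ the outermost, hence slowest-varying and most significant, operator is $\maplex_{\ordera_1}$. You have conflated this with the separate convention that \emph{within} each coordinate the most significant \emph{letter} is to the right. As written, your block decomposition (group by $b_k$, then apply the $(k{-}1)$-lexicographic order to $b_1,\ldots,b_{k-1}$) does not produce $\order_{\overline\ordera}$; peeling off $B_1$ instead repairs every step of your induction without changing the argument.
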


Now, we make a connection between $k$-lexicographic automatic
transductions and $k$-lexicographic automatic $\omega$-words. As a consequence
of the latter proposition and known results from the literature of
automatic words, this connection allows us to prove that
$k$-lexicographic transductions form a strict hierarchy (Corollary~\ref{coro:stricthierarch}).

\begin{definition}
 Let $\overline \ordera=((B_1,\order_1),\ldots,(B_k,\order_k))$ be a $k$-tuple of ordered alphabets, let $B=B_1\times \cdots \times B_k$ and let $\order_{\overline \ordera}$ be the associated $k$-lexicographic order.
A \emph{$k$-lexicographic automatic presentation} of a word is an automatic presentation over the alphabet $B_1\times \cdots \times B_k$. such that the order over $B^*$ is exactly  the length monotonous extension of $\order_{\overline \ordera}$.

\end{definition}

Our definition differs slightly from the one of \cite{DBLP:journals/ita/Barany08}, where instead of considering a product of  $k$ alphabets, the author considers positions modulo $k$. There is however a simple correspondence between the two definitions which is made clear by the normal form lemma in \cite[Lemma~4.5]{DBLP:journals/ita/Barany08}.

\begin{proposition}\label{prop:lex-corr}
    For all $k\geq 1$, there is a one-to-one correspondence between $k$-lexicographic automatic presentations of $\omega$-words and total $k$-lexicographic automatic transductions from finite words over a unary alphabet to finite words. Moreover, when an automatic $\omega$-word $w$ corresponds to a function $f$, then $w=\Pi f$.
\end{proposition}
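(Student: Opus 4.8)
The plan is to adapt, almost verbatim, the proof of Proposition~\ref{prop:corr}, the only extra bookkeeping being to keep track of the $k$-lexicographic structure of the order throughout. The key preliminary observation is that, for $\overline\ordera=((B_1,\order_1),\ldots,(B_k,\order_k))$ and $B=B_1\times\cdots\times B_k$, the relation $\order_{\overline\ordera}$ on equal-length words over $B$ is regular: an automaton over $B\times B$ can guess the least index $i$ at which the two words differ in their $i$-th component, verify $\order_i$ at that position, and verify equality of the first $i-1$ components everywhere. Hence the notions of $k$-lexicographic automatic transducer and of $k$-lexicographic automatic presentation are well-posed, and the entire argument of Proposition~\ref{prop:corr} goes through provided we check that the two surgeries performed there on the order automaton ---restricting it to $\pad$-free transitions, and extending it length-monotonously--- are transparent with respect to the constraint ``the order is exactly $\order_{\overline\ordera}$ (resp. its length-monotonous extension)''.

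\textbf{From a presentation to a transduction.} Given a $k$-lexicographic automatic presentation $S=(\aut_\univ,(\aut_\sigma)_{\sigma\in\Sigma},\aut_\leq)$ of an $\omega$-word $w$, its order (over the universe $\{x\in B^*\mid x\in L(\aut_\univ)\}$) is by definition the length-monotonous extension of $\order_{\overline\ordera}$. Following the proof of Proposition~\ref{prop:corr}, I would let $\aut_\leq'$ be $\aut_\leq$ with all $\pad$-using transitions removed; since $\aut_\leq$ never uses $\pad$ on equal-length inputs, $L(\aut_\leq')$ is exactly $\order_{\overline\ordera}$ restricted to equal-length words. Thus $T=(\{a\},\aut_\univ,(\aut_\sigma)_{\sigma\in\Sigma},\aut_\leq')$ is a $k$-lexicographic automatic transducer, realizing some $f$. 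It is total: for every $n$ the set $\{x\in B^n\mid a^n\otimes x\in L(\aut_\univ)\}$ is finite and $\order_{\overline\ordera}$ linearly orders it, so $f(a^n)$ is a well-defined finite word. Since the original order is length monotonous and refines $\order_{\overline\ordera}$ block by block (one block per length), concatenating the blocks for lengths $0,1,2,\dots$ reconstitutes $w$, i.e. $w=\Pi f$.

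\textbf{From a transduction to a presentation.} Conversely, given a total $k$-lexicographic automatic transducer $T=(\{a\},\aut_\univ,(\aut_\sigma)_{\sigma\in\Sigma},\aut_\leq)$ with $L(\aut_\leq)=\order_{\overline\ordera}$ on equal-length words, I would extend $\aut_\leq$ to an automaton $\aut_\leq'$ over $(B\cup\{\pad\})^2$ exactly as in the length-monotonicity paragraph: on $\pad$-free (equal-length) inputs it behaves like $\aut_\leq$, and it additionally declares every word to be below every strictly longer word (the shorter one read $\pad$-padded to the right), rejecting the symmetric case. Then $\aut_\leq'$ recognizes precisely the length-monotonous extension of $\order_{\overline\ordera}$, so $S=(\aut_\univ,(\aut_\sigma)_{\sigma\in\Sigma},\aut_\leq')$ is a $k$-lexicographic automatic presentation; its universe ordered by $\aut_\leq'$ is isomorphic to $(\nat,\leq)$ (every position has finitely many predecessors, all of length at most its own), and it presents the $\omega$-word $\Pi f$, just as in the proof of Proposition~\ref{prop:corr}.

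Finally I would note that these two maps are mutually inverse: restricting the length-monotonous extension of $\order_{\overline\ordera}$ to equal-length words returns $\order_{\overline\ordera}$, and re-extending returns the extension, while $\aut_\univ$ and the $\aut_\sigma$ are never modified. I do not expect a genuine obstacle here: the statement is, as the text says, essentially a restatement of Proposition~\ref{prop:corr} refined by the (routine) regularity of $\order_{\overline\ordera}$, and the only care needed is the length-monotonicity bookkeeping already carried out there. The one cosmetic point worth a sentence is that if one prefers B\'ar\'any's ``positions modulo $k$'' presentation of $k$-lexicographic words to our product-of-$k$-alphabets presentation, the two are interchangeable via the normal form of \cite[Lemma~4.5]{DBLP:journals/ita/Barany08}, which is already invoked just above the proposition.
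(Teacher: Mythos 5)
Your proposal is correct and follows essentially the same route as the paper, whose proof simply states that the constructions of Proposition~\ref{prop:corr} carry over and that the two surgeries on the order automaton (dropping $\pad$-transitions, resp.\ extending length-monotonously) preserve the property of being $k$-lexicographic. Your additional checks (regularity of $\order_{\overline\ordera}$, totality, and mutual inverseness) merely spell out details the paper leaves implicit.
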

\begin{proof}The proof is the same as the proof of Proposition~\ref{prop:corr}. The constructions of the order automata  (which amounts to allowing words of different lengths or not) preserve the property of being $k$-lexicographic.
\end{proof}

As a consequence, we can show that the $\AT_{k\text{-}\LEX}$ form a strict hierarchy:
\begin{corollary}\label{coro:strictAT}
    For all $k\geq 1$, $\AT_{k\text{-}\LEX}\subsetneq \AT_{k+1\text{-}\LEX}$.
\end{corollary}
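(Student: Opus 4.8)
The plan is to get the inclusion $\AT_{k\text{-}\LEX}\subseteq\AT_{k+1\text{-}\LEX}$ essentially for free by a padding argument, and to get strictness by importing, through the correspondence of Proposition~\ref{prop:lex-corr}, the known strictness of the hierarchy of $k$-lexicographic automatic $\omega$-words. \emph{For the inclusion}, let $\overline{\ordera}=((B_1,\order_1),\ldots,(B_k,\order_k))$ be a $k$-tuple of ordered alphabets, let $B_0=\{\star\}$ be a singleton carrying the trivial order $\order_0$, and set $\overline{\ordera}'=((B_1,\order_1),\ldots,(B_k,\order_k),(B_0,\order_0))$. Since the last component of any two equal-length words over $B_1\times\cdots\times B_k\times B_0$ always coincides, the $(k+1)$-lexicographic order $\order_{\overline{\ordera}'}$ is, up to the canonical bijection of $B_1\times\cdots\times B_k\times B_0$ with $B=B_1\times\cdots\times B_k$, exactly $\order_{\overline{\ordera}}$, and the same identification holds for the length-monotonous extensions used by automatic transducers. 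Relabelling the work alphabet of a $k$-lexicographic automatic transducer along this bijection therefore yields a $(k+1)$-lexicographic automatic transducer computing the same transduction, so $\AT_{k\text{-}\LEX}\subseteq\AT_{k+1\text{-}\LEX}$. (Equivalently one may insert the trivial component as the most significant one.)

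\emph{For strictness}, suppose towards a contradiction that $\AT_{k\text{-}\LEX}=\AT_{k+1\text{-}\LEX}$. By a result of B\'ar\'any~\cite{DBLP:journals/ita/Barany08}, the hierarchy of $k$-lexicographic automatic $\omega$-words is strict, so there is an $\omega$-word $w$ admitting a $(k+1)$-lexicographic automatic presentation but no $k$-lexicographic one; using the correspondence recalled just after Proposition~\ref{prop:lex-corr} between B\'ar\'any's "positions modulo $k$" formulation and the present "product of $k$ alphabets" formulation, and normalising to length-monotonous presentations as in the discussion preceding Proposition~\ref{prop:corr}, this statement transfers verbatim to our framework. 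By Proposition~\ref{prop:lex-corr} applied at level $k+1$, we have $w=\Pi f$ for some total $(k+1)$-lexicographic automatic transduction $f$ from words over a unary alphabet to finite words. By the standing assumption $f$ is also $k$-lexicographic automatic, so Proposition~\ref{prop:lex-corr} applied at level $k$ to $f$ exhibits a $k$-lexicographic automatic presentation of $\Pi f=w$, contradicting the choice of $w$. Hence $\AT_{k\text{-}\LEX}\subsetneq\AT_{k+1\text{-}\LEX}$.

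\emph{On the main obstacle:} there is in fact no hard step in this corollary. The only non-routine ingredient is the strictness of the $\omega$-word hierarchy, which is used as a black box from \cite{DBLP:journals/ita/Barany08}; the two applications of Proposition~\ref{prop:lex-corr} and the trivial-component padding are bookkeeping. The points that deserve mild attention are purely a matter of matching conventions: Proposition~\ref{prop:lex-corr} is stated for \emph{total} transductions over a unary alphabet and for \emph{length-monotonous} presentations, and B\'ar\'any works with "positions modulo $k$" rather than "product of $k$ alphabets" presentations. Both discrepancies are resolved by the normalisations already discussed in the paper, neither of which affects the lexicographic level, so they pose no real difficulty.
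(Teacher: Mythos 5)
Your proof is correct and follows essentially the same route as the paper, which simply cites Proposition~\ref{prop:lex-corr} together with B\'ar\'any's strictness theorem; you have merely spelled out the transfer argument and the (routine) inclusion via a trivial padding component. Nothing to object to.
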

\begin{proof}
    This is a consequence of Proposition~\ref{prop:lex-corr} and the strictness of the hierarchy of $k$-lexicographic automatic $\omega$-words given in \cite[Theorem~6.1]{DBLP:journals/ita/Barany08}
\end{proof}

\begin{corollary}\label{coro:stricthierarch}
    For all $k\geq 1$, $\LEX_k\subsetneq \LEX_{k+1}$.
\end{corollary}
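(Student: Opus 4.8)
The plan is to read Corollary~\ref{coro:stricthierarch} off from the corresponding statement for automatic transducers. By Proposition~\ref{prop:atklex2klex} we have $\LEX_j = \AT_{j\text{-}\LEX}$ for every $j \geq 1$, so it suffices to transfer the strict inclusion of Corollary~\ref{coro:strictAT} through this equivalence of classes.

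First I would record the inclusion $\LEX_k \subseteq \LEX_{k+1}$. This is not literally immediate from the inductive definition $\LEX_{k+1} = \set{\maplex_{\ordera}\ f \mid f \in \LEX_k}$, but it follows from $\AT_{k\text{-}\LEX} \subseteq \AT_{(k+1)\text{-}\LEX}$: given a $k$-lexicographic automatic transducer with work alphabet $B = B_1 \times \cdots \times B_k$, append a trivial coordinate $B_{k+1} = \set{0}$; since every work word is constant on that last coordinate, the $(k+1)$-lexicographic order induced by $(B_1,\order_1),\ldots,(B_k,\order_k),(\set{0},\order_{k+1})$ coincides with the original $k$-lexicographic order, modulo the relabelling $B \times \set{0} \cong B$ of the work alphabet. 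Hence the same device, read as a $(k+1)$-lexicographic automatic transducer, computes the same transduction, and combining with Proposition~\ref{prop:atklex2klex} gives $\LEX_k = \AT_{k\text{-}\LEX} \subseteq \AT_{(k+1)\text{-}\LEX} = \LEX_{k+1}$. (Alternatively one can argue directly that $\maplex_{\ordera}$ over a one-letter ordered alphabet is, up to relabelling the base alphabet, the identity, and that $\LEX_k$ is closed under such relabellings; this is the slightly more tedious route.)

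Then I would invoke Corollary~\ref{coro:strictAT}, which --- via Proposition~\ref{prop:lex-corr} and the strictness of B\'ar\'any's hierarchy of $k$-lexicographic automatic $\omega$-words --- provides a total transduction over a unary input alphabet lying in $\AT_{(k+1)\text{-}\LEX} \setminus \AT_{k\text{-}\LEX}$. By Proposition~\ref{prop:atklex2klex} this very transduction lies in $\LEX_{k+1} \setminus \LEX_k$, so the inclusion above is strict, and we conclude $\LEX_k \subsetneq \LEX_{k+1}$ for all $k \geq 1$.

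Since everything reduces to already-established statements, there is no genuine obstacle here; the only point requiring care is that Proposition~\ref{prop:atklex2klex} be used as an honest equality of transduction classes in \emph{both} directions, so that a separating example on the automatic-transducer side is literally a separating example for $\LEX$, and that the separating example supplied by Corollary~\ref{coro:strictAT} is indeed a (total, unary-input) transduction to which that proposition applies.
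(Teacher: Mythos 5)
Your proposal is correct and follows essentially the same route as the paper, which likewise derives the statement directly from Corollary~\ref{coro:strictAT} together with Propositions~\ref{prop:lex-corr} and~\ref{prop:atklex2klex}. The only difference is that you explicitly justify the non-strict inclusion $\LEX_k\subseteq\LEX_{k+1}$ via a trivial extra coordinate, a detail the paper leaves implicit.
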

\begin{proof}
    From Corollary~\ref{coro:strictAT} and
    Propositions~\ref{prop:lex-corr},\ref{prop:atklex2klex}.
\end{proof}

\section{Examples}\label{sec:examples}

In this section, we provide a series of examples of lexicographic transductions.

\begin{example}[Morphisms]\label{ex:morphisms}
    Let $\phi_a : u\in\{a,b\}^*\rightarrow a^*$ be the morphism
    defined by $\phi_a(a) = a$ and $\phi_a(b) = \epsilon$. We have
    $\phi_a\in \LEX_1$.  It suffices to take $B = \{0,1\}$ with $0\order
    1$. Then, let $L_a = \pp{\Sigma}{0}^*\pp{a}{1}\pp{\Sigma}{0}^*$, and $L_\epsilon = \overline{L_a}$. Then:
    $$
    \phi_a\ =\ \maplex\ (L_a\ch a+L_{\epsilon}\ch
    \epsilon)
    $$
    More generally, if $\psi : \Sigma^*\rightarrow \Gamma^*$ is an
    arbitrary morphism, we show that $\psi\in \LEX_1$. Note that $\psi$
    may transform a single letter into several
    letters, while simple transductions output at most one letter. To
    overcome this difference, we consider a larger linearly
    ordered set. Let $M = \text{max}_{\sigma\in\Sigma}
    |\psi(\sigma)|$. If $M = 0$, then $\psi$ is the constant transduction
    which outputs $\epsilon$, so $\psi\in\LEX_0$. Otherwise, 
    let $\ordera_M=(B_M,<)$ with $B_M=\{0,1,\dots,M\}$ naturally ordered. Let $I :
    \Gamma\rightarrow 2^{\Sigma\times \mathbb{N}}$ such
    that for all $\gamma\in \Gamma$, $I(\gamma)$ is
    the set of pairs $(\sigma,i)$ such that $\psi(\sigma)[i] = \gamma$. 
    Note that for all $\gamma\in\Gamma$,
    $I(\gamma)\subseteq \Sigma\times \{1,\dots,M\}$. Define $L_\gamma$ as the set
    given by the regexp 
    $$
    \bigcup_{(\sigma,i)\in I(\gamma)} \pp{\Sigma}{0}^*\pp{\sigma}{i}\pp{\Sigma}{0}^*
    $$
and $L_\epsilon$ the complement of the union of all $L_\gamma$. Then
$$
\psi\ =\ \maplex_{\ordera_M}\ (L_\epsilon\ch \epsilon\ +\ \sum_{\gamma\in\Gamma}L_\gamma\ch \gamma).
$$
\end{example}

Using similar ideas, the latter example can be generalized to sequential transductions (see Appendix~\ref{app:seqlex}).

\begin{restatable}{lemma}{seqlex}
\label{lem:seqlex}
    $\SEQ\subseteq \LEX_1$.
\end{restatable}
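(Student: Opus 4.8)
The plan is to mimic the treatment of morphisms in Example~\ref{ex:morphisms}, generalizing it to handle the state component of a sequential transducer. Fix a sequential transducer $T=(A,\out)$ with $A=(Q,q_0,F,\delta)$ over $\Sigma,\Gamma$, and let $M=\max_{(q,\sigma)\in\dom(\delta)}|\out(q,\sigma)|$. As before, if $M=0$ the transduction is (almost) constant and lands in $\LEX_0$ up to restricting its domain to $L(A)$, which is regular; so assume $M\geq 1$ and take $\ordera_M=(B_M,<)$ with $B_M=\{0,1,\dots,M\}$ naturally ordered. The key idea: on input $u$ with $|u|=n$, the lexicographic enumerator $\lexenum_{\ordera_M}$ produces $(M{+}1)^n$ annotated words, enumerated so that the annotation of position $1$ is "most rapidly varying". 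Reading the enumeration left to right, among the first $(M{+}1)$ words exactly one has a nonzero annotation at position $1$ (namely value $i$ ranging over $1,\dots,M$ is wrong — rather, the annotation at position~$1$ cycles through $0,1,\dots,M,0,1,\dots$), and more generally position~$j$ carries a fixed value while positions $1,\dots,j-1$ run through all of $B_M^{j-1}$. So the enumeration visits, in order, position $1$ (its $M{+}1$ possible letters), then position $2$, etc., which is exactly the left-to-right order needed to simulate one run of the sequential transducer.

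The construction then is: the simple transduction $s:(\Sigma\times B_M)^*\partialf\Gamma^*$ outputs, on an annotated word whose nonzero annotation occurs at a single position $j$ carrying value $i$, the letter $\out(q_{j-1},u[j])[i]$, where $q_{j-1}$ is the state reached by $A$ after reading $u[1]\dots u[j-1]$; and outputs $\epsilon$ on annotated words that are all-zero or have their (unique) nonzero annotation in an "out of range" position. The point is that $s$ must be a simple transduction, i.e.\ a finite union $\sum L_\ell\ch w_\ell$ with $L_\ell$ regular over $\Sigma\times B_M$. This works because "the word has exactly one nonzero annotation, at position $j$ with value $i$, the prefix $u[1]\dots u[j-1]$ drives $A$ from $q_0$ to state $q$, $\delta(q,u[j])$ is defined, and $\out(q,u[j])[i]=\gamma$" is a regular property of the annotated word: a DFA can track the run of $A$ on the $\Sigma$-projection while scanning left, detect the single nonzero annotation and its value, and check the output letter — all using finite memory ($Q\times B_M$ plus a flag). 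For each $\gamma\in\Gamma$ let $L_\gamma$ be the union (over $q\in Q$, $\sigma$ with $\delta(q,\sigma)$ defined and $\out(q,\sigma)[i]=\gamma$) of these regular languages, let $L_\epsilon$ be the complement of $\bigcup_\gamma L_\gamma$, and set $s=L_\epsilon\ch\epsilon+\sum_{\gamma\in\Gamma}L_\gamma\ch\gamma$. One must also ensure that a position with value $i>|\out(q_{j-1},u[j])|$ produces $\epsilon$ (these get absorbed into $L_\epsilon$). Finally, to get $\dom(\maplex_{\ordera_M}\ s)=L(A)$ exactly — recall $u$ is in the domain of $\maplex_{\ordera_M}\ s$ iff every annotated word is in $\dom(s)$ — we arrange that the all-zero annotated word of length $n$ lies in $\dom(s)$ precisely when $u\in L(A)$: add $L(A)\otimes \pp{\Sigma}{0}^* \ch \epsilon$ (intersected appropriately with the all-zero language) as one of the constant summands, and make sure annotated words with an all-zero $\Sigma\otimes B_M$ pattern over a word not in $L(A)$ fall outside $\dom(s)$ by leaving them out of every $L_\ell$. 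Then $(\maplex_{\ordera_M}\ s)(u)=\out(q_0,u[1])\dots\out(q_{n-1},u[n])=f_T(u)$ for $u\in L(A)$, so $f_T\in\LEX_1$.

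The main obstacle, and the part requiring care rather than new ideas, is getting the bookkeeping of $\dom(s)$ exactly right so that $\dom(\maplex_{\ordera_M}\ s)=L(A)$: since $\maplex$ is undefined on $u$ as soon as \emph{one} of the exponentially many annotated words is outside $\dom(s)$, one cannot be sloppy and let $\dom(s)=(\Sigma\times B_M)^*$. The clean fix is to define $\dom(s)$ to be exactly the set of annotated words $u\otimes b$ such that $u\in L(A)$ \emph{and} $b$ has at most one nonzero coordinate (which is automatic for every word produced by $\lexenum_{\ordera_M}$), so that for $u\in L(A)$ every enumerated annotated word is handled, and for $u\notin L(A)$ the all-zero annotated word already fails; one checks this set is regular and refines the partition $\{L_\epsilon\}\cup\{L_\gamma\}_\gamma$ to live inside it. A secondary point to verify is that the enumeration order of $\lexenum_{\ordera_M}$ genuinely lists the "single nonzero annotation at position $j$" words in increasing $j$ (with, for fixed $j$, increasing value $i$), interleaved with many all-zero-contribution words that $s$ maps to $\epsilon$; this is immediate from the definition of $\order$ on $B_M^n$ (most significant letter to the right) since a word with nonzero entry only at position $j$ is, among such words, ordered by its value at $j$, and all words with nonzero support $\subseteq\{1,\dots,j\}$ precede any word with a nonzero entry at some position $>j$. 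Concatenating $s$'s outputs along this order therefore yields precisely $f_T(u)$, completing the argument that $\SEQ\subseteq\LEX_1$.
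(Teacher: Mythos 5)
Your construction is essentially the paper's: enumerate annotations over $\{0,\dots,M\}$ so that the words with a single nonzero entry appear in increasing position order (and, for a fixed position, increasing value), and let the simple transduction emit the $i$-th letter of $\out(q_{j-1},u[j])$ on the annotation that is nonzero exactly at position $j$ with value $i$. The only real difference is cosmetic: the paper puts the run of $A$ into the annotation alphabet ($B=Q\times\{\bot,1,\dots,M\}$) and declares an annotation valid only if its $Q$-component spells out the accepting run, whereas you keep $B=\{0,\dots,M\}$ and let the automata for the languages $L_\gamma$ recompute the run of the DFA $A$ on the $\Sigma$-projection. Both work, and your ordering analysis (nonzero support $\subseteq\{1,\dots,j\}$ precedes any word with a nonzero entry beyond $j$, since the most significant letter is rightmost) is correct.

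There is, however, one concrete error in your domain bookkeeping. You assert that every word produced by $\lexenum_{\ordera_M}$ has at most one nonzero coordinate, and your ``clean fix'' defines $\dom(s)$ as the annotated words $u\otimes b$ with $u\in L(A)$ \emph{and} $b$ having at most one nonzero coordinate. This is false: $\lexenum_{\ordera_M}(u)$ enumerates \emph{all} $(M{+}1)^{|u|}$ annotations, almost all of which have many nonzero entries. Since $u\in\dom(\maplex_{\ordera_M}\ s)$ requires \emph{every} enumerated annotation to lie in $\dom(s)$, your proposed domain would make $\maplex_{\ordera_M}\ s$ undefined on every $u\in L(A)$ of length at least $2$ (e.g.\ the annotation $110\cdots0$ is enumerated but excluded). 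The correct choice --- which the paper makes, and which your earlier sentence about the all-zero annotation already gestures at --- is simply $\dom(s)=\{u\otimes b \mid u\in L(A)\}$: all annotations over words in $L(A)$ are defined (those that are not single-nonzero-in-range are sent to $\epsilon$), and every annotation over a word outside $L(A)$ is undefined. With that one correction the argument goes through.
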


\begin{example}[Domain restriction]\label{ex:domrestrict}Let $k\geq 0$.
Given $f:\Sigma^*\rightarrow \Gamma^*$ a transduction in $\LEX_k$ and $L\subseteq \Sigma^*$ a regular language, the transduction $f$ restricted to $L$, written $f_{|L}:u\mapsto f(u)$ if $u\in \dom(f)\cap L$ is in $\LEX_k$.
We show this inductively on $k$: it is clear for $f\in \LEX_0$. Assume $f=\maplex_{\lambda}\ g$ with $\lambda=(B,\order)$ and let $\pi_\Sigma:(\Sigma\times B)^*\rightarrow \Sigma^*$ be the natural projection morphism.
Then $f_{|L}=\maplex_\lambda\ g_{|\pi_\Sigma^{{-}1}(L)}$, which proves that $\LEX_k$ is closed under domain restriction.
\end{example}

\begin{example}[List of prefixes]
Let $\pref : \Sigma^*\rightarrow \Sigma^*$ such that $\pref(u) =
v_1v_2\dots v_k$ where each $v_1,\dots,v_k$ are successive prefixes of
$u$ of decreasing length. For example, $\pref(abcd) =
abcd.abc.ab.a$. We show that $\pref\in \LEX_1$. We take $\ordera=(B,\order)$ with $B =
\{0,1\}$ with $0\order 1$. Consider a fixed word $1^j$ for some $j$, and
some $i\geq 0$. Let $W_{i,j}$ be the set of words of length $i+j$ in the language
$1^*0^+1^j$. We have $|W_{i,j}| = i$ and all the words in $W_{i,j}$ are
lexicographically ordered as
$0^i1^j\order 10^{i{-}1}1^j\dots\order 1^{i{-}1}01^j$. Moreover,
lexicographically, all words in $W_{i,j}$ are smaller than those in
$W_{i,j+1}$. From this observation, we define the following simple
transduction: if the $B$-annotation of the input word $u$ is in $1^*0^+1^*$,
then the simple transduction produces the label of $u$ aligned with the
leftmost $0$, otherwise it outputs $\epsilon$. E.g., suppose the input word is
$abc$. The lexicographic enumeration together with the production
(above the arrows) is:
$$
000\xrightarrow{a} 100\xrightarrow{b} 010\xrightarrow{\epsilon} 110\xrightarrow{c}
001\xrightarrow{a} 101\xrightarrow{b} 011\xrightarrow{a} 111 \xrightarrow{\epsilon}
$$
More generally, let $L_\sigma =
\pp{\Sigma}{1}^*\pp{\sigma}{0}\pp{\Sigma}{0}^*\pp{\Sigma}{1}^*$ and 
$L_{\epsilon}$ the complement of $\bigcup_\sigma L_\sigma$. Then
$$
\pref\ =\ (\maplex_\ordera\ (L_\epsilon\ch \epsilon\ +\
\sum_{\sigma\in\Sigma}(L_\sigma\ch \sigma+L_{\epsilon}\ch \epsilon))
$$

Now, consider the transduction $\pref_\#$ which also adds a separator
between prefixes. 
For example $\pref_\#(abc) = abc\# ab\# c\#$. To account
for this separator, we take $\ordera'= (\{0,1,2\},<)$ with the natural
order. Note that the lexicographic successor of any annotation of the
form $2^i01^j$ is the annotation in $0^i1^{j+1}$. Also note that for
all $n$,  there are exactly
$n$ annotations of length $n$ in $2^*01^*$. The simple transduction we now
define produces the separator $\#$ on those annotations. On
annotations which contains only $0$ and $1$, it behaves as the simple
transduction of $\pref$. On all other annotations, it outputs $\epsilon$. 
Formally, let $L_\# = \pp{\Sigma}{2}^*\pp{\Sigma}{0}\pp{\Sigma}{1}^*$, and for all $\sigma$, $L_\sigma$ is
defined as before, and finally $L_\epsilon$ is the complement of the
union of all those languages. Then:
$$
\pref_\#\ =\ (\maplex_{\ordera'}\ (L_\#\ch\#\ +\ \sum_{\sigma\in\Sigma}
L_\sigma\ch\sigma\ +\ L_\epsilon\ch\epsilon)
$$
To conclude, we have shown that $\pref,\pref_\#\in\LEX_1$. 
\end{example}

\begin{example}[Subwords]
    Let $\sub : \Sigma^*\rightarrow \Sigma^*$ be the transduction
    which enumerates all the subwords of a word in lexicographic order
    (with rightmost significant bit). For example
   $\sub(abc) = a.b.ab.c.ac.bc.abc$.
   We show that $\sub \in \LEX_2$. We take $\ordera=(B,<)$, with $B=\{0,1\}$ and define the
following morphism $\era_0 : (\Sigma\times B)^*\rightarrow \Sigma^*$ by $\era_0(\sigma,0) = \epsilon$ and $\era_0(\sigma,1) =
\sigma$.     
    
    $$
    \sub\ =\ \maplex_\ordera\ \era_0
    $$
From Example~\ref{ex:morphisms}, morphisms are in $\LEX_1$,
so $\sub\in \LEX_2$.
\end{example}

\begin{example}[Square]\label{ex:square}
The transduction $\square$ has been defined in Ex.~\ref{ex:squareex}.
We show that
    $\square\in\LEX_2$. Let $\ordera=(B,<)$ with $B=\{0,1\}$ and let $f :
    (\Sigma\times B)^*\rightarrow (\Sigma\cup \underline{\Sigma})^*$
    such that for all $u\in \Sigma^*$ of length $n$, for all $1\leq
    i\leq n$, $f(u\otimes (0^{i{-}1}10^{n-i})) = \textsf{under}_i(u)$,
    and for $b\not\in 0^*10^*$, $f(u\otimes b)=\epsilon$.
    It holds that $\square = \maplex_\ordera\ f$, because
    $0^{i{-}1}10^{n-i}< 0^{j{-}1}10^{n-j}$ for all $i<j$.  
   It remains to show that $f\in \LEX_1$. It is because $f =
   \maplex_{\ordera}\ g$ for $g : (\Sigma\times B^2)^*\rightarrow
   \Sigma^*$ the following simple transduction: for all $u\otimes
   b_1\otimes b_2\in (\Sigma\times B^2)^*$, if $b_1\not\in 0^*10^*$ or
   $b_2\not\in 0^*10^*$,
   $g(u\otimes b_1\otimes b_2)=\epsilon$, otherwise let $i$ be the
   unique position at which  $1$ occurs in $b_1$ and $j$ the unique
   position at which a $1$ occurs in $b_2$. If $i=j$, then $g(u\otimes
   b_1\otimes b_2) = \underline{u[j]}$, otherwise $g(u\otimes
   b_1\otimes b_2) = u[j]$. Since those properties are regular, $g$ is
   a simple transduction. It is illustrated on Fig.~\ref{fig:square}.

\begin{figure*}[t]
$$
\underbrace{\mmm{a}{0}{0}\mmm{b}{0}{0}}_{\epsilon}\sep
\underbrace{\mmm{a}{0}{1}\mmm{b}{0}{0}}_{\epsilon}\sep
\underbrace{\mmm{a}{0}{0}\mmm{b}{0}{1}}_{\epsilon}\sep
\underbrace{\mmm{a}{0}{1}\mmm{b}{0}{1}}_{\epsilon}\sep
\underbrace{\mmm{a}{1}{0}\mmm{b}{0}{0}}_{\epsilon}\sep
\underbrace{\mmm{a}{1}{1}\mmm{b}{0}{0}}_{\underline{a}}\sep
\underbrace{\mmm{a}{1}{0}\mmm{b}{0}{1}}_{b}\sep
\underbrace{\mmm{a}{1}{1}\mmm{b}{0}{1}}_{\epsilon}\sep
\underbrace{\mmm{a}{0}{0}\mmm{b}{1}{0}}_{\epsilon}\sep
\underbrace{\mmm{a}{0}{1}\mmm{b}{1}{0}}_{a}\sep
\underbrace{\mmm{a}{0}{0}\mmm{b}{1}{1}}_{\underline{b}}\sep
\underbrace{\mmm{a}{0}{1}\mmm{b}{1}{1}}_{\epsilon}\sep
\underbrace{\mmm{a}{1}{0}\mmm{b}{1}{0}}_{\epsilon}\sep
\underbrace{\mmm{a}{1}{1}\mmm{b}{1}{0}}_{\epsilon}\sep
\underbrace{\mmm{a}{1}{0}\mmm{b}{1}{1}}_{\epsilon}\sep
\underbrace{\mmm{a}{1}{1}\mmm{b}{1}{1}}_{\epsilon}$$
   \caption{Equality $\square = (\maplex_\ordera\ \maplex_\ordera\ g)$ illustrated on input $ab$, with the results of applying $g$ underneath.\label{fig:square}}
   \end{figure*}
\end{example}

\section{Nested marble transducers}
\label{sec:nmt}

We introduce in this section a transducer model, called \emph{nested
  marble transducers}, and show in
Section~\ref{sec:equivalence} that the class of transductions it recognizes is
exactly the class of lexicographic transductions. Nested marble
transducers generalize marble transducers~\cite{DBLP:journals/actaC/EngelfrietHB99,DBLP:conf/mfcs/Doueneau-TabotF20}. A marble
transducer belongs to the family of transducers with an unbounded
number of pebbles (of finitely many colours), with the following
restriction: whenever it moves left, it has to drop a pebble, and
whenever it moves right, it has to lift a pebble. The term
\emph{marble} is meant to emphasize this restriction. 
A nested marble transducer of level $k\geq 1$ behaves like a marble transducer which can call, when reading the leftmarker $\vdash$, a nested marble transducer of level $k{-}1$. A nested marble transducer of level $0$ is what we call a simple transducer. It is just a DFA with an output function on its accepting states, so it realizes a transduction whose range is finite. 

\begin{definition}[Simple transducers]\label{def:simpletrans}
Let $\Sigma,\Gamma$ be finite sets (not necessarily disjoint). 
A $(\Sigma,\Gamma)$-simple transducer is a pair $T = (A,\out)$ where $A = (Q, q_0, F, \delta:Q\times (\Sigma\cup\{\vdash,\dashv\})\partialf Q)$ is a DFA and $\out : F\rightarrow \Gamma^{\le 1}$ is a total function. 
\end{definition}

We define two semantics for $T$, an operational semantics $f_T^{op} : Q\times \Sigma^*\partialf \Gamma^*\times F$ which takes as input a word and also a state from which the computation starts, and returns a word and the state reached when the computation ends, if it is accepting. Otherwise $f_T^{op}$ is not defined. Formally, 
$f_T^{op}(q,u)$ is defined for all $u$ such that $q\xrightarrow{{\vdash} u{\dashv}}_A q_f$ for some $q_f\in F$, by $f_T^{op}(q,u) = (\out(q_f),q_f)$.

From the operational semantics, we also define the transduction $f_T : \Sigma^*\partialf \Gamma^*$ recognized by $T$ as the transduction which applies the operational semantics from the initial state, and projects away the final state, \ie $f_T(u) = \pi_1(f_T^{op}(q_0,u))$, where $\pi_1$ is the projection on the first component.

\begin{definition}[Nested marble transducers from $\Sigma$ to $\Gamma$]
A $(0,\Sigma,\Gamma)$-nested marble transducer is a $(\Sigma,\Gamma)$-simple transducer. For $k\geq 1$, a $(k,\Sigma,\Gamma)$-nested marble transducer is a tuple $T =(\Sigma,\Gamma,C,c_0,Q_T,q_0,F_T,\delta,\deltac,\delta_r,\out,T')$ where:
\begin{itemize}
\item $C$ is a finite set of (marble) colors, $c_0$ is an initial color;
\item $Q_T$ is a finite set of states, $q_0$ is an initial state, and $F_T$ a set of accepting states;
\item $T'$ is a $(k{-}1,\Sigma\times C,\Gamma)$-nested marble transducer with set of states $Q_{T'}$ and set of accepting states $F_{T'}$;
\item $\delta: Q_T\times (\Sigma\cup\{\dashv\})\times C\rightarrow (C\cup \{\perp\})\times Q_T$ is a transition function;
\item $\deltac:Q_T\times C\rightarrow Q_{T'}$ is a call function;
\item $\deltar : Q_T\times C\times F_{T'}\rightarrow Q_T$ is a return function;
\item $\out : \dom(\delta)\rightarrow \Gamma^*$ is an output function.
\end{itemize}
We use $(k,\Sigma,\Gamma)$-\NMT (or just $k$-\NMT if $\Sigma$,$\Gamma$
are clear from the context) as a shortcut for
$(k,\Sigma,\Gamma)$-nested marble transducer. $T'$ is the
\emph{assistant} \NMT and $k$ the \emph{level} of $T$. Finally, we often say marble instead of marble colour. 
\end{definition}

We now define the semantics informally. The reading head of $T$ is initially placed on the rightmost position labeled $\dashv$, marked with a marble of color $c_0$, in state $q_0$. Transitions work as follows: suppose the current state is $q$ and the reading head is on some position $i$ labeled by $\sigma\in\Sigma\cup\{\dashv,\vdash\}$ and by some marble of color $c\in C$. Whatever transition in $\delta$ can be applied, some output word is produced by $T$ according to $\out$. Then there are three cases:
\begin{enumerate}
    \item if $\sigma\in \Sigma\cup\{\dashv\}$ and $\delta(q, \sigma, c) = (c', q')$ where $c'\in C$, then the reading head moves to position $(i{-}1)$ in state $q'$ and a marble of color $c'$ is placed (on position $i{-}1$); 
    \item if $\sigma \in\Sigma$ and $\delta(q,\sigma, c) = (\perp,q')$, then $T$ lifts the current marble and moves its reading head to position $i+1$ in state $q'$;
    \item if $\sigma =\ \vdash$ then $T$ calls $T'$ initialized with state $\deltac(q,c)$, on the input word annotated with marbles. When $T'$ finishes its computation in some accepting state $q'$, $T$ lifts marble $c$, moves its reading head to position $1$ and continues its computation from state $\deltar(q,c,q')$.
\end{enumerate}

The (operational) semantics of $T$ is a function $f_T^{op} : Q_T\times \Sigma^*\rightarrow \Gamma^*\times F_T$, that we define inductively. The case $k=0$ has been defined after Definition~\ref{def:simpletrans}. If $k\geq 1$ and $T = (\Sigma,\Gamma,C,c_0,Q_T,q_0,F_T,\delta,\deltac,\deltar,\out,T')$ then we assume $f_{T'}^{op} : Q_{T'}\times (\Sigma\times C)^*\rightarrow \Gamma^*\times F_{T'}$ to be defined inductively. Let us now define $f_T^{op}$.
A \emph{configuration} of $T$ over a word $u\in\Sigma^*$ is a triple $(q,i,v)$ such that $q$ is the current state, $i\in \pos(u)\cup \{0,n+1\}$ is the current position (where $n=|u|$), and $v\in C^*$ is an annotation of the suffix $({\vdash} u{\dashv})[i{:}n{+}1]$. We define a labeled successor relation $(q,i,cv)\xrightarrow{w}_T (q',i',v')$, between any two configurations where $c\in C$, labeled by $w\in\Gamma^*$, whenever one of the following cases hold:
\begin{enumerate}
    \item $1\leq i\leq n+1$, $\delta(q,c)=(c', q')$, $i' = i{-}1$, $v' = c'cv$ and $w = \out(q,c)$;
    \item $1\leq i\leq n$, $\delta(q,c) = (\perp, q')$, $i' = i+1$, $v' = v$ and $w = \out(q,c)$;
    \item $i=0$, $f_{T'}^{op}(\deltac(q,c),({\vdash}u{\dashv})\otimes cv) = (w,p)$, $q'=\deltar(q,c,p)$, $i'=1$ and $v' =v$.
\end{enumerate}

The function $f_T^{op}:Q_T\times \Sigma^*\partialf \Gamma^*\times F_T$ recognized by $T$ is defined, for all $q\in Q_T$ and all $u\in\Sigma^*$ such that there exists a sequence of configurations over $u$:
$$
\nu_0 = (q, n+1, c_0) \xrightarrow{w_1}_T \nu_1\xrightarrow{w_2}_T \nu_3\dots \nu_{k{-}1}\xrightarrow{w_{k}}_T \nu_{k}
$$
where the state $q_f$ of $\nu_{k}$ is accepting (\ie in $F$) and the states of configurations $\nu_i$, $i< k$, are non-accepting, by $f_T^{op}(q,u) = (w_1\dots w_k, q_f)$.

The transduction $f_T : \Sigma^*\partialf \Gamma^*$ recognized
by $T$ is defined as the projection of $f_T^{op}(q_0,u)$ on $\Sigma$
and $\Gamma$, \ie if $f_T^{op}(q_0,u) = (v,q_f)$ then $f_T(u)=v$. We
denote by $\NMT$ the class of transductions recognizable by some
$(k,\Sigma,\Gamma){-}\NMT$. The local size of an $\NMT$ is the number of its
transitions, states and marbles. Its size is its local size
plus the size the $\NMT$ of lower level it calls. We define the 
number of (resp. local number of) states/marbles/transitions similarly.

The following result states that \NMT are closed under post-composition with $\SEQ$.
To prove it, we strongly rely on the ability to pass state information through
mappings $\deltac$ and $\deltar$ to adapt a classical product construction of automata.

\begin{restatable}[$\SEQ\circ \NMT \subseteq \NMT$]{lemma}{closureseq}
\label{lem:closureseq}
    For all $k\geq 0$, all $(k,\Sigma,\Gamma)$-\NMT $T$ and all sequential transducer $S$ over $\Gamma,\Lambda$, one can construct, in polynomial time, a $(\text{max}(k,1),\Sigma,\Lambda)$-\NMT $T'$ such that $f_{T'} = f_S\circ f_T$. 
\end{restatable}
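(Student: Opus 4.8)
The plan is to prove the statement by induction on the level $k$, mirroring the inductive structure of nested marble transducers, and in each case performing a synchronized product construction between the sequential transducer $S$ and the top level of $T$. The base case $k=0$ is where $T$ is a $(\Sigma,\Gamma)$-simple transducer $(A,\out)$: here $f_T$ has finite range, so one could in principle just hard-code $f_S\circ f_T$ as a new simple transducer, but since the conclusion asks for a $(\max(0,1),\Sigma,\Lambda)=(1,\Sigma,\Lambda)$-\NMT, I would instead view the simple transducer as a trivial level-$1$ \NMT (one that never drops a marble and immediately calls an assistant which only records the output letter) and handle it as an instance of the inductive step, or more directly build a level-$1$ \NMT that reads the input left-to-right simulating $A$, and on $\dashv$ calls an assistant that outputs $f_S(\out(q_f))$ where $q_f$ is the reached accepting state — passing $q_f$ through $\deltac$.

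For the inductive step, let $k\geq 1$ and $T =(\Sigma,\Gamma,C,c_0,Q_T,q_0,F_T,\delta,\deltac,\delta_r,\out,T')$, and let $S = (A_S=(P,p_0,F_S,\delta_S),\out_S)$ be the sequential transducer over $\Gamma,\Lambda$. The key idea is that $S$ reads the output of $T$ from left to right, and the output of $T$ is produced in the order of $T$'s run; so I would thread the current state of $A_S$ through the configurations of $T$. Concretely, I build $T'' = (\Sigma,\Lambda,C,c_0,Q_T\times P,(q_0,p_0),F_T\times P,\delta'',\deltac'',\delta_r'',\out'',T''')$. A transition of $T$ that outputs $w = \out(q,c)\in\Gamma^*$ and goes from $q$ to $q'$ (dropping or lifting a marble) is replaced by a transition from $(q,p)$ to $(q',p')$ where $p\xrightarrow{w}_{A_S} p'$, outputting $\out_S$ applied along that run of $A_S$ on $w$; since $w$ is a fixed word depending only on $(q,\sigma,c)$, both $p'$ and the emitted $\Lambda$-word are determined, so $\out''$ and $\delta''$ are well defined. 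The subtle part is the call to the assistant: when $T$ calls $T'$ at $\vdash$, the assistant produces some output word, and $A_S$ must continue reading that word \emph{from the state it had reached so far}, and then $T$ resumes in a state depending on the accepting state $T'$ reached. To handle this, I apply the induction hypothesis to $T'$ — but $T'$ alone is not post-composed with $S$; rather, I need an assistant that simulates $T'$ \emph{while also running $A_S$ on $T'$'s output and returning the reached $A_S$-state}. This is exactly the reason the \NMT formalism carries state information out of calls via $\deltar$: I apply the induction hypothesis to the $(k{-}1,\Sigma\times C,\Gamma)$-\NMT $T'$ to get a $(\max(k{-}1,1),\Sigma\times C,\Lambda)$-\NMT computing $f_S\circ f_{T'}$, but I must strengthen it so that its \emph{accepting states record the $A_S$-state reached after consuming the whole output}; inspecting the product construction, the accepting states of the resulting transducer are naturally of the form $(\text{accepting state of }T', A_S\text{-state})$, so this information is available. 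Then $\deltac''((q,p),c) = $ (call state of $T'$ threaded with $p$), and $\delta_r''((q,p),c,(f',p'')) = (\delta_r(q,c,f'),p'')$, with $\out''$ on the call transition emitting nothing new (all $S$-output was produced inside the assistant).

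The main obstacle is getting the bookkeeping of the assistant exactly right: the induction hypothesis as stated only gives a \NMT computing $f_S\circ f_{T'}$, not one that also exports the final $A_S$-state, so strictly I should either (i) state and prove a slightly stronger inductive invariant — "one can construct a \NMT $T''$ with $F_{T''}\subseteq F_{\text{old}}\times P$ such that $f_{T''}(u)$ is the $S$-image of $f_{T}(u)$ and the $P$-component of the reached accepting state is the $A_S$-state after reading all of $f_T(u)$" — or (ii) observe that the product construction I describe already yields exactly this, and simply carry the stronger statement through the induction. I would go with option (i): restate the lemma with this strengthened conclusion (which is anyway what the recursion needs), prove it by induction on $k$ as above, and note that the level of the output is $\max(k,1)$ because the base case already forces a jump from level $0$ to level $1$, while for $k\geq 1$ the top level of $T$ is preserved and, by induction, the assistant has level $\max(k{-}1,1)\leq \max(k,1)-$consistent$= k-1$ (for $k\geq 2$) or level $1 = \max(k{-}1,1)$ when $k=1$, so $T''$ has level $\max(k,1)$ as required. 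Polynomial running time is clear since each level multiplies the state space by $|P|$ and adds no marbles, and the construction at each level is a straightforward product.
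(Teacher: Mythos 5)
Your construction is essentially the paper's: a product $T\otimes S$ whose top-level states are pairs $(q,s)$, where the $S$-state is threaded into the recursive call via $\deltac$ and recovered via $\deltar$, and your ``strengthened inductive invariant'' of option (i) is exactly the paper's invariant $\bigl(f_T^{op}(q,u)=(v,q')\wedge f_S^{op}(s,v)=(w,s')\bigr)\Rightarrow f_{T\otimes S}^{op}((q,s),u)=(w,(q',s'))$, which the operational semantics already accommodates since it takes a start state and returns the reached accepting state. The one point to tighten is the level count at $k=1$: as written, your assistant has level $\max(k{-}1,1)=1$, which would make $T''$ a level-$2$ transducer rather than level $\max(1,1)=1$; the fix is to keep the assistant a (silent) simple transducer and let the top level itself emit $\out_S(s,\out_{T'}(q_f))$ on its transition following the return, which is possible because both $s$ and the assistant's accepting state $q_f$ are available there --- a point the paper's own write-up also glosses over.
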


\subsubsection{State-passing free nested marble transducers} In the definition of \NMT, there are two explicit forms of information-passing: state information can be passed from level $k$ to level $k{-}1$ through the function $\deltac$, and state information can be passed from level $k{-}1$ to level $k$ via the function $\deltar$. 
In addition, there is an implicit one through the domain of assistant transducers: indeed, 
the definition of the semantics requires that all calls to assistant transducers do accept,
hence  the assistant transducer can influence the master transducer by rejecting a word.
In this subsection, we prove that information-passing can be removed while preserving the computational power of $k$-\NMT, however at the cost of increasing the size by a tower of exponentials of height $k$.
While state-passing was useful to prove the closure under post-composition
with sequential transductions (Lemma~\ref{lem:closureseq}), it will be
more convenient to consider state-passing free nested marble
transducers in the sequel, in particular to prove that \NMT
recognize lexicographic transductions (Section~\ref{sec:equivalence}).
\begin{definition}
A \emph{state-passing free $(k,\Sigma,\Gamma)$-nested marble transducer} ($(k,\Sigma,\Gamma)$-\NMTSPF for short), is either a simple transducer if $k=0$, or, if $k>0$, a 
$(k,\Sigma,\Gamma)$-\NMT $T =(\Sigma,\Gamma,C,c_0,Q_T,q_0,F,\delta,\deltac,\deltar,\out,T')$ such that 
\begin{enumerate}
    \item $T'$ is a $(k{-}1,\Sigma\times C,\Gamma)$-\NMTSPF with set of states $Q_{T'}$ and initial state $q'_0$
    \item $\deltac(q,c)=q'_0$ for all $q\in Q_T$ and $c\in C$
    \item $\deltar(q, c,q') = q$ for all $q\in Q_T,c\in C$ and $q'\in Q_{T'}$
    \item calls to the assistant transducer $T'$ always accept.
\end{enumerate}
Since the functions $\deltac$ and $\deltar$ play no role, we often omit them in the tuple denoting $T$.
\end{definition}

\begin{restatable}[State-passing removal]{theorem}{spremoval}
\label{thm:sp-removal}
    For all $k$-\NMT $T$, there exists an equivalent $k$-\NMTSPF $T'$ whose size is $k{-}\textsc{EXP}$ in that of $T$.
    This non-elementary blow-up is unavoidable.
\end{restatable}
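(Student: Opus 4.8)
The plan is to remove the three information-passing mechanisms one level at a time, by induction on $k$, so that the blow-up at each level is a single exponential and the total blow-up is a tower of height $k$. First I would treat the implicit channel (assistant transducers influencing the master by rejecting): for this, note that the domain of a level-$(k{-}1)$ \NMT is a regular language of annotated words (this follows from the domain-regularity argument already used for \LEX in Lemma~\ref{lem:regdomlex}, or directly from a pumping/crossing-sequence argument on marble transducers). I would precompute, for each state $q$ of $T'$, a DFA recognizing the set of annotated suffixes on which $T'$ started in $q$ accepts, and have the master transducer $T$ carry the state of this DFA while it drops and lifts marbles. This requires the master to know, when it drops a marble at position $i$, whether the remaining annotated word will be accepted — which is exactly a regular property of the suffix read so far together with the future, so after a determinization it costs one exponential in the size of $T'$ and lets us assume all calls accept.

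Next I would remove the upward channel $\deltar$: currently the master's next state depends on the final accepting state $p\in F_{T'}$ returned by the callee. After the previous step we may assume every call accepts, so the value of $p$ is a function of the annotated word passed to $T'$; moreover $p$ is determined by a regular property of that annotated word (run $T'$ as a two-way-like device and take its "output state" — this is again a regular function of the annotated input, computable by a DFA of size exponential in $|T'|$). So the master can again carry, as part of its own state, the state of a DFA that tracks which final state the call will return; then $\deltar$ can be made trivial ($\deltar(q,c,q')=q$). Finally, the downward channel $\deltac$ (the master choosing the callee's start state) is removed symmetrically: since $T'$ is now state-passing free by induction, its behaviour from start state $q'_0$ is fixed, so any dependence of the callee's computation on $\deltac(q,c)$ can be simulated by having $T$ instead write an additional finite annotation (a new marble colour component) onto the first position before the call, encoding $q$ and $c$, and modifying $T'$ to branch on that annotation at its start. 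Throughout, I would be careful that the new marble colours and states introduced at level $k$ are polynomial in the sizes of the level-$k$ data and the (already reduced, hence $k{-}1$-EXP) size of $T'$, so the recurrence $S_k \le 2^{\mathrm{poly}(S_{k-1}, |T|)}$ yields the claimed $k$-$\textsc{EXP}$ bound.

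For the lower bound — that a non-elementary blow-up is unavoidable — I would exhibit, for each $k$, a $k$-\NMT of polynomial size whose equivalent $k$-\NMTSPF must have size a tower of height roughly $k$. The natural source is the tight connection between $k$-nested devices and $k$-fold iterated exponentials: I would use the state-passing mechanism to recognize (as a domain restriction, or to produce output of a controlled shape only on) inputs that encode $k$-times iterated exponential counters, in the style of the classical lower bounds for pebble/marble transducers and for $k$-lexicographic words. Concretely, one builds a family of languages $L_k$ recognized by a small $k$-\NMT using the ability of the assistant to reject, such that any state-passing-free device for the same transduction must explicitly store a counter value up to $\mathrm{tower}(k)$ in its states, forcing the tower-height blow-up; Corollary~\ref{coro:stricthierarch} and the strictness results of~\cite{DBLP:journals/ita/Barany08} guarantee that these levels are genuinely distinct and hence cannot be collapsed by a smaller device.

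The main obstacle I anticipate is the bookkeeping in the upper-bound induction: one must make precise that "the final state returned by a level-$(k{-}1)$ call is a regular function of the annotated input" and, crucially, that the master transducer can track this regular function \emph{online} as it lays down marbles, since the annotated word seen by $T'$ depends on marbles the master has not yet placed. Handling this cleanly likely requires running the tracking DFA on the reversed annotated word, or equivalently precomputing, for every pair (DFA-state, position-type), the induced transformation monoid element, and threading monoid elements rather than states through the master's control — a standard but delicate construction whose size analysis is where the single exponential per level is incurred and must be verified not to hide a second exponential.
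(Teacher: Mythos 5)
Your upper-bound plan is essentially the paper's proof: the paper also removes the downward channel $\deltac$ by encoding the caller's state and marble into the marble alphabet so the assistant can read it off the input, and removes the upward channel $\deltar$ (together with the implicit rejection channel) by induction on $k$, using the regularity of the languages ``accepted by the assistant with a run ending in $q'_f$'' and threading the states of \emph{backward-deterministic} automata for these languages through the marbles as the master lays them down right-to-left --- precisely the reversed-tracking device you anticipate in your last paragraph. One small correction: you cannot justify regularity of the assistant's domain via Lemma~\ref{lem:regdomlex}, which concerns \LEX and whose applicability to \NMT is only established \emph{after} state-passing removal; the paper instead proves this separately (Lemma~\ref{lem:regaut}) for nested marble automata, and that lemma is where the per-level exponential is really accounted for. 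Your recurrence $S_k\le 2^{\mathrm{poly}(S_{k-1},|T|)}$ then gives the $k$-EXP bound as claimed.

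The genuine gap is in the lower bound. Asserting that a state-passing-free device ``must explicitly store a counter value up to $\mathrm{tower}(k)$ in its states'' is exactly what needs proving, and neither iterated counters nor Corollary~\ref{coro:stricthierarch} supplies a mechanism for bounding the size of an \NMTSPF from below (strictness of the expressiveness hierarchy says nothing about succinctness at a fixed level). The missing idea, which is the crux of the paper's argument, is that the domain of a state-passing-free \NMT is recognizable by a finite automaton of size only \emph{singly} exponential in its local size: since every call accepts, the domain does not depend on the assistant transducers at all, so it is the language of a plain marble automaton. Combined with the unavoidable $k$-EXP blow-up for converting nested marble automata (with state-passing) into finite automata --- itself obtained by encoding prenex \FO sentences of quantifier rank $r$ into exponential-size level-$r$ nested marble automata and invoking the Frick--Grohe non-elementary lower bound for \FO-to-DFA translation --- an elementary state-passing removal would yield an elementary \FO-to-DFA translation, a contradiction. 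Without the single-exponential domain bound for \NMTSPF, your construction has no handle with which to force the tower.
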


Before proving this result, we show a property on domains of \NMT.
A \emph{nested marble automaton} $A$ of level $k$ is a nested marble
transducer $T$ of level $k$ whose output function $\out$ is the
constant function which always returns $\epsilon$. The language of $A$
is defined as $L(A) = \dom(f_T)$. 

\begin{restatable}{lemma}{regaut}\label{lem:regaut}
    A language is recognizable by a nested marble automaton of level
    $k$ and $n$ states iff it is regular iff it is recognizable by a finite automaton
    of size in $k$-EXP($n$). This non-elementary blow-up is unavoidable.
\end{restatable}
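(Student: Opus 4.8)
**Proof plan for Lemma~\ref{lem:regaut} (nested marble automata recognize exactly the regular languages, with a $k$-fold exponential state blow-up that is unavoidable).**

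The plan is to prove the three equivalences separately. For ``nested marble automaton of level $k$ $\Rightarrow$ regular'', I would proceed by induction on $k$. The base case $k=0$ is immediate since a level-$0$ marble automaton is just a DFA. For the inductive step, consider a level-$k$ nested marble automaton $A$ with assistant $A'$ of level $k{-}1$. By the induction hypothesis, for each pair of states $(q'_{\mathrm{init}},q'_{\mathrm{fin}})$ of $A'$, the set of annotated words $w\in(\Sigma\times C)^*$ on which $A'$ has an accepting run from $q'_{\mathrm{init}}$ to $q'_{\mathrm{fin}}$ is regular; in particular whether a call to $A'$ accepts from a given initial state depends only on a regular (hence finite-monoid) abstraction of the annotated suffix. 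The key classical idea for marble automata is that, although the automaton drops and lifts an unbounded number of marbles, it behaves like a two-way automaton whose ``stack'' of marbles is forced to mirror the head position, so one can summarize its behaviour by finitely much information. Concretely, I would adapt the standard crossing-sequence / transition-monoid argument for marble transducers (as in \cite{DBLP:journals/actaC/EngelfrietHB99,DBLP:conf/mfcs/Doueneau-TabotF20}): for each position, the relevant information is the function mapping (incoming state, colour just placed) to (outgoing state), together with, for the leftmarker, the return behaviour of the call to $A'$, which by the IH is computable from a finite monoid. Composing these along the word gives a finite-state device, so $L(A)$ is regular. The direction ``regular $\Rightarrow$ recognizable by a level-$k$ nested marble automaton'' is trivial since a DFA is already a level-$0$, hence a fortiori level-$k$, nested marble automaton (let the higher levels do nothing).

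For the quantitative ``regular $\Rightarrow$ recognizable by a finite automaton of size $k$-$\textsc{EXP}(n)$'' part, I would make the induction in the previous paragraph effective and track sizes. If $A'$ on $n'$ states is equivalent to a finite automaton of size $(k{-}1)$-$\textsc{EXP}(n')$, then the finite-monoid abstraction needed by $A$ has size exponential in that, and the crossing-sequence construction for the marble layer at level $k$ adds one more exponential (the set of state-to-state summary functions over the colour set), giving a DFA of size $k$-$\textsc{EXP}(n)$ overall. Since in a level-$k$ nested marble automaton the total state count $n$ bounds $n'$, one gets a clean $k$-$\textsc{EXP}(n)$ bound.

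The main obstacle — and the part I expect to be genuinely delicate — is the \emph{lower bound}: showing the non-elementary blow-up is unavoidable. The plan here is to exhibit, for each $k$, a language $L_k$ recognized by a level-$k$ nested marble automaton with $O(1)$ (or polynomially many) states but requiring a DFA of size at least $k$-$\textsc{EXP}$. The natural candidate is a ``yardstick'' family: a level-$k$ nested marble automaton can, via its $k$ nested marble layers, count up to a $k$-fold exponential, so one takes $L_k$ to encode an arithmetic condition (e.g.\ a variant of the classical $\Sigma^* a \Sigma^{m} b \Sigma^*$-type languages, or a well-formed-bracketing/counter-encoding language) whose minimal DFA provably has $k$-$\textsc{EXP}$ states, by a fooling-set or Myhill–Nerode argument. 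One must then carefully check that the nested marble model is strong enough to recognize $L_k$ with few states: the marble discipline (drop when moving left, lift when moving right) is restrictive, so the encoding has to be chosen so that each level performs a single left-to-right or right-to-left sweep while the assistant handles the next counter digit. Getting the encoding to simultaneously (i) be recognizable by a small level-$k$ nested marble automaton and (ii) force a $k$-$\textsc{EXP}$ minimal DFA is the crux; I would look to reuse the blow-up examples already known for iterated pebble/marble constructions and for the translation from nested/higher-order devices to finite automata, and phrase the final bound via an explicit Myhill–Nerode index computation on $L_k$.
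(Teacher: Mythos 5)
Your upper bound follows essentially the same route as the paper: induction on $k$, with the inductive step summarizing the assistant automaton's call/return behaviour by finite-state information so that the level-$k$ device collapses to a level-$1$ marble automaton, whose regularity (with an exponential blow-up in states and marbles) is then quoted from the literature. The paper implements the summarization slightly differently from your transition-monoid sketch --- it takes, for each pair $(q_c,q_r)$ of states of the assistant, a \emph{backward-deterministic} automaton for $L(A'_{q_c,q_r})$ and stores its states inside enriched marbles, so that upon reaching ${\vdash}$ the top-level automaton already knows the unique return state and can bypass the call --- but this is the same idea and yields the same $k$-EXP size accounting. This part of your proposal is fine.

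The genuine gap is the lower bound, which you correctly identify as the crux but do not resolve. Your plan requires a family $L_k$ that is \emph{simultaneously} (i) recognizable by a level-$k$ nested marble automaton with few states and (ii) of Myhill--Nerode index $k$-EXP, and you give no candidate; condition (i) is exactly where the marble discipline bites, and ``reuse known blow-up examples'' is not a proof. The paper avoids constructing such a family by hand: it compiles prenex \FO{} sentences of quantifier rank $r$ into nested marble automata of level $r$ with only an exponential size increase --- one marble level per quantifier, where a universal quantifier is handled by enumerating positions with a distinguished marble and calling the assistant, an existential quantifier likewise but stopping at the first success, with state-passing used to report the assistant's verdict back to the caller --- and then invokes the known non-elementary (tower of height $r$) lower bound for translating such sentences into finite automata~\cite{DBLP:journals/apal/FrickG04}. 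That reduction delivers unavoidability without any bespoke fooling-set computation, and is the piece your proposal is missing.
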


\begin{proof}[Proof sketch]
    It is clear that any regular language is the domain of some simple
    transduction. Conversely, 
    let $A$ be a nested marble automaton of level $k$. If $k=0$, it
    is obvious.
    
    If $k\ge 1$, let $A =
    (\Sigma,C,c_0,Q_A,q_0,F,\delta,\deltac,\deltar,A')$ where
    $A'$ has level $k{-}1$ (in the tuple, we have omitted the output
    alphabet and function, since they play no role). 
    By induction hypothesis, for all pairs $(q_c,q_r)$ of states of $A'$,
    the language of words accepted by $A'$ by a run starting in
    $q_c$ and ending in $q_r$ is regular, and can be described by some finite automaton
    $D_{q_c,q_r}$ of size in $(k{-}1)$-EXP($n$).

    We turn $A$ into a marble
    automaton $B$ (of level $1$) such that $L(B) = L(A)$. 
    The marbles of $B$ are enriched with information on the states
    of automata $D_{q_c,q_r}$, for all pairs $(q_c,q_r)$, ensuring
    that $B$ knows the state of all the automata $D_{q_c,q_r}$ after reading the
    current suffix. Thanks to this information, $B$
    can simulate $A$ and whenever $A$ calls $A'$ with some initial
    state $q_c$, instead, $B$ knows, if it exists,
    the state $q_r$ of $A'$ such that
    the current marked input is accepted by $A'_{q_c,q_r}$. If such a
    state exists, then it is unique as $A'$ is deterministic, and $B$
    can bypass calling $A'$ and directly apply its return
    transition. If such a state does not exist, $B$ stops. 
    
    The result follows as marble automata are
    known to recognize regular languages. The conversion of a marble
    automaton into a finite automaton is exponential both in the
    number of states and number of marbles (see
    \eg\cite{DBLP:journals/actaC/EngelfrietHB99,DBLP:conf/mfcs/Doueneau-TabotF20},
    as well as Theorem 5.4 of~\cite{DBLP:journals/iandc/KamimuraS81} for a
    detailed construction), yielding a tower of $k$-exponential inductively.  
    
    It can be shown that this non-elementary blowup is not avoidable, because first-order sentences on word structures (with one
    successor) with quantifier rank $r$, can be converted in an
    exponentially bigger nested marble automaton, while
    it is known that such sentences can be converted into an
    equivalent finite automaton of unavoidable size a tower of exponential of
    height $r$~\cite{DBLP:journals/apal/FrickG04}. Details can be found in Appendix~\ref{app:lemregaut}.
\end{proof}


  \begin{corollary}\label{coro:regdomnmt}
      Transductions recognized by nested marble transducers have regular domains.
  \end{corollary}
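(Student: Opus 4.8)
The plan is to reduce this immediately to Lemma~\ref{lem:regaut}. Given a $(k,\Sigma,\Gamma)$-\NMT $T$, the key observation is that membership of an input $u$ in $\dom(f_T)$ depends only on the control structure of $T$ — its states, transition function $\delta$, call function $\deltac$, return function $\deltar$, and, recursively, the domains of its assistant transducers — and not at all on the output functions $\out$. Indeed, inspecting the inductive definition of the labelled successor relation $\xrightarrow{w}_T$ and of $f_T^{op}$, the word component $w$ produced along a run is never consulted to decide whether a configuration sequence exists or whether it reaches an accepting configuration; only the state and position components matter, and at level $0$ acceptance is just acceptance of the underlying DFA.

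Concretely, I would define from $T$ a nested marble automaton $A$ of level $k$ by keeping $\Sigma$, the marble colours, the states, the accepting states, $\delta$, $\deltac$, $\deltar$, and the assistant transducer (recursively turned into a nested marble automaton in the same way), and replacing every output function by the constant function returning $\epsilon$. By the observation above, a run of $T$ on $u$ reaches an accepting configuration, with all nested calls accepting, if and only if the corresponding run of $A$ does, so $L(A) = \dom(f_T)$. Lemma~\ref{lem:regaut} then gives that $L(A)$ is regular, which is exactly the claim.

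The only point requiring (mild) care is the recursion through assistant transducers: one must check that ``forgetting outputs'' commutes with taking domains at every level, i.e.\ that $\dom(f_{T'})$ for the assistant coincides with the language of the assistant automaton obtained by the same transformation. This is immediate from the inductive structure of $f_T^{op}$, so there is no real obstacle beyond this bookkeeping; the corollary is a direct consequence of Lemma~\ref{lem:regaut}.
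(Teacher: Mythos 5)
Your proposal is correct and matches the paper's intended argument exactly: the paper defines a nested marble automaton as an \NMT whose output function is constantly $\epsilon$, with $L(A)=\dom(f_T)$, so the corollary is immediate from Lemma~\ref{lem:regaut} once one notes (as you do) that the domain is insensitive to the output functions at every level. The paper leaves this as an unproved corollary, and your write-up supplies precisely the bookkeeping it omits.
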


We are now ready to sketch the proof of Theorem~\ref{thm:sp-removal} (see Appendix~\ref{app:thm-sp-removal}
for details).
\begin{proof}[Proof sketch of Theorem~\ref{thm:sp-removal}]
There are two kinds of state-passing, through functions $\deltac$ and $\deltar$.
We deal with them separately. First, removal of $\deltac$ can be done by enriching marbles with the current state,
so as to pass this information to the assistant transducer.
Removal of $\deltar$ is more involved, but can be done by induction using a technique similar
to the one used to prove Lemma~\ref{lem:regaut}. By induction hypothesis, the assistant transducer
can be replaced by an equivalent state-passing free \NMT. In addition, its domain is regular thanks to Lemma~\ref{lem:regaut}. Hence, one can enrich the marbles so as to precompute the final state
reached by the assistant transducer, and in turn simulate the function $\deltar$. This also allows to ensures that 
all calls to the assistant transducer do accept.

Last, we justify the fact that the non-elementary blow-up is unavoidable. It is because the domain of any state-passing free \NMT $S$ is recognizable by a finite automaton of exponential size. Indeed, the calls to assistant transducers always terminate, so the domain of $S$
does not depend on assistant transducers, hence can be described by a marble automaton, hence by a finite
automaton of size exponential in the number of local states and marbles of $S$. Thus, the existence of an elementary construction for state-passing removal
would contradict the non-elementary blow-up stated in Lemma~\ref{lem:regaut}.
\end{proof}

\section{Nested marble transducers capture the class of
  lexicographic transductions}
  \label{sec:equivalence}

In this section, we prove (Theorems~\ref{thm:lex2marble} and~\ref{thm:marble2lex}) that a transduction is
recognizable by a nested marble transducer of level $k$ iff it is
$k$-lexicographic. 

\subsubsection{Lexicographic transductions are recognized by nested marble transducers}

Consider some transduction $f\in LEX_k$. Then $f = \maplex_{\ordera_1} (\maplex_{\ordera_2}\dots
\maplex_{\ordera_k}\ s)\dots )$ for some $\ordera_i=(B_i,\order_i)$ and $s$ a simple transduction. We
call $B_1,\dots,B_k$ \emph{the} ordered alphabets of $f$ and $s$
\emph{the} simple transduction of $f$.

Given an ordered alphabet $(B,\order)$, one can enumerate
the annotations of a word according to the lexicographic extension
using a marble automaton. By induction, we show:

\begin{restatable}[$\LEX\subseteq \NMT$]{theorem}{lextomarble}
\label{thm:lex2marble}
    Any transduction $f\in \LEX_k$ is
    recognizable by some \NMT $T_f$ of level $k$. If
    $B_1,\dots,B_k$ are the ordered alphabets of $f$ and $s$ its
    simple transduction, represented by a sequential transducer with $m$
    states, then $T_f$ has $O(k+m)$ states and $\sum_{i=1}^k |B_i|$
    marbles. 
\end{restatable}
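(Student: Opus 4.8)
The proof is by induction on $k$. For $k=0$, a simple transduction in $\LEX_0$ is a finite union $\sum_{i=1}^n L_i \ch w_i$ with $w_i \in \Gamma^{\le 1}$; since the $L_i$ are disjoint and regular, a DFA recognizing $\bigcup_i L_i$ together with an output function that maps each accepting state (i.e., the $L_i$ it witnesses) to $w_i$ is exactly a $(0,\Sigma,\Gamma)$-\NMT, i.e. a simple transducer as in Definition~\ref{def:simpletrans}. If this DFA comes from a sequential transducer representation with $m$ states, then $T_f$ has $O(m)$ states and $0$ marbles, matching the claimed bound.

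For the inductive step, write $f = \maplex_{\ordera}\, g$ with $\ordera=(B,\order)$ and $g \in \LEX_{k-1}$, where $g : (\Sigma\times B)^* \partialf \Gamma^*$ has ordered alphabets $B_2,\dots,B_k$ (so $B=B_1$) and simple transduction $s$. By induction, there is a $(k-1,\Sigma\times B,\Gamma)$-\NMT $T_g$ with $O((k-1)+m)$ states and $\sum_{i=2}^k |B_i|$ marbles. I build a level-$k$ \NMT $T_f$ whose assistant transducer is $T_g$ and whose top level does nothing but \emph{enumerate}, in lexicographic order, all $B$-annotations of the input, calling $T_g$ once on each annotated word. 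The key point is that this enumeration can be performed by a marble transducer: the colour set is $C = B$, and the machine maintains the current annotation on the tape as a word $v \in B^n$. To advance from annotation $v$ to $\suc_\ordera(v)$, the head sweeps from the right end (least significant position in our convention is leftmost — actually most significant is rightmost, so the successor is computed from the left): scanning left-to-right it finds the first position whose $B$-symbol is not maximal, increments it, and resets all positions to its left to the minimal symbol; this is exactly the "odometer" behaviour, and because a marble must be dropped on every left move and lifted on every right move, the annotation currently sitting on the tape plays the role of the dropped marbles. Concretely: $T_f$ has $O(1)$ top-level states for "resetting to the minimal annotation", "incrementing", and "about to call $T_g$"; on reaching $\vdash$ with the current annotation $v$ in place it calls $T_g$ via $\deltac$ on $({\vdash}u{\dashv})\otimes c v$, which outputs $g(u\otimes v)$; on return it checks (via the finite state reached, which it can always assume accepting since $\dom(g)$ is regular by Lemma~\ref{lem:regdomlex}, or more directly it just continues) whether $v$ was the maximal annotation $\max_\order(B^n)$, and if not, produces the next annotation and repeats. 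The concatenation of all these outputs is exactly $g(v_1)g(v_2)\cdots g(v_k)$ where $\lexenum_\ordera(u)=v_1|\cdots|v_k$, which is $f(u)$ by definition of $\maplex$.

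Counting resources: the top level of $T_f$ adds $O(1)$ states and $|B_1|$ marble colours, so inductively $T_f$ has $O((k-1)+m)+O(1) = O(k+m)$ states and $|B_1| + \sum_{i=2}^k|B_i| = \sum_{i=1}^k |B_i|$ marbles, as claimed. The main obstacle — and the part needing care rather than cleverness — is checking that the odometer increment genuinely respects the marble discipline (every left move drops a marble, every right move lifts one) and terminates correctly at $\max_\order(B^n)$, handling the boundary behaviour at the endmarkers $\vdash,\dashv$; the semantics of \NMT in the excerpt already builds in that a call happens precisely when the head is on $\vdash$, so one must make sure the increment phase ends with the head back at $\vdash$ with the fresh annotation laid down. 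A second minor subtlety is the interaction with $g$'s domain: $\maplex_\ordera\,g$ is defined on $u$ iff all $v_i$ lie in $\dom(g)$, and since \NMT semantics require every assistant call to accept, rejection of some $v_i$ by $T_g$ automatically makes $T_f$ undefined on $u$, so the domains match without extra work.
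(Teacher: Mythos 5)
Your proposal is correct and follows essentially the same route as the paper's proof: induction on $k$, with the top level of $T_f$ acting as an odometer over the marble alphabet $B$ (initial right-to-left pass laying down the minimal symbol, then repeatedly finding the leftmost non-maximal marble, incrementing it, and resetting everything to its left), calling $T_g$ on each complete annotation at $\vdash$, with the same $O(1)$-states-per-level and $|B_i|$-marbles-per-level accounting. The domain argument via the requirement that all assistant calls succeed also matches the paper's (implicit) treatment.
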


\subsubsection{Nested marble transducers recognize \LEX transductions}

Conversely, we prove that the transductions recognized by \NMTSPF are lexicographic. 

\begin{restatable}[$\NMTSPF\subseteq \LEX$]{theorem}{marbletolex}
\label{thm:marble2lex}
    Any transduction $f$ recognizable by
    some $\NMTSPF$ $T$ of level $k$ is $k$-lexicographic. 
\end{restatable}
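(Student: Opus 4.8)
The plan is to proceed by induction on the level $k$ of the state-passing free nested marble transducer $T$. The base case $k=0$ is immediate: a level-$0$ \NMTSPF\ is a simple transducer, hence realizes a simple transduction, which lies in $\LEX_0$ by definition. For the inductive step, suppose $T$ is a $(k,\Sigma,\Gamma)$-\NMTSPF\ with assistant $T'$ a $(k{-}1,\Sigma\times C,\Gamma)$-\NMTSPF; by the induction hypothesis the transduction $f_{T'}$ is $(k{-}1)$-lexicographic, say $f_{T'} = \maplex_{\ordera'_1}(\dots \maplex_{\ordera'_{k-1}}\, s')$. The goal is to exhibit one extra ordered alphabet $(B,\order)$ and a simple transduction massaging, so that wrapping everything inside $\maplex_{(B,\order)}$ (together with the nested applications coming from $T'$) reconstructs $f_T$; in other words, the top level of $T$ — which is an ordinary marble transducer that between consecutive leftmarker readings annotates its input with marbles and then calls $T'$ — must be encoded by a single \maplex\ layer.

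The key observation is that a marble transducer, viewed as the top level of $T$, sweeps its input producing marble annotations in a regular, monotone fashion: the sequence of full annotations $c_1 c_2 \dots$ of $u$ that trigger the successive calls to $T'$ is exactly enumerable, in lexicographic order with rightmost-significant convention, over a suitable ordered marble alphabet. This is precisely the content used in the proof of Theorem~\ref{thm:lex2marble} (enumerating annotations of a word by a marble automaton), run in reverse. Concretely: I would first normalise $T$ so that its top-level marble behaviour is "clean" — using Corollary~\ref{coro:regdomnmt} and Lemma~\ref{lem:regaut} the domain of $T'$ is regular, and since $T$ is state-passing free the top level cannot learn anything from $T'$ except termination, so I can fold the (regular) domain of $T'$ into the regular languages defining the simple functions. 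Then the run of the top-level marble transducer is determined by its current position, current state, and the current marble annotation of the suffix; because marbles are dropped going left and lifted going right, the reachable annotations of the whole word, ordered by the order in which the transducer produces them, form exactly a lexicographic enumeration over $B := C \cup \{\star\}$ (with $\star$ a fresh "not yet visited / lifted" symbol) for an appropriate linear order $\order$ on $B$ read least-significant-first. Each such annotated word is then fed to $T'$, and the outputs are concatenated in that order — which is exactly the semantics of $\maplex_{(B,\order)}\, f_{T'}$, after composing $f_{T'}$ on the left with the simple transduction that reads $u \otimes c$, checks via a DFA whether $c$ is a "genuine" annotation arising during a run of the top level (outputting $\epsilon$ on spurious annotations), and otherwise passes $u \otimes c$ through, possibly prepending the fixed output word emitted by the top-level transitions around that call. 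Domain restriction to the regular set of well-formed inputs (Example~\ref{ex:domrestrict}) and the fact that $\SEQ \subseteq \LEX_1$ (Lemma~\ref{lem:seqlex}), together with closure of each $\LEX_k$ under the relevant pre-processing, let this composed map stay inside $\LEX_k$, so $f_T = \maplex_{(B,\order)}\, g$ with $g \in \LEX_{k-1}$, i.e.\ $f_T \in \LEX_k$.

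The main obstacle, and the technical heart of the argument, is showing that the ordering in which the top-level marble transducer produces its successive annotated configurations of the full input word really coincides with a single lexicographic order on a fixed alphabet — i.e.\ that the marble sweep is genuinely "lexicographic" rather than merely "regular". A naive marble transducer may revisit positions, drop several colours over time, and interleave calls in a way that is not literally a digit-by-digit odometer increment; the work is to argue (as in the marble-to-regular arguments of \cite{DBLP:journals/actaC/EngelfrietHB99,DBLP:conf/mfcs/Doueneau-TabotF20} and the analysis behind Lemma~\ref{lem:regaut}) that one can first put $T$'s top level into a normal form where its left-to-right marble-dropping and right-to-left marble-lifting behaviour is exactly an increment on $C$-strings, absorbing the finite-state bookkeeping and the possibly spurious intermediate configurations into the simple transduction via the $\ch$ mechanism and $\epsilon$-outputs. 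Handling the "prefix/suffix of the run around each call" (the output words emitted by top-level $\delta$-transitions that are not themselves calls) is a further fiddly but routine point, dealt with by attaching those bounded output words to the neighbouring call via the simple transduction. A secondary subtlety is bookkeeping the exact parameters (number of marbles, number of ordered alphabets) if one wants the construction to match the bounds hinted at elsewhere, but the statement as given only asks for membership in $\LEX_k$, so these can be left coarse.
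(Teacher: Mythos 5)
Your overall architecture matches the paper's: induction on $k$, normalise so that output is produced only at the bottom level, view the successive calls to the assistant as a lexicographic enumeration over an enriched marble alphabet, and have the (inductively lexicographic) assistant output $\epsilon$ on the spurious annotations, whose set you correctly insist must be regular. But the step you yourself flag as ``the technical heart'' --- that the top level can be normalised so that its marble sweep is \emph{exactly} a digit-by-digit increment over $B = C\cup\{\star\}$ --- is precisely the step that fails, and the proposal offers no construction for it. The order in which two reachable full configurations $\nu_1\rightarrow_T^*\nu_2$ occur is \emph{not} determined by their marble words alone: they can carry the same marble $c$ at the first position where their histories diverge, differing only in the state in which the top level last visited that position. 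No linear order on $C$ (or $C\cup\{\star\}$) can separate them, and no finite-state normal form can make the marble colours themselves do the job, because whether state $p_1$ ``comes before'' state $p_2$ at position $i$ depends on whether $(p_1,p_2)$ is a right-to-right traversal there --- a property of the input word, not of the transducer.

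The paper's resolution is to take $B = C\times Q\times 2^{Q\times Q}$: each position of the annotation records the marble, the last state in which the position was visited, and the set of right-to-right traversals at that position. Baking the traversal set $X$ into the letter makes the required order \emph{fixed} on a \emph{fixed} finite alphabet, namely $(c,p_1,X)\order(c,p_2,X)$ iff $(p_1,p_2)\in X$; proving that successive reachable configurations are then lexicographically increasing uses determinism of $T$ and termination on $\dom(f)$ in an essential way (to rule out cycles), and proving that the set of valid enriched annotations is regular is itself an application of Lemma~\ref{lem:regaut}. None of this is recoverable from ``put the top level in odometer normal form.'' A secondary, smaller gap: the output emitted by top-level transitions between two consecutive calls is not a single bounded word (there are unboundedly many such transitions per sweep), so it cannot simply be ``prepended to the neighbouring call''; the paper instead makes each producing transition trigger its own call, writing the word onto the tape as a marble for the assistant to find (Lemma~\ref{lem:botproducing}).
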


\begin{proof}[Proof sketch]
The proof is rather involved. We provide high level arguments, but full details can be found in Appendix~\ref{app:marble2lex}. The result is shown by induction on $k$. It is trivial for $k=0$. For $k>0$, the main idea is to see the sequence of successive configurations of $T$ on some input as a lexicographic enumeration. This is possible due to the stack discipline of marbles. In particular, by extending the marble alphabet with sufficient information, it is possible to define a total order on marbles
such that the sequence of successive configurations of $T$,
extended with this information, forms a lexicographically increasing chain.

More precisely, since lexicographic functions only produce their
output at the \emph{bottom level}, \ie at the level of simple
functions, we first turn $T$ into an equivalent \emph{bottom
  producing} transducer, where only its simple transducer is allowed
to produce outputs. Then, we observe that the behaviour of a nested marble transducer is inherently
lexicographic. To see this, let ${\vdash}u{\dashv}$ be some input
word such that $u\in\dom(f)$. Consider a configuration of $T$ where
the reading head is on ${\vdash}$, \ie a configuration of the form
$\nu = (q,0,v)\in Q\times \pos({\vdash}u{\dashv})\times C^*$, reachable from
the initial configuration. For any input position $i$, denote by
$\textsf{last}_\nu(i)$ the last state in which $T$ was at position $i$
before reaching $\nu$.

Now, consider two configurations $\nu_1 = (q_1,0,v_1),\nu_2 = (q_2,0,v_2)$ on
${\vdash}u{\dashv}$ reachable from the initial configuration. It can
be proved that $\nu_1\rightarrow_T^*\nu_2$ iff there exists a position
$i$ such that $(1)$ for
all positions $j\geq i$,
$\textsf{last}_{\nu_1}(j)=\textsf{last}_{\nu_2}(j)$ and $v_1[j] =
v_2[j]$, and $(2)$ the pair $(\textsf{last}_{\nu_1},\textsf{last}_{\nu_2})$ is a \emph{right-to-right traversal} of $T$ for position $i$ and marble $c = v_1[i]$, \ie a pair of states $(p_1,p_2)$ such that for some $\alpha$, $T$ can go from configuration $(p_1,i,c\alpha)$ to configuration $(p_2,i,c\alpha)$ without visiting any position to the right of $i$ (in particular, being a right-to-right traversal does not depend on $\alpha$). This is illustrated in Fig.~\ref{fig:illustrateorder} in Appendix.

Based on this observation, $f$ is expressed as $\maplex_\ordera\ f'$
with $\ordera = (B,\order_B)$ where $B = C\times Q\times 2^{Q\times
  Q}$ and $f'$ checks that its input, of the form $c_0c_1\dots \otimes
p_0p_1\dots \otimes X_0X_1\dots$,  is valid,
in the sense that $(p_0,0,c_0c_1\dots)$ is a configuration $\nu$ of
$T$ reachable from the initial configuration, $p_i = \textsf{last}_\nu(i)$ for all $i$,
and $X_i$ is the set of right-to-right traversals at position $i$. It
is proved that valid inputs form a regular language. 
If its input is valid, $f'$ outputs $f_R({\vdash}u{\dashv}\otimes
c_0c_1\dots)$, for $R$ the assistant transducer of $T$, otherwise $f'$
outputs $\epsilon$. $f'$ is shown to
be lexicographic by induction hypothesis. Finally, $\order_B$ is defined by
$(c_1,q_1,X_1)\order_B (c_2,q_2,X_2)$ if $c_1=c_2$ and $X_1=X_2$ and
$(q_1,q_2)\in X_1$, otherwise the order is arbitrary. 
\end{proof}

The following theorem summarizes the characterizations of
lexicographic transductions proved so far:

\begin{theorem}[Characterizations of lexicographic transductions]\label{thm:main} Let
  $f : \Sigma^*\rightharpoonup \Gamma^*$ be a transduction and $k\geq 1$. The
  following statements are equivalent:
  \begin{enumerate}[$(1)$]
    \item $f$ is $k$-lexicographic.
    \item $f$ is $k$-lexicographic automatic.
    \item $f$ is recognizable by a $k$-nested marble transducer.
    \item $f$ is recognizable by a state-passing free $k$-nested marble
      transducer.
  \end{enumerate}
\end{theorem}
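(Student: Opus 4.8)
The plan is to prove Theorem~\ref{thm:main} by chaining together the equivalences that have already been established or sketched in the preceding sections, so the proof itself is essentially a bookkeeping argument that assembles the pieces. The four statements form a cycle, and the cleanest presentation is to verify $(1)\Leftrightarrow(2)$, then $(1)\Rightarrow(3)$, then $(3)\Rightarrow(4)$, and finally $(4)\Rightarrow(1)$, which closes the loop.

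First I would dispatch $(1)\Leftrightarrow(2)$: this is exactly Proposition~\ref{prop:atklex2klex}, which states that a transduction is $k$-lexicographic iff it is $k$-lexicographic automatic, so there is nothing further to do here beyond citing it. Next, $(1)\Rightarrow(3)$ is Theorem~\ref{thm:lex2marble} ($\LEX\subseteq\NMT$, with the explicit bound that a transduction in $\LEX_k$ is recognized by an \NMT of level $k$), again just a citation. For $(3)\Rightarrow(4)$, I would invoke Theorem~\ref{thm:sp-removal} (state-passing removal): every $k$-\NMT has an equivalent $k$-\NMTSPF, so recognizability by a level-$k$ nested marble transducer implies recognizability by a state-passing free one of the same level. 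Finally, $(4)\Rightarrow(1)$ is Theorem~\ref{thm:marble2lex} ($\NMTSPF\subseteq\LEX$, with the level being preserved), which closes the cycle. Since the implication $(4)\Rightarrow(3)$ is trivial (a \NMTSPF \emph{is} a \NMT), the cycle $(1)\Rightarrow(3)\Rightarrow(4)\Rightarrow(1)$ together with $(1)\Leftrightarrow(2)$ yields the full equivalence.

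The one point requiring a word of care is the tracking of the parameter $k$ through the chain: I must make sure that each cited result preserves the level exactly (not merely up to some shift), so that ``$k$-lexicographic'' on one side matches ``level-$k$ nested marble transducer'' on the other. Theorem~\ref{thm:lex2marble} gives level $k$ from $\LEX_k$, Theorem~\ref{thm:sp-removal} preserves the level, and Theorem~\ref{thm:marble2lex} sends a level-$k$ \NMTSPF to a $k$-lexicographic transduction, so the correspondence is tight at every step; I would state this explicitly in the proof. There is no genuine mathematical obstacle here — all the hard work lives in Theorems~\ref{thm:lex2marble}, \ref{thm:marble2lex}, and \ref{thm:sp-removal} and in Proposition~\ref{prop:atklex2klex}. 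The only ``obstacle'', such as it is, is purely expository: ensuring the cycle of implications is presented without circular dependence and that the reader can see at a glance which earlier result supplies each arrow.

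\begin{proof}
We prove the cycle of implications $(1)\Rightarrow(3)\Rightarrow(4)\Rightarrow(1)$ together with the equivalence $(1)\Leftrightarrow(2)$.

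The equivalence $(1)\Leftrightarrow(2)$ is exactly Proposition~\ref{prop:atklex2klex}.

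$(1)\Rightarrow(3)$: if $f$ is $k$-lexicographic, i.e.\ $f\in\LEX_k$, then by Theorem~\ref{thm:lex2marble} there is a nested marble transducer of level $k$ recognizing $f$.

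$(3)\Rightarrow(4)$: given a $k$-\NMT recognizing $f$, Theorem~\ref{thm:sp-removal} provides an equivalent state-passing free $k$-\NMT (of size $k{-}\textsc{EXP}$ in the original, but the level is unchanged), so $f$ is recognizable by a state-passing free $k$-nested marble transducer.

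$(4)\Rightarrow(1)$: if $f$ is recognizable by a state-passing free nested marble transducer of level $k$, then by Theorem~\ref{thm:marble2lex} $f$ is $k$-lexicographic.

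Since every state-passing free $k$-\NMT is in particular a $k$-\NMT, the implication $(4)\Rightarrow(3)$ is immediate, so all four statements are equivalent.
\end{proof}
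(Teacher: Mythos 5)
Your proof is correct and follows exactly the same chain of citations as the paper's own proof: $(1)\Leftrightarrow(2)$ via Proposition~\ref{prop:atklex2klex}, $(1)\Rightarrow(3)$ via Theorem~\ref{thm:lex2marble}, $(3)\Rightarrow(4)$ via Theorem~\ref{thm:sp-removal}, and $(4)\Rightarrow(1)$ via Theorem~\ref{thm:marble2lex}. Your additional remark about tracking the level $k$ through each step is a sensible expository touch but does not change the argument.
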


\begin{proof}
    $(2)\xleftrightarrow{\text{Prop.~\ref{prop:atklex2klex}}} (1)\xrightarrow{\text{Thm~\ref{thm:lex2marble}}} (3)\xrightarrow{\text{Thm.~\ref{thm:sp-removal}}}
    (4)\xrightarrow{\text{Thm.~\ref{thm:marble2lex}}}(1)$.
\end{proof}

\section{Expressiveness and closure properties of lexicographic transductions}
\label{sec:ppties}

In this section, we prove that $\LEX$ contains all the polyregular
transductions~\cite{DBLP:polyreg}, and all the transductions recognizable by (copyful)
streaming string transducers~\cite{DBLP:journals/fuin/FiliotR21,DBLP:conf/fsttcs/AlurC10}. We also show that $\LEX$ is
closed by postcomposition under any polyregular transduction, and
closed by precomposition under any rational transduction.
We start by
showing that lexicographic transductions preserve regular languages under
inverse image. 

\begin{proposition}
    Transductions in \LEX are regularity preserving.
\end{proposition}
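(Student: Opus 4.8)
The plan is to proceed by induction on the nesting depth $k$, exploiting the inductive definition $\LEX_{k+1} = \{\maplex_\ordera\ f \mid f \in \LEX_k\}$ and the fact that $\LEX_0$ consists of simple transductions. The base case is immediate: a simple transduction $s = \sum_{i=1}^n L_i \ch w_i$ has finite range, so for any regular language $R$, the preimage $s^{-1}(R)$ is the (finite, hence regular) union of those $L_i$ with $w_i \in R$, together with $\dom(s)^{c}$-complement bookkeeping if $\epsilon \in R$ and so on — in any case a Boolean combination of the regular languages $L_i$, hence regular.

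For the inductive step, suppose $f = \maplex_\ordera\ g$ with $\ordera = (B,\order)$ and $g \in \LEX_k$ regularity preserving, and let $R \subseteq \Gamma^*$ be regular, recognized by a finite monoid morphism $h : \Gamma^* \to M$ with accepting set $P \subseteq M$. First I would note that $f(u) = g(v_1)g(v_2)\cdots g(v_\ell)$ where $v_1 \order \cdots \order v_\ell$ is the lexicographic enumeration of the $B$-annotations of $u$ (with $\ell = |B|^{|u|}$), and $u \in \dom(f)$ iff every $v_j \in \dom(g)$. Then $f(u) \in R$ iff $h(g(v_1)) \cdots h(g(v_\ell)) \in P$. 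For each $m \in M$, let $R_m = h^{-1}(m)$; this is regular, so by the induction hypothesis $g^{-1}(R_m)$ is a regular language over $(\Sigma \times B)^*$, recognized by some DFA $A_m$ (one can take a single DFA $A$ over the product of all these automata that, on reading an annotated word $v$, computes the element $\eta(v) := h(g(v)) \in M$ whenever $v \in \dom(g)$, and detects $v \notin \dom(g)$). The key observation is that this gives an $M$-valued "transducer behaviour" on annotations, and what we need is: the set of $u \in \Sigma^*$ such that, reading the $v_j$ in lexicographic order, the product $\prod_j \eta(v_j)$ lands in $P$ (and all $v_j \in \dom(g)$), is regular.

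The heart of the argument is to reorganize this lexicographic product into something a finite automaton reading $u$ can track. Because the order is lexicographic with most-significant letter to the right, I would peel off the last ($\order$-most-significant) coordinate: write each length-$|u|$ annotation as $(b, w)$ where $w \in B^{|u|-1}$ is the low-order part and $b \in B$ the top letter; then the enumeration groups as: all annotations with top letter $b_{\min}$ (in lex order of the remaining $|u|-1$ letters), then all with the next value of the top letter, etc. This is exactly the recursive structure that $\maplex$ already encodes ($\maplex_\ordera\ g = \maplex_{\ordera}(\maplex_{\ordera'} \dots)$ when unfolding), so rather than a direct combinatorial recursion the cleanest route is: invoke the equivalence with nested marble transducers (Theorem~\ref{thm:marble2lex} together with Theorem~\ref{thm:lex2marble}, summarized in Theorem~\ref{thm:main}), which gives that $f$ is recognized by a nested marble transducer of level $k$, and then use Corollary~\ref{coro:regdomnmt} (regular domains) together with the closure result $\SEQ \circ \NMT \subseteq \NMT$ (Lemma~\ref{lem:closureseq}): to decide $f(u) \in R$, compose $f$ with the sequential transducer that maps its input to a single bit recording membership in $R$ — more precisely, postcompose with a Mealy machine computing $h$ to track the monoid element, which is sequential, obtaining an $\NMT$ whose domain is exactly $\{u \mid u \in \dom(f),\ f(u) \in R\}$; that domain is regular by Corollary~\ref{coro:regdomnmt}.

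The main obstacle — and the reason I would prefer the self-contained inductive argument if the marble-transducer machinery is being proved in parallel rather than assumed — is handling the empty-output subtleties of $\maplex$: the $v_j$ with $g(v_j) = \epsilon$ contribute the identity of $M$, so they are harmless for the product, but they do matter for the domain condition ($v_j \in \dom(g)$ is required for all $j$, including those producing $\epsilon$), and one must be careful that $\dom(g)$ being regular (Lemma~\ref{lem:regdomlex}) lets the annotation-reading automaton flag out-of-domain annotations uniformly. The genuinely non-trivial step in the direct proof is showing that the function $u \mapsto \bigl(\text{lex-product over all annotations of } u\bigr) \in M$ is itself computable by a finite automaton over $\Sigma$; this follows by a fixpoint/transfer-matrix computation: the contribution of all annotations with a fixed high-order prefix factors through a single element of $M$ that depends only on the length of the remaining low-order suffix and the letter read, and these can be precomputed in $M$, so the automaton over $\Sigma$ carries, in its state, the relevant finite data (the monoid element accumulated so far, plus the DFA $A$'s reachability information). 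I expect the bookkeeping of this transfer computation — rather than any conceptual difficulty — to be the bulk of the work, which is why routing through Theorem~\ref{thm:main}, Lemma~\ref{lem:closureseq} and Corollary~\ref{coro:regdomnmt} is the shortest honest path and is presumably the intended proof.
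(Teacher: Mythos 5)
Your proposal is correct and, in the route you ultimately recommend, is exactly the paper's proof: the paper derives the proposition immediately from $\LEX\subseteq\NMT$ (Theorem~\ref{thm:main}), closure of \NMT under post-composition by sequential transductions (Lemma~\ref{lem:closureseq}), and regularity of \NMT domains (Corollary~\ref{coro:regdomnmt}). The only small imprecision is that you should post-compose with a \emph{sequential transducer whose underlying DFA recognizes $R$} (so that the composite's domain is exactly $f^{-1}(R)$) rather than a total Mealy machine, whose composition would only have domain $\dom(f)$.
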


\begin{proof}
    It is an immediate corollary of the inclusion $\LEX\subseteq \NMT$ (Theorem~\ref{thm:main}), that \NMT are closed by
postcomposition by a sequential transduction (Lemma~\ref{lem:closureseq}), and
that \NMT have regular domains (Corollary~\ref{coro:regdomnmt}). 
\end{proof}

We show that \LEX subsumes both \SST and \PREG.
More precisely that any transduction recognizable by a (copyful) streaming string transducer (\SST) is $1$-lexicographic. We do not give the definition of \SST and refer the reader to~\cite{DBLP:conf/mfcs/Doueneau-TabotF20} for more details.
We also show that \NMT  of level $k$ subsume $k$-pebble transducers. Again we don't give precise definitions of pebble transducers and refer the reader to \cite{DBLP:polyreg}.

\begin{theorem}[$\SST$ and \PREG in \LEX]
The following hold:
\begin{itemize}
    \item   $\SST=\LEX_1$,
    \item A transduction definable by a $k$-pebble transducer is in $\LEX_k$.
    In particular $\PREG\subseteq \LEX$
\end{itemize}

\end{theorem}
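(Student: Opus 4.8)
The plan is to prove the two statements by exhibiting, in each case, a lexicographic presentation of the transduction, using the already-established equivalence $\LEX = \NMT = \NMTSPF$ (Theorem~\ref{thm:main}) so that it suffices to produce either a lexicographic decomposition or a nested marble transducer of the appropriate level. For the inclusion $\SST \subseteq \LEX_1$, the idea is that a (copyful) \SST maintains finitely many registers, and the final output word is obtained by a single linear pass over the input, at each step applying a copyful register update. Such an update can be unfolded: the letter produced at a given output position is determined by (a) the input position and register whose update is ``responsible'' for it, and (b) a bounded amount of state/register-profile information. Concretely, I would use as ordered alphabet $\ordera = (B,\order)$ where $B$ encodes, for each input position, a choice of register together with an offset inside that register's update word; the lexicographic order on annotations is arranged so that the enumeration lists output positions in the correct order (this mirrors the treatment of morphisms in Example~\ref{ex:morphisms} and of \square\ in Example~\ref{ex:square}, but with an extra layer of register bookkeeping). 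The simple transduction then checks, via a regular condition on the annotated word, whether the annotation is ``well-formed'' (picks out exactly one genuine output position) and, if so, outputs the corresponding letter, else outputs $\epsilon$. For the reverse inclusion $\LEX_1 \subseteq \SST$, the cleanest route is to go through $\NMT$ of level $1$, i.e. marble transducers, which are already known to coincide with \SST~\cite{DBLP:conf/mfcs/Doueneau-TabotF20}; by Theorem~\ref{thm:main}, $\LEX_1$ is exactly the class recognized by level-$1$ nested marble transducers, and a level-$1$ nested marble transducer is a marble transducer (its assistant is a simple transducer, which can be folded into the finite control), so $\LEX_1 \subseteq \SST$ follows, giving the equality.

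For the second statement, that a $k$-pebble transducer is in $\LEX_k$, the plan is again to route through nested marble transducers. A $k$-pebble transducer drops and lifts pebbles in a stack discipline, with the crucial restriction that pebbles obey nesting; this is precisely the discipline a level-$k$ nested marble transducer can simulate, with each pebble level handled by one level of nesting. At level $k$ we run a marble-style pass that drops a marble whenever a new (outermost) pebble would be placed and recursively calls the level-$(k-1)$ transducer to simulate the remaining $k-1$ pebbles on the marked configuration; the output produced by the pebble transducer at the innermost level corresponds to the simple transducer at level $0$. So I would construct, by induction on $k$, a $(k,\Sigma,\Gamma)$-\NMT computing the same transduction as a given $k$-pebble transducer, and then invoke Theorem~\ref{thm:main} to conclude it is in $\LEX_k$. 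Since \PREG is exactly the class of transductions computable by pebble transducers (of arbitrary but bounded level)~\cite{DBLP:polyreg}, taking the union over $k$ yields $\PREG \subseteq \LEX$.

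The main obstacle, I expect, is the $\SST \subseteq \LEX_1$ direction, and specifically getting the lexicographic order on the annotation alphabet to list output positions in exactly the right sequence. Unlike the morphism or \square\ examples, an \SST register can be rebuilt as a concatenation of several other registers, so the word held in a register at the end of the run has a ``parse tree'' of bounded depth over the sequence of updates, and one must linearize this tree into output order via a single lexicographic sweep of the input. The key technical point is that, because each update step adds only a bounded constant amount of glue text and the dependency structure between registers is fixed (the \SST has finitely many states and a finite set of registers), the position of any output letter in the final word is a monotone function of a bounded tuple of data attached to input positions — precisely the kind of data a lexicographic annotation can carry — and the ``which register is responsible'' choice can be made consistent along the run using the standard notion of register-update profiles. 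I would isolate this as a lemma: the output of an \SST on input $u$ equals $f(v_1)\cdots f(v_m)$ where $v_1,\dots,v_m$ is the $\order$-enumeration of annotations of $u$ over a suitable ordered alphabet and $f$ is simple; once that lemma is set up, $\SST \subseteq \LEX_1$ is immediate, and the rest of the theorem is bookkeeping through Theorem~\ref{thm:main} and known pebble/marble simulations.
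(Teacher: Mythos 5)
Your treatment of the second bullet and of the inclusion $\LEX_1\subseteq\SST$ matches the paper: both are one-line reductions through Theorem~\ref{thm:main} and the known equivalence of marble transducers with \SST~\cite{DBLP:conf/mfcs/Doueneau-TabotF20} (for pebbles, the paper just notes that one pebble is a marble level with colors $\{0,1\}$ and at most one $1$). Where you genuinely diverge is $\SST\subseteq\LEX_1$: you propose a direct lexicographic construction unfolding register updates, whereas the paper observes that the cited equivalence is two-way ($\SST$ $=$ marble transducers $=$ level-$1$ \NMT) and so gets \emph{both} inclusions simultaneously from Theorem~\ref{thm:main}, with no new construction at all. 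Your route is legitimate but strictly more work, and the hard part you defer to a lemma is essentially a special case of Theorem~\ref{thm:marble2lex} restricted to level $1$, which the paper has already proved; re-deriving it ad hoc for \SST buys nothing here. One soft spot in your sketch worth naming: because the \SST is copyful, a letter introduced by an update at input position $i$ can occur exponentially many times in the final output, so the ``address'' of an output position is not a bounded tuple attached to position $i$ alone --- it must record, for \emph{every} position to the right of $i$, which occurrence of the carrying register in that step's update the letter flows through. This is precisely why the annotation is a full word over $B$ and why the lexicographic order with most significant letter to the right is the correct comparison; your phrase ``a bounded tuple of data attached to input positions'' is recoverable in this sense, but as stated it underplays the copy-history bookkeeping that is the actual content of the lemma. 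I would recommend replacing the direct construction by the paper's two-line argument, or at least stating explicitly that your lemma is the level-$1$ instance of the $\NMT\to\LEX$ translation.
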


\begin{proof}
    It is already known that marble transducers (\ie nested marble
    transducers of level $1$) capture exactly the class of
    \SST-recognizable transductions~\cite{DBLP:conf/mfcs/Doueneau-TabotF20}. The result then follows
    from Theorem~\ref{thm:main}.

    A single pebble can be simulated by one level of marbles, with colors $\set{0,1}$ with the restriction that at most one marble can have color $1$ per level.
\end{proof}

We now prove that the class of lexicographic transductions is closed under
postcomposition by a polyregular transduction.

\begin{theorem}\label{thm:closurepolyreg}
$\textsf{PolyReg}\circ \LEX\subseteq \LEX$
\end{theorem}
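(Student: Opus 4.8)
The plan is to prove, for every polyregular transduction $g$, the stronger statement that $g\circ f\in\LEX$ whenever $f\in\LEX$ (equivalently, $\LEX$ is closed under post-composition by $g$), proceeding by structural induction on the definition of $\PREG$ recalled in Section~\ref{sec:prel}: $g$ is built from sequential transductions, $\rev$ and $\square$ by finitely many applications of composition and of the $\map$ operator. The case where $g$ is sequential follows immediately from Lemma~\ref{lem:closureseq} together with Theorem~\ref{thm:main}, which give $\SEQ\circ\NMT\subseteq\NMT=\LEX$; the case $g=g_2\circ g_1$ follows from the induction hypothesis applied twice. So it remains to treat $g=\rev$, $g=\square$, and $g=\map\ h$. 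Throughout, fix $f\in\LEX_k$ written $f=\maplex_{\ordera_1}(\cdots\maplex_{\ordera_k}\ s)$ with $\ordera_i=(B_i,\order_i)$ and $s$ simple, and let $B=B_1\times\cdots\times B_k$. The guiding idea, already used to show $\square\in\LEX_2$, is that the positions of $f(u)$ (resp. the occurrences of $\sep$ in $f(u)$) are in order-preserving bijection with a regular set of $k$-tuples of annotations of $u$, enumerated in the $k$-lexicographic order $\order_{\overline\ordera}$; hence one can prepend $k$ fresh $\maplex$-levels using the very orders $\order_1,\dots,\order_k$ to iterate over these positions in the correct order.

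For $g=\rev$ this is easy: since $s$ has output length at most $1$, $\rev\circ s=s$, and reversing a concatenation of one-letter pieces is the same as enumerating those pieces in reverse order; hence $\rev\circ f=\maplex_{\ordera_1'}(\cdots\maplex_{\ordera_k'}\ s)\in\LEX_k$ where $\ordera_i'=(B_i,\order_i^{-1})$. For $g=\square$, recalling $\square(v)=\prod_{p=1}^{|v|}\textsf{under}_p(v)$, I would set $\square\circ f=\maplex_{\ordera_1}(\cdots\maplex_{\ordera_k}(\maplex_{\ordera_1}(\cdots\maplex_{\ordera_k}\ s')))$, where $s'$ reads $u\otimes\overline c\otimes\overline c'$ ($\overline c$ the inner auxiliary tuple, $\overline c'$ the one introduced by the $k$ fresh outer levels) and outputs $\epsilon$ if $s(u\otimes\overline c')=\epsilon$ or $s(u\otimes\overline c)=\epsilon$, the letter $s(u\otimes\overline c)$ if $\overline c\neq\overline c'$, and its underlined copy if $\overline c=\overline c'$; equality of tuples and the behaviour of $s$ being regular conditions, $s'$ is simple, and iterating $\overline c'$ over the producing tuples of $f(u)$ yields $\prod_p\textsf{under}_p(f(u))=\square(f(u))$, so $\square\circ f\in\LEX_{2k}$.

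The main case, and the one I expect to be the real obstacle, is $g=\map\ h$, where the induction hypothesis provides $h\circ\LEX\subseteq\LEX$. Write $f(u)=w_1\sep w_2\sep\cdots\sep w_n$ and call a $k$-tuple a \emph{cut tuple} if $s$ produces $\sep$ on it; the block $w_j$ is the concatenation of $s(u\otimes\overline c)$ over the tuples $\overline c$ lying strictly $\order_{\overline\ordera}$-between the two consecutive cut tuples delimiting it. The crucial point is that ``lying strictly between two $k$-lex bounds'' is a regular relation on the convolution, so the map $g_{\mathrm{block}}\colon u\otimes\overline a\otimes\overline b\mapsto\prod_{\overline a\order_{\overline\ordera}\overline c\order_{\overline\ordera}\overline b}s(u\otimes\overline c)$ is of the form $\maplex_{\ordera_1}(\cdots\maplex_{\ordera_k}\ s'')$ with $s''$ simple, hence $g_{\mathrm{block}}\in\LEX_k$; restricting its domain to the regular set of pairs $(\overline a,\overline b)$ that are consecutive cut tuples — here using that ``no cut tuple lies strictly between $\overline a$ and $\overline b$'' is regular, being an existential projection of a regular language — keeps it in $\LEX_k$ by Example~\ref{ex:domrestrict}; call the result $g_{\mathrm{block}}'$. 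By the induction hypothesis, $G:=h\circ g_{\mathrm{block}}'\in\LEX$. I would then define $(\map\ h)\circ f=\maplex_{\ordera_1}(\cdots\maplex_{\ordera_k}(\maplex_{\ordera_1}(\cdots\maplex_{\ordera_k}\ \Phi)))$, the $k$ outer fresh levels iterating a tuple $\overline b$, the $k$ inner ones a tuple $\overline a$, and $\Phi$ reading $u\otimes\overline b\otimes\overline a$ and outputting $\epsilon$ unless $(\overline a,\overline b)$ is a consecutive pair of cut tuples, in which case it outputs $G(u\otimes\overline a\otimes\overline b)$ together with the relevant separator. For each cut tuple $\overline b$ exactly one $\overline a$ fires, yielding the $h$-image of the block ending at $\overline b$, and the $\overline b$'s are visited in the order of the separators of $f(u)$; after routine bookkeeping to also emit the last block and place the separators (handled using closure of $\LEX$ under domain restriction, finite sums of functions with disjoint domains, and padding by fixed words — all immediate from the nested-marble-transducer characterization, cf. Theorem~\ref{thm:main} and Lemma~\ref{lem:regdomlex}), this computes $h(w_1)\sep\cdots\sep h(w_n)$, and since $G\in\LEX$ and only $2k$ $\maplex$-levels were added, $(\map\ h)\circ f\in\LEX$, closing the induction. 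The difficulty here, absent in the $\square$ case, is that the fresh outer levels must enumerate \emph{blocks} rather than positions, a block is an unbounded object, and one must recover it as a $\LEX_k$ function of its two delimiting cut tuples and feed it to $h$; the argument works only because the relevant predicates on pairs of cut tuples (consecutiveness, emptiness of the enclosed interval) remain regular, so nothing beyond $\LEX$ is needed.
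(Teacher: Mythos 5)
Your proposal is correct in substance and follows the paper's overall strategy — induction on the build-up of $g\in\PREG$ from $\SEQ$, $\rev$, $\square$ under composition and $\map$, with the $\SEQ$ case discharged by Lemma~\ref{lem:closureseq} via Theorem~\ref{thm:main} — and your $\rev$ and $\square$ cases coincide with the paper's (reversing every $\order_i$ reverses the $k$-lexicographic enumeration; the doubled tower with the simple function $\underline{s}$ distinguishing $\overline b=\overline c$ from $\overline b\neq\overline c$ is exactly the paper's construction, up to renaming). The genuine divergence is the $\map$ case. The paper switches to the nested-marble model and shows $\map(\LEX)\subseteq\LEX$ by a chunk-marking construction on the \emph{input}: a level-$(k{+}1)$ \NMTSPF marks each input chunk $u_i$ and calls a relativized copy of $T_f$ on it. You instead stay in the $\maplex$ algebra and prove the post-composition statement $(\map\ h)\circ f\in\LEX$ directly, by identifying the separators of $f(u)$ with the ``cut tuples'' of annotations on which $s$ emits $\sep$, reconstructing each block as the $\LEX_k$ function $g_{\mathrm{block}}$ of its two delimiting cut tuples (regularity of the $k$-lex betweenness relation and of consecutiveness, the latter by projection, is the right justification), feeding it to $h$ via the induction hypothesis, and re-enumerating the consecutive pairs with $2k$ fresh levels. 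This is worth highlighting: your version is precisely the statement the induction needs, since $(\map\ h)\circ f$ is \emph{not} of the form $\map(h')$ for an $h'$ obtained from $f$, and $\LEX$ is not known to be closed under composition; the paper's chunk-marking construction operates on separators present in the input, whereas in $(\map\ h)\circ f$ the separators live only in the virtual output $f(u)$. The price you pay is a larger level ($2k$ extra $\maplex$ levels plus the level of $h\circ g'_{\mathrm{block}}$, versus the paper's level-preserving construction for $\map$).

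Two small loose ends, both genuinely routine given the closure properties you invoke: you only mention emitting the \emph{last} block, but the \emph{first} block $w_1$ has no left-delimiting cut tuple either and needs the same special treatment (e.g.\ letting $\Phi$ also fire on the pair formed by the $\order_{\overline\ordera}$-minimal tuple and the first cut tuple, with non-strict lower bound); and $\Phi$ receives its two annotation blocks in the order $\overline b,\overline a$ fixed by the nesting of the fresh levels, so $G$ must either be defined with its arguments in that order or precomposed with the letter-to-letter swap, which is covered by Theorem~\ref{thm:precompose}. Neither affects correctness.
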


\begin{proof}
    Any polyregular transduction can be expressed as a composition of
    sequential transductions, \square, \map and \rev~\cite{DBLP:polyreg}. We show that
    $\LEX$ is closed by postcomposition by these transductions, which
    yields the result. For sequential transductions, it has been shown in
    Lemma~\ref{lem:closureseq}. It remains to show closure under
    \square, \map and \rev.

\vspace{2mm}
    \emph{Closure under \map} 
    Let $f : \Sigma^*\rightarrow \Gamma^*$ be a
    $k$-lexicographic transduction. We show that the transduction $\map\ f : \Sigma_|^*\rightarrow \Sigma_|^*$ (where $\map$ is defined in Ex.~\ref{ex:map}) is $k$-lexicographic. Suppose that $f$ is given as an
    $\NMTSPF$ $T_f$ of level $k$. We sketch the construction of an $\NMTSPF$ $T$ of
    level $k+1$ recognizing $\map\ f$. The transducer $T$ only
    needs to mark chunks of the form $|u_i|$,
    ${\vdash} u_i|$ or $|u_i{\dashv}$, for increasing values of
    $i$. It can do so by using three marbles, $\bot$,
    $\triangleright$ and $\triangleleft$: $\triangleright$ is used to
    mark the left separator of the chunk (either $|$ or $\vdash$),
    $\triangleleft$ is used to mark the right separator (either $|$ or
    $\dashv$) and $\bot$ is used to mark any other position. It is not
    difficult to construct $T$ such that it enumerates all the
   markings of chunks of its inputs in order from left to right. 
    Once a chunk has been correctly marked by $T$, $T$ moves to the initial
    position of the whole input word, and call a modified version of
    $T_f$, which we call $T'_f$. The transducer $T'_f$ behaves as
    $T_f$ but on the portion of its input in between the position
    marked $\triangleright$ and $\triangleleft$. Accordingly,
    transitions of $T_f$ are modified such that $\triangleright$ is
    interpreted as $\vdash$, and $\triangleleft$ as $\dashv$. The
    subtransducer of $T_f$ (and recursively the subtransducer of its
    subtransducer), also has to be modified to be simulated only on
    the marked chunk. When $T'_f$ reaches an accepting state, the
    control goes back to $T$ which then marks the next chunk. With
    a slightly more technical construction (but similar), it is possible to
    both combine $T$ and $T'_f$ at the same level, so that $\map\ f$ can be shown to be
    $k$-lexicographic instead of $(k+1)$-lexicographic. 
    
\vspace{2mm}
    \emph{Closure under \rev}  We prove by induction on $n$
    that for all $k\geq 0$, if $f\in\LEX_k$, then $\rev\ f\in
    \LEX_k$. For $k=0$, it is easy to see that simple transductions are
    closed under reverse since words of length at most one are their own reverse. If $k>0$, there exist
    $\ordera = (B,\order)$ an ordered alphabet and $(g\ :\ (\Sigma\times
    B)^*\partialf \Gamma^*)\in \LEX_{k{-}1}$ such that $f\ =\
    \maplex_\ordera\ g$.
    
    Let $\ordera^{{-}1} = (B,\order^{{-}1})$ where $x\ \order_B^{{-}1}\ y$ iff $y\order_B x$. We prove that $\rev\ f =
    \maplex_{\ordera^{{-}1}}\ (\rev\ g)$.
    First, the lexicographic extension of $\order^{{-}1}$ is precisely the reverse order of $\order$: for $u,v\in B^n$, if $u\order v$ then there is $i\leq n$ such that $u[i]\order v[i]$ and $u[j]\order v[j]$ for all $i<j\leq n$. This means hat $v\order^{{-}1} u$.
     Then,
    take an input word $u\in\Sigma^*$ of length $n$, and let
    $v_1,\dots,v_k\in B^*$ of length $n$, we have
    $$
    v_1\order^{{-}1}_B v_2\dots v_{k{-}1}\order_B^{{-}1}
    v_k\text{ iff } v_k\order_B v_{k{-}1}\dots v_{2}\order_B v_1
    $$

 Suppose now that $v_1,\dots,v_k$ is the lexicographic enumeration of all
 $B$-words of length $n$ according to $\order_B^{{-}1}$, then:
 $$
 \begin{array}{lllllllll}
   \rev\ f\ u  & = & \rev\ (\maplex_\ordera\ g)\ u \\
               & = & \rev (g(v_k)\dots g(v_1)) \\
               & = & (\rev\ g\ v_1)(\rev\ g\ v_2)\dots (\rev\ g\ v_k) \\
   & = & \maplex_{\ordera^{{-}1}}\ (\rev\ g)\ u
 \end{array}
 $$

\vspace{2mm}
\emph{Closure under \square} We refer the reader to Example~\ref{ex:square}
for a definition of the transduction $\square$. Let $f\in \LEX$. If $f$ is simple, then it is
immediate that $(\square\ f)\in \SEQ$ and the result follows from
Lemma~\ref{lem:seqlex}. Otherwise, assume
$f\in\LEX_{k}$ for $k>0$, we show that $(\square\ f)\in \LEX_{2k}$. Then
$f\ = \maplex_{\ordera_1}\ \maplex_{\ordera_2}\dots \maplex_{\ordera_k}\ s
$ where $\ordera_i = (B_i,\order_i)$, for ${1\leq i\leq  k}$, are ordered alphabets and 
$s:(\Sigma\times B_1\times\dots\times B_k)^*\partialf
\Sigma^{\leq 1}$ is simple. Now, define the transduction
$\underline{s}:(\Sigma\times (\Pi_{i=1}^k B_i)\times (\Pi_{i=1}^k B_i))^*
                                                                  \partialf
                                                                                    (\Sigma\cup\underline{\Sigma})^{\leq
                                                                                      1}$
                                                                                    such
                                                                                    that
                                                                                    for
                                                                                    all
                                                                                    $u\in\Sigma^*$
                                                                                    and
                                                                                    all
                                                                                    $\overline{b},\overline{c}\in
(B_1\times\dots\times B_k)^*$, 
$$
\begin{array}{llllllllll}

\underline{s}(u\otimes \overline{b}\otimes \overline{c}) & = &
                                                     \left\{\begin{array}{lcl}
                                                              \epsilon
                                                              &\text{if}&
                                                                          s(u\otimes
                                                                          \overline{b})=\epsilon,\text{
                                                                          else}\\
                                                               s(u\otimes
                                                               \overline{c})
                                                               &\text{if}&
                                                                           \overline{b}
                                                                           \neq\overline{c}
                                                               \\
                                                               \underline{s(u\otimes
                                                               \overline{c})}&\text{if}& \overline{b}=\overline{c}
                                                               \end{array}\right.
\end{array}
$$
where $\underline{\epsilon}=\epsilon$.  It is not difficult to see
that $\underline{s}$ is simple. Now, note that the number of times
$s(u\otimes \overline{b})\neq \epsilon$ holds is exactly $|f(u)|$. Moreover,
whenever $s(u\otimes\overline{b})\neq \epsilon$, the enumeration of
all the $\overline{c}$ produces $f(u)$, where the letter produced
when $\overline{c}=\overline{b}$ is underlined. Therefore, 
$$
\square\ f = (\maplex_{\ordera_1}\ \dots\ \maplex_{\ordera_k})^2\ \underline{s}
$$
where $.^2$ here has to be understood as a copy, \eg $(\alpha\ \beta)^2 = \alpha\ \beta\ \alpha\ \beta$ for any function $\alpha,\beta$.
\end{proof}

Finally, we show that lexicographic transductions are closed by pre-composition under any rational transduction.

\begin{restatable}[$\LEX\circ\RAT\subseteq \LEX$]{theorem}{precompose}
\label{thm:precompose}
Let $k\geq 1$. For any rational transduction $g:\Gamma^*\partialf
    \Lambda^*$ and any $k$-lexicographic transduction $f:\Lambda^*\partialf
    \Gamma^*$, $f\circ g$ is $k$-lexicographic.
\end{restatable}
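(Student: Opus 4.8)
The plan is to prove the statement by induction on $k$, reducing the inductive step to a single application of $\maplex$ whose inner function is supplied by the induction hypothesis. Fix a real‑time functional transducer $T_g=(A_g,\out_g)$ computing $g$, with state set $Q_g$ and $M=\max_{q,\gamma}|\out_g(q,\gamma)|$; we may assume $T_g$ is unambiguous (a classical normalisation of functional transducers — or, equivalently, one restricts attention to the lexicographically least accepting run, which is a regular property). Then each $u\in\dom(g)$ has a unique accepting run $q_0\xrightarrow{u[1]}q_1\cdots\xrightarrow{u[n]}q_n$, and $g(u)$ is the concatenation of $n$ blocks of lengths $\ell_i=|\out_g(q_{i-1},u[i])|\le M$. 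For the base case $k=0$: if $f=\sum_i L_i\ch w_i$ is simple then $f\circ g=\sum_i g^{-1}(L_i)\ch w_i$, which is again simple, since rational transductions preserve regularity under inverse image and the $L_i$ are pairwise disjoint.

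For the inductive step, write $f=\maplex_{\ordera}\,h$ with $\ordera=(B,\order)$ and $h:(\Lambda\times B)^*\partialf\Gamma^*$ in $\LEX_{k-1}$. The key idea is that a $B$‑annotation of $g(u)$ amounts to choosing, for each input position $i$ of $u$, a word $\beta_i\in B^{\ell_i}$; concatenating gives the annotation $\beta_1\cdots\beta_n$ of $g(u)$, and this is a bijection onto $B^{|g(u)|}$. Accordingly I take $\ordera'=(B',\order')$ with $B'=R\times (B\uplus\{\star\})^M$, where $\star$ is a fresh padding symbol and $R\subseteq Q_g\times Q_g$: a $B'$‑annotation of $u$ is a pair $(\rho,c)$ with $\rho\in R^n$ and $c=(c_1,\dots,c_n)$, $c_i\in (B\uplus\{\star\})^M$. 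Call it \emph{valid} when $\rho$ is the accepting run of $T_g$ on $u$ and each $c_i$ has the form $\beta_i\star^{M-\ell_i}$ with $\beta_i\in B^{\ell_i}$; this is a regular condition, and on valid annotations the map $D:\,u\otimes\rho\otimes c\mapsto g(u)\otimes(\beta_1\cdots\beta_n)$ is computed letter by letter (position $i$ emits the block $\out_g(q_{i-1},u[i])$ annotated by $\beta_i$), so $D$ is a sequential, hence rational, transduction. I then set $h'=(h\circ D)\,+\,\big(\{w:w\notin\dom D,\ \pi_\Gamma(w)\in E\}\ch\epsilon\big)$, where $E=g^{-1}(\dom f)=\dom(f\circ g)$ is regular by Lemma~\ref{lem:regdomlex} together with regularity preservation under inverse image. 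The first summand lies in $\LEX_{k-1}$ by the induction hypothesis applied to $h\in\LEX_{k-1}$ and $D\in\RAT$; the second is simple; and the two have disjoint domains.

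The crucial point is the choice of $\order'$: I take $(\rho,c)\order'(\rho',c')$ iff $\rho=\rho'$ and $c\order''_B c'$, where $\order''_B$ compares two length‑$M$ words over $B\uplus\{\star\}$ by reading positions $M,M-1,\dots,1$ in turn and deciding at the first difference using a fixed total order on $B\uplus\{\star\}$ extending $\order$ (and pairs with $\rho\neq\rho'$ are ordered arbitrarily). For two valid annotations of the same $u$ the $\rho$‑component is fixed and, at every position $i$, the $\star$‑pattern is identical, so one checks directly that the lexicographic order on valid annotations of $u$ induced by $\order'$ (rightmost word‑position most significant, as in $\maplex$) is transported by the bijection $D$ \emph{exactly} onto the lexicographic order on $B^{|g(u)|}$ induced by $\order$. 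Consequently, when the $B'$‑annotations of $u$ are enumerated in $\order'$‑order, the valid ones occur in the order in which $D$ lists the $B$‑annotations of $g(u)$, while every other annotation contributes $\epsilon$ through $h'$; together with the domain restriction $\pi_\Gamma(w)\in E$ built into $h'$ (which forces $\dom(\maplex_{\ordera'}\,h')=E=\dom(f\circ g)$) this yields $\maplex_{\ordera'}\,h'=f\circ g$, hence $f\circ g\in\LEX_k$.

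I expect the main obstacle to be getting $\order'$ right: one must reconcile the fact that $\maplex_{\ordera'}$ enumerates $u$‑annotations with a fixed, padded per‑position width $M$, whereas the target $\maplex_{\ordera}$ enumerates \emph{flat} annotations of $g(u)$ whose block lengths $\ell_i$ vary with $u$ and the run; the padding‑by‑$\star$ together with the right‑to‑left letter comparison is what aligns the two enumerations. A secondary, routine point is to establish — by an easy parallel induction using Example~\ref{ex:domrestrict} (domain restriction) and the straightforward closure of each $\LEX_k$ under precomposition by alphabetic morphisms — that $\LEX_k$ is closed under the sum of two transductions with disjoint domains; this is what legitimises forming $h'$ as $h\circ D$ plus an $\epsilon$‑producing simple transduction. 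By Theorem~\ref{thm:main} this also transfers the closure property to $k$‑nested marble transducers.
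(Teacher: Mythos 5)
Your proof is correct, and it rests on the same key device as the paper's: enrich the ordered alphabet so that one annotation of an input position encodes both the run of a transducer for $g$ and a block of $B$-letters annotating the corresponding output block, order blocks with the rightmost letter most significant so that the lexicographic enumeration of annotations of $u$ restricts, on the valid ones, to exactly the lexicographic enumeration of $B$-annotations of $g(u)$, and let invalid annotations produce $\epsilon$. The organizational difference is that the paper first decomposes $g$ as a morphism composed with a letter-to-letter rational transduction and proves closure under each separately (using variable-length letters from $(\Lambda\times B)^{\leq M}$ for the morphism and $B\times\Delta$ for the letter-to-letter part), whereas you treat an unambiguous real-time transducer in a single inductive step, padding output blocks to uniform width $M$ with $\star$ and folding the residual work into one sequential transduction $D$ to which the induction hypothesis at level $k-1$ is applied; this buys a cleaner induction (the hypothesis is closure under all of \RAT at the lower level) at the cost of the unambiguity normalisation of $T_g$, which the paper's decomposition sidesteps. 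Both arguments need the same auxiliary facts: regularity of $\dom(f)$ (Lemma~\ref{lem:regdomlex}), closure under regular domain restriction (Example~\ref{ex:domrestrict}), and closure of $\LEX_k$ under disjoint-domain sums, which the paper isolates as Lemma~\ref{lem:sumclosure}. You rightly flag this last point, but since its proof (disjoint union of the ordered alphabets plus projections and restrictions) is itself an induction on the level, it is cleaner to establish it as a standalone lemma beforehand rather than in a parallel induction, to avoid any appearance of circularity with morphism precomposition.
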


\begin{proof}
    A rational transduction can be decomposed as a letter-to-letter rational transduction, followed by a morphism. Example~\ref{ex:morphisms} shows that morphisms are lexicographic. Similar ideas apply inductively to show that \LEX is closed by precomposition under morphisms and letter-to-letter rational transductions. See Appendix~\ref{app:precompose}.
    \end{proof}

\section{Discussion}

We have introduced the class of lexicographic transductions, a subclass of \MSOSI.
We have given three different characterizations: in terms of closure of simple functions by the \maplex\  operator, as lexicographic automatic transductions which can be seen as a syntactic restriction of \MSOSI, and finally as nested marble transducers.

Thanks to these characterizations we have shown that this class subsumes both \PREG and \SST. Moreover it is actually closed under post-composition by \PREG and thus is regularity preserving, which \MSOSI is not known to be.

\subsubsection{Open questions on \MSOSI}
We leave open whether \MSOSI is regularity preserving. A way to attack the problem is trying to obtain an equivalent automata-based model, as stated in Question~\ref{ques:1}.
However, as we have shown, a positive answer would entail that all automatic $\omega$-words have a decidable \MSO theory, which has been open for almost 20 years.

Since the inverse image of an \FO definable language by an \MSOSI is regular, it means in the case of words that the inverse image of any language recognized by an \emph{aperiodic} monoid is regular, thanks to the famous result of Schützenberger \cite{DBLP:journals/iandc/Schutzenberger65a}. Using a decomposition result of Krohn-Rhodes \cite{KrohnRhodes} this means that to show that \MSOSI over words are regularity preserving one only needs to show that the inverse image of language recognized by a group (even a simple group) is regular. Moreover, it can be shown \cite[Theorem~3.2]{DBLP:conf/lics/Gradel20} that \MSOSI over words satisfy the backward translation property for first-order logic with modulo quantifiers, which can say for instance \textit{there is an even number of positions $x$ such that $\phi(x)$ holds}. This logic corresponds to languages recognized by \emph{solvable} monoids, that is monoids containing only solvable groups.
Hence,  to show that word-to-word \MSOSI are \emph{not} regularity preserving, one could start by considering languages recognized by $\mathcal A_5$ the smallest non-solvable group, which is a subgroup of the permutation group of $5$ elements $\mathcal S_5$.

\subsubsection{Open questions on \LEX}
We have shown that \LEX is a rather well-behaved class, however some interesting questions remain open.
We have not shown that \MSOSI \emph{strictly} subsumes \LEX, although we suspect it does. For instance we have not been able to show that \LEX is closed under pre-composition with \REG, and we believe that it is not.

The equivalence problem is central to transducer theory, however equivalence of \PREG transductions is not known to be decidable, and since \PREG is subsumed by \LEX it is also the case for the latter.

One can decide whether a \LEX transduction is in \PREG, however we don't know whether it is decidable if a \LEX transduction is in $\LEX_1$, and more generally compute on which level of the hierarchy it sits.

\subsubsection{Possible extensions of \LEX}
While \LEX has proven to be an interesting class, the zoo of word-to-word transductions with exponential growth is relatively unknown.
We propose three possible extensions of \LEX, in increasing expressiveness, all of which turn out to be included in \MSOSI.

The class $\LEX\circ\REG$ may be an interesting class in itself and the first level $\LEX_1\circ \REG$ coincides with the rather natural class of two-way streaming string transducers.

A more general way of extending \LEX is to generalize the operation \lexenum{} to allow lexicographic orders where the significance of letters is given by an arbitrary \MSO definable order. This class subsumes $\LEX\circ \REG$ however it is not clear that it is regularity preserving.

Another possible generalization is to replace marbles with the so-called invisible pebbles of Engelfriet\cite{DBLP:journals/tcs/EngelfrietHS21} and define \emph{nested invisible pebble transducers}, where the nested levels can see the pebbles of the previous levels but not the ones of their own. The state-passing free version can be shown to be still included in \MSOSI but it is not clear that it is regularity preserving. However the version with state-passing is too expressive since it can recognize non-regular languages.

\bibliographystyle{alpha}
\bibliography{main}

\newpage

\appendices

\section{Omitted proofs of Section~\ref{sec:msosi}}

\subsubsection{Proof of Proposition~\ref{prop:back-seq}}

\backseq*

\begin{proof}[Proof sketch]
     
     (2)$\Rightarrow$ (1) Given a transduction computed by an \MSOSI, and a regular language $L$, take a deterministic automaton recognizing it, and let $g$ be computed by the Mealy machine which labels every letter by the target state of the transition. Let $L'$ denote the language of annotated words whose last letter is annotated by a final state.
     Then $g\circ f(u)\in L'$ if and only if $f(u)\in L$. Since $L'$ is first-order definable and \MSOSI are closed under \FOI, then $(g\circ f)^{{-}1}(L')=f^{{-}1}(L)$ is regular.
    (1)$\Rightarrow$ (2) Let $f:\Sigma^*\partialf \Gamma^*$ be defined by an \MSOSI of dimension $k$.  Let $u$ be a word and let  $P\subseteq \pos(u)$, we can naturally associate a word $v\in \set{0,1}^{|u|}$ representing the set $P$. For $P_1,\ldots,P_k\subseteq \pos(u)$, we use the notation $u(P_1,\ldots, P_k)$ for the word $u\otimes v_1\otimes\ldots v_k$ where $P_i$ is represented by $v_i$.

    We can define an \MSOSI realizing $f_{\text{prefix}}:\Sigma\times \set{0,1}^k\partialf \Gamma^*$, which maps $u(\overline P)$ to the prefix of $f(u)$ restricted to the $k$-tuples of subsets coming \emph{strictly} before $\overline P$.

    let us consider a Mealy machine computing a transduction $g:\Gamma^*\rightarrow\Lambda^*$ and let $L_p$ the set of words $u$ such that $u$ reaches state $p$ in the Mealy machine, which is regular. Then by assumption $f_{\text{prefix}}^{{-}1}(L_p)$ is a regular language. Thus we can define a formula $\psi_p(X_1,\ldots,X_k)$ such that $u\models \psi_p(P_1,\ldots,P_k)$ if and only if $u(\overline P)\in f_{\text{prefix}}^{{-}1}(L_p)$.
    
    If we have $\phi_\gamma(X_1,\ldots,X_k)$ defining the positions of $f(u)$ labeled by $\gamma$, then we define the positions of $g\circ f(u)$ labeled by $\alpha$ by a disjunction for all states $p$ with a transition $p\xrightarrow{\gamma|\alpha}q$: $$\bigvee_{p\xrightarrow{\gamma|\alpha}q}\psi_p(X_1,\ldots,X_k)\wedge \phi_\gamma(X_1,\ldots,X_k)$$
    (3)$\Rightarrow$ (2) Mealy machines compute particular cases of polyregular transductions.
    
    (2)$\Rightarrow$ (3) For this direction we rely on several decomposition results. The first, from \cite[Lemma~9]{BKL19} says that any polyregular transduction can be obtained as the composition of a first-order interpretation and a rational letter-to-letter transduction.
    Secondly, rational transductions are known to be compositions of sequential and co-sequential transductions \cite{DBLP:books/lib/Berstel79}. Moreover, a co-sequential function is the composition of the form $\rev\circ f\circ \rev$ with $f$ sequential, and $\rev$ is an \FOI.
\end{proof}

\section{Omitted proofs of Section~\ref{sec:lex} and Section~\ref{sec:examples}}

\subsubsection{Proof of Proposition~\ref{prop:atklex2klex}}\label{app:atklex2klex}

\atklextoklex*

\begin{proof}
    Let $f:\Sigma^*\partialf \Gamma^*$. Suppose that 
    $f = \maplex_{\ordera_1}\ \maplex_{\ordera_2} \dots\
    \maplex_{\ordera_k}\ s$ for some simple function $s$ of the form
    $L_\epsilon\ch\epsilon+\sum_{\gamma\in\Gamma} L_\gamma\ch\gamma$. 
    Then
    $\mathcal{A}_{dom}$ is defined as any automaton recognizing
    $\dom(f)$, which is regular by
    Lemma~\ref{lem:regdomlex}. $\mathcal{A}_{univ}$ is any automaton 
    recognizing $\dom(s)$, which is also regular. Finally, for all
    $\gamma\in\Gamma$, $\mathcal{A}_\gamma$ is any automaton
    recognizing $L_\gamma$. 

    Conversely, given $\mathcal{A}_{dom},\mathcal{A}_{univ}$ and
    $\mathcal{A}_\gamma$ for all $\gamma\in\Gamma$ defining $f$, we
    define a simple function $s$ such that $f = \maplex_{\ordera_1}\ \maplex_{\ordera_2} \dots\
    \maplex_{\ordera_k}\ s$. Define $L_\epsilon$ the (regular) set of
    words $(u\otimes b_1\dots\otimes b_k)\in (\Sigma\times
    B_1\times\dots\times B_k)^*$ such that $u\in
    L(\mathcal{A}_{dom})$. Then, $s = L_\epsilon\setminus(\bigcup_{\gamma\in \Gamma}L(\aut_\gamma)\ch \epsilon +
    \sum_{\gamma\in\Gamma} L(\mathcal{A}_\gamma)\ch\gamma$. 
\end{proof}

\subsubsection{Proof of Lemma~\ref{lem:seqlex}}\label{app:seqlex}

\seqlex*

\begin{proof}
    Let $T = (A,\out)$ a sequential transducer recognizing a
    transduction $f : \Sigma^*\partialf \Gamma^*$, where $A = (\Sigma,Q,q_0,F,\delta)$.  We define an ordered alphabet $\ordera=(B,\order)$ and a simple transduction $s : (\Sigma\times
    B)^*\partialf \Gamma^{\leq 1}$ such that $f = \maplex_{\ordera}\ s$.
    Let $M = \text{max}_{(q,\sigma)\in Q\times \Sigma}
    |\out(q,\sigma)|$ be the length of the longest word occurring
    on the transitions of $T$.

    We take $B = Q\times \{\bot,1,\dots,M\}$. Intuitively, the
    $Q$-component is used to annotate the input word $u$ with the run of
    $T$, if $u\in\dom(f)$, and exactly one position is annotated with
    some integer $j\neq \bot$, meant to identify the $j$-th letter of
    the output word produced by $\out$. Since not all annotations
    satisfy these constraints, we define the set of \emph{valid}
    annotated words as the set $V$ of words $v =
    (\sigma_1,(q_1,b_1))\dots (\sigma_n,(q_n,b_n))\in (\Sigma\times B)^*$
    such that $\sigma_1\dots\sigma_n\in\dom(f)$, $q_0q_1\dots q_n$ is \emph{the} accepting
    run of $A$ on $\sigma_1\dots\sigma_n$, and, there is a unique $i$
    such that $b_i\neq\bot$, and in that case,
    $b_i\in\{1,\dots,|\out(q_{i{-}1},\sigma_i)|\}$. $V$ is easily
    seen to be regular. If $v\in V$, then we let $s(v) =
    \out(q_{i{-}1},\sigma_i)[b_i]$. If $v\not\in V$ but
    $\sigma_1\dots\sigma_n\in\dom(f)$, then we let $s(v)=\epsilon$. In
    all other cases, $s(v)$ is undefined. The transduction $s$ is simple.

    It remains to order $B$ so that the lexicographic enumeration of
    valid annotations coincides with the sequential productions of outputs by
    $T$. To do so, we let $(q_1,b_1)\order (q_2,b_2)$ if $b_1< b_2$ (with
    $\bot$ being smaller than any integer), and
    if $b_1=b_2$, $q_1<_Q q_2$ for some arbitrary order $<_Q$ over $Q$.
    For example, for a sequential transducer with a single state $q$
    (both initial and final) over $\Sigma=\Gamma=\{a\}$, a loop on
    state $q$ with
    $\out(q,a)=aa$, the \emph{valid} annotations of the input word $aa$
    are lexicographically enumerated as follows:
    $$
    \pp{q_0}{1}\pp{q_0}{\bot}\order     \pp{q_0}{2}\pp{q_0}{\bot}\order
    \pp{q_0}{\bot}\pp{q_0}{1}\order
    \pp{q_0}{\bot}\pp{q_0}{2}
    $$
\end{proof}

\section{Omitted proofs of Section~\ref{sec:nmt}}
\label{app:nmt}

\subsubsection{Proof of Lemma~\ref{lem:closureseq}  }

\closureseq*

\begin{proof}
If $k=0$, $T$ is a simple transducer defined as a DFA with outputs on
accepting states. Equivalently, $T$ can be seen as a sequential
transducer which produces the empty word all the time, and produces a
(possibly) non-empty word when reading the rightmarker $\dashv$. Then,
it is well-known that sequential transducers are closed under
composition, and the result follows by Lemma~\ref{lem:seqlex}.

If $k>0$, the composition is realized by a product construction
between $T$ and $S$ that we denote $T\otimes S$. States of $T\otimes S$ are therefore pairs of states of
$T$ and $S$: $Q_{T\otimes S} = Q_T\times Q_S$. Whenever $T$ applies its output function $\out$ on
some state $q$ and marble $c$, producing a word $\out(q,c)\in\Gamma^*$ and moving to
some state $q'$, in the
product construction, from a pair of states $(q,s)$, if
$s\xrightarrow{\out(q,c)/w}_S s'$, then $T\otimes S$ produces $w$ and moves
to state $(q',s')$. Now, suppose $T$ reads the leftmarker $\vdash$ in
some state $q$ and marble $c$, and calls a $(k{-}1){-}\NMT$ $R$ initialized with state
$p=\deltac(q,c)$. In the product, if $T\otimes S$ is in state $(q,s)$ with marble $c$, then it
calls $R\otimes S$, whose construction is obtained inductively,
initialized with state $(p,s)$. When $R\otimes S$ returns in
some state $(p',s')$, then $T\otimes S$ continues its computation from
state $(\deltar(q,c,p'),s')$. The construction ensures the following invariant: for all $q\in Q_T$, all $s\in Q_S$, all $u\in\Sigma^*$, all $v\in\Gamma^*$ and all $w\in \Lambda^*$, the following holds:
$$
\begin{array}{llllllll}
    (f_T^{op}(q,u) = (v,q')\ \wedge\ f_S^{op}(s,v) =
  (w,s'))\\\qquad\qquad \Rightarrow f_{T\otimes S}^{op}((q,s), u) = (w, (q',s')).
  \end{array}
    $$

\end{proof}

\subsubsection{Proof of Lemma~\ref{lem:regaut}}
\label{app:lemregaut}

\regaut*

\begin{proof}
    It is clear that any regular language is the domain of some simple
    transduction. Conversely, 
    let $A$ be a nested marble automaton of level $k$. If $k=0$, it
    is obvious. If $k=1$, then $A$ is a marble automaton, and marble automata are
    known to recognize regular languages. The conversion of a marble
    automaton into a finite automaton is exponential both in the
    number of states and number of marbles (see
    \eg\cite{DBLP:journals/actaC/EngelfrietHB99,DBLP:conf/mfcs/Doueneau-TabotF20},
    as well as Theorem 5.4 of~\cite{DBLP:journals/iandc/KamimuraS81} for a
    detailed construction).

    If $k>1$, let $A =
    (\Sigma,C,c_0,Q_A,q_0,F,\delta,\deltac,\deltar,A')$ where
    $A'$ has level $k{-}1$ (in the tuple, we have omitted the output
    alphabet and function, since they play no role). For all states $q_c,q_r$
    of $A'$, we let $A'_{q_c,q_r}$ be the automaton $A'$ where $q_c$
    is initial, and $q_r$ is the only final state.
    By induction
    hypothesis, $L(A'_{q_c,q_r})$ is regular for all pairs $(q_c,q_r)$, recognizable by some finite
    automaton $D_{q_c,q_r}$, which we assume to be backward deterministic
    (\ie there is a single accepting state, and for any state $q$ and
    any symbol $a$, there is at most one incoming transition to $q$
    reading $a$). We turn $A$ into a marble
    automaton $B$ (of level $1$) such that $L(B) = L(A)$.
    Since $A$ processes its input from right to left, we use
    the backward-determinism of the automata $D_{q_c,q_r}$ and extra marbles such that at any
    point on the input, $B$ knows the state of all the automata $D_{q_c,q_r}$ after reading the
    current suffix from right to left. Thanks to this information, $B$
    can simulate $A$ and whenever $A$ calls $A'$ with some initial
    state $q_c$, instead, $B$ knows, if it exists,
    the state $q_r$ of $A'$ such that
    the current marked input is accepted by $A'_{q_c,q_r}$. If such a
    state exists, then it is unique as $A'$ is deterministic, and $B$
    can bypass calling $A'$ and directly apply its return
    transition. If such a state does not exist, $B$ stops. It is easy
    to construct $B$ such that it can compute this information, by
    marking the input with all the states of all automata
    $D_{q_c,q_r}$ reached on the current suffix from right to left,
    and by using backward-determinism to compute such tuples
    deterministically from right to left. The construction is
    exponential, and yields a tower of $k$-exponential inductively. 

    We construct $B$
    such that it satisfies the following invariant: for all configuration
    $\nu = (q,i,(c,\Psi)v)$ 
    of $B$, on some input word
    $u = \sigma_0\sigma_1\dots\sigma_n$, where $q\in Q_A$, if $\nu$ is
    the $j$-th configuration reachable from the
    initial configuration, then 
    \begin{enumerate}[$(i)$]
    \item $(q,i,c\pi_C(v))$ is the $j$-th configuration
    of $A$ on $u$ reachable from the initial configuration, and 
    \item 
    for all pairs $(q_c,q_r)\in Q_{A'}^2$, $\Psi(q_c,q_r)$ is the state of $D_{q_c,q_r}$ after reading the word
    $(\sigma_i\sigma_{i+1}\dots\sigma_n)\otimes (c\pi_C(v))$.
    \end{enumerate}
    It is easy for $B$ to maintain this invariant: $B$ simulates $A$
    and in any state $q\in Q_A$ and input symbol $\sigma$, if $A$
    reads a marble $c$ (as well as $\sigma$), drops a marble $c'$ (and so moves
    left) and transitions to some state $q'$, then if $B$ reads marble
    $(c,\Psi)$ in state $q$, it moves to state $q'$ and drops marble $(c',\Psi')$ where for all
    $(q_c,q_r)\in Q_{A'}^2$,
    $\delta_{D_{q_c,q_r}}(\Psi'(q_c,q_r),\sigma) = \Psi(q_c,q_r)$
    (this
    state is unique as $D_{q_c,q_r}$ is backward-deterministic). If
    $A$ lifts a pebble and moves to state $q'$, so does $B$.

Knowing the state of automata $D_{q_c,q_r}$ is important to bypass the calls to
$A'$. Indeed, suppose that $B$ reads $\vdash$ in some state $q\in
Q_A$, reading some pebble $(c,\Psi)$. Then, $B$, instead of calling
$A'$ with initial state $q_c=\deltac(q,c)$, checks whether there exists
some state $q_r$ such that $\Psi(q_c,q_r)$ is accepting for $D_{q_c,q_r}$,
in which case $B$ lifts the current pebble and moves to state
$\deltar(q,c,q_r)\in Q_A$. If no such state $q_r$ exists, then $B$
stops. By construction, we have $L(B) = L(A)$, which suffices to
conclude as $B$ is of level $1$.

Finally, we briefly justify why the non-elementary blowup is
unavoidable, by showing how to convert
    an \FO-sentence in prenex normal form, in exponential time, and more generally an \FO-formula
    $\phi(x_1,\dots,x_n)$ with $n$-free first-order variables in
    prenex normal form into an
    equivalent nested marble automaton $A_\phi$ over alphabet
    $\Sigma\times \{0,1\}^n$. The non-elementary blowup to convert
    \FO-formulas into DFA is known to be unavoidable for
    formulas in prenex normal form~\cite{DBLP:journals/apal/FrickG04}.

    Quantifier-free formulas can be converted into DFA (and a
    fortiori into nested marble automata) in exponential time. We now
    explain how to treat quantifiers.

    For a formula $\phi = \forall x\psi(x_1,\dots,x_n,x)$,
    if $A_\psi$ is a nested marble automaton of level $k$, then
    $A_\phi$ is a nested marble automaton of level $k+1$ which, for
    all input position $x$, from right to left, marks $x$ with some
    marble $1$ and the other with marble $0$, moves towards $\vdash$,
    call $A_\psi$. If $A_\psi$ returns a non-accepting state, then
    $A_\phi$ stops and rejects. If $A_\psi$ returns an accepting
    state, then $A_\phi$ moves
    back to position $x$ (which is the only one marked $1$), lifts $1$,
    and repeats the same process with position $x{-}1$, and so on until
    $x$ is the first input position. For existential quantifiers,
    while standard logic-to-automata constructions use closure under morphisms of
    automata (which yields non-determinism), they can also be dealt
    with directly using marbles. The way to proceed is the same as for
    universal quantifiers, except that the nested pebble automaton
    $A_\phi$ stops and accepts as soon as a position $x$ has been
    found. State-passing, and in particular here getting the state of
    the assistant automaton back to the top-level, is crucial. Indeed, in
    the case of existential quantifiers, the return state of the assistant
    automaton $A_\psi$ indicates to $A_\phi$ whether it must continue
    looking for some $x$ or stop. 
\end{proof}

\subsubsection{Proof of Theorem~\ref{thm:sp-removal} (formal details)}
\label{app:thm-sp-removal}

\spremoval*

\begin{proof}
There are two kinds of state-passing, through functions $\deltac$ and $\deltar$.
We deal with them separately. 

\paragraph{Removal of function $\deltac$} This function allows to pass information from level $k$ to level $k{-}1$.
More precisely, being on the left end-marker in state $q$, with marble $c$, the transducer of level $k$
aims at calling the assistant transducer in state $\deltac(q,c)$.
As the assistant transducer has access to marbles of level $k$, we can use
them to pass information. Then, the assistant transducer can go through the input
word to recover values of $q$ and $c$ to recompute $\deltac(q,c)$. 
Formally, if $T$ has set of states $Q$ and set of marbles $C$,
we will use $C\times Q$ as new set of marbles, and modify the transitions of $T$
accordingly to ensure that in every reachable configuration 
$(q,i,(c,p).v)$, we have $q=p$. Additionally, the assistant transducer $T'$
is modified as follows: it always starts in the same initial state and
goes to the left end-marker. It reads the marble $(c,q)$ of level $k$ (which is part of 
its input word), goes back to the right end-marker, and starts its computation
from state $\deltac(q,c)$. 

Regarding complexity, the modifications we described require the addition of only a constant number
of states. The alphabets of marbles are also modified, the overall complexity being polynomial.

\paragraph{Removal of function $\deltar$} This function allows to pass information from level $k{-}1$ to level $k$. 
This removal is more involved, and will induce a tower of exponential in terms
of complexity.

Intuitively, we will use Lemma~\ref{lem:regaut} to precompute the final state reached by the assistant transducer.
More precisely, we proceed by induction on $k$. If $k=0$, then the result is trivial as the two classes are 
syntactically the same in this case. Assume the result holds for $k$, and let us prove it holds for $k+1$.
Let $T = (Q,C ,..., T')$ be a $(k{+}1)$-\NMT $T$ with $n$ states, with
$T'$ a $k$-\NMT. We sketch how to build an equivalent $(k{+}1)$-\NMTSPF $\tilde{T}$. 
By induction hypothesis, there exists a $k$-\NMTSPF
$\tilde{T}'$ equivalent with $T'$. In addition, the domain of $T'$ is accepted by a $k$ nested marble automaton hence, 
thanks to Lemma~\ref{lem:regaut}, is a regular language. More precisely, 
for each final state $q'_f$, the language of words accepted by $T'$ with a run ending in 
state $q'_f$ is regular. As we did in the proof of Lemma~\ref{lem:regaut}, we consider a
backward deterministic automaton accepting it, and modify the set of marbles of $T$ to store states of these
automata. Hence, when doing its call transition in state $q$ and over marble $c$, 
$\tilde{T}$ can determine the final
state $q'_f$ reached by $T'$. It then calls $\tilde{T}'$ and directly moves to the right state $\deltar(q,c,q'_f)$.
In addition, if no such state $q'_f$ exists, then it stops. Hence, calls to the assistant transducer always terminate.

Last, we justify the fact that the non-elementary blow-up is unavoidable. It is because the domain of any state-passing free \NMT $S$ is recognizable by a finite automata of exponential size. Indeed, the calls to assistant transducers always terminate, so the domain of $S$
does not depend on assistant transducers, hence can be described by a marble automaton, hence by a finite
automaton of size exponential in the number of local states and marbles of $S$. Thus, the existence of an elementary construction for state-passing removal
would contradict the non-elementary blow-up stated in Lemma~\ref{lem:regaut}.
\end{proof}

\section{Omitted proofs of Section~\ref{sec:equivalence}}

\subsubsection{Proof of Theorem~\ref{thm:lex2marble}}

\lextomarble*

\begin{proof}
    The proof goes by induction on $k$. It is immediate for $k=0$. 
    Now, let $f:\Sigma^*\partialf \Gamma^*$ such that $f =
   \maplex_\ordera\ g$, with $\ordera=(B,\order_B)$, for some $g:(\Sigma\times B)^*\rightarrow
   \Gamma^*$ in $\LEX_k$. By induction hypothesis, there exists an
   \NMT of level $k$ $T_g$ which recognizes $g$. To obtain an \NMT $T_f$
   recognizing $f$, $T_f$ needs to enumerate all the possible
   annotations of its input according to $\order_B$, using $B$ as its
   set of marbles. Whenever a complete annotation of its input
   is obtained, it calls $T_g$ recursively. 
   We now explain how $T_f$ realizes the
   enumeration. Initially, it annotates from right to left the whole
   input with marbles $b_0$, where $b_0$ is the
   $\order_B$-minimal element. Now, suppose that $T_f$ has annotated the
   whole input (in $\Sigma$) of length $\ell$ with some $b_1\dots b_\ell\in B^*$. To obtain the successor
   annotation, it reads the marbles $b_1b_2\dots$ from left
   to right, and looks for the first position $i$ such that $b_i$ is not the
   $\order_B$-maximal element. It replaces $b_i$ by $\suc_{\ordera}(b_i)$
   and, going from right to left, replaces any $b_j, j<i$, by $b_0$,
   until it reaches the leftmarker $\vdash$. Only a constant number of
   states is needed.
\end{proof}

With a slightly more complicated proof, we can show that it is
possible to convert any lexicographic transduction directly into a
\emph{state-passing free} \NMT. Indeed, in the latter proof, no states
need to be passed to the assistant transducer $T_g$. State-passing
freeness also asks that any call to the assistant transducer $T_g$
succeeds. This is the case whenever $T_g$ is called on annotated words
$({\vdash}u{\dashv})\otimes \overline{b}$ such that $u\in\dom(f)$,
which is regular by Lemma~\ref{lem:regdomlex}. Hence, $T_f$ first
needs to check that its input is in $\dom(f)$, to ensure that it is
indeed state-passing free. 

\subsubsection{Proof of Theorem~\ref{thm:marble2lex} (formal details)}
\label{app:marble2lex}

\marbletolex*

\subsubsection{Bottom producing nested marble transducers}

As a preprocessing step, we prove that \NMTSPF can be restricted to
produce outputs only at the deepest recursive call. An $\NMT$ is
called \emph{bottom producing} if it is either of level $0$, or of
level $k>0$ with a constant output function $\out$ producing $\epsilon$ and a bottom producing assistant $\NMT$. 

\begin{lemma}\label{lem:botproducing}
    Any $\NMTSPF$ $T$ of level $k$ can be converted in \ptime into an equivalent bottom producing
    $\NMTSPF$ $T'$ of level $k$. 
\end{lemma}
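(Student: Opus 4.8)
The plan is to prove Lemma~\ref{lem:botproducing} by induction on the level $k$. For $k=0$ there is nothing to do: a simple transducer is bottom producing by definition. For $k\ge 1$, let $T$ be an $\NMTSPF$ of level $k$ with assistant $T'$ of level $k-1$. First I would apply the induction hypothesis to $T'$, so that we may assume $T'$ is already bottom producing. After this step the only transducers that produce nonempty output are the top-level one (through its function $\out$ on the ``move'' transitions, cases 1 and 2 of the semantics) and the shared level-$0$ simple transducer. It then remains to remove the output emitted at level $k$.

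The core of the construction is a \emph{descent gadget}. Let $M$ be the maximal length of a word $\out(q,\sigma,c)$ occurring in $T$. I add to the level-$k$ marble alphabet a fresh \emph{transit} marble $\tau$ and, for each $a\in\Gamma$, a fresh \emph{emit} marble $(\tau,a)$. Every level-$k$ transition that would produce a nonempty word $w=a_1\cdots a_\ell$ ($1\le \ell\le M$) while moving the head from its current position $i\ge 1$ is replaced by a routine that outputs $\epsilon$ on that transition but first performs, for $j=1,\dots,\ell$ in order: move left down to the left marker, dropping the transit marble $\tau$ on each of positions $i{-}1,\dots,1$ and the emit marble $(\tau,a_j)$ on position $0$; at $\vdash$, call the assistant; on return, move right, lifting the transit marbles, until reaching the position carrying a marble of the original alphabet $C$ — this is exactly position $i$, since $\tau\notin C$. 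After the $\ell$ rounds the original head move is finally performed, now with empty output. Because moving left drops and moving right lifts, each round trip restores the marble stack on positions $\ge i$, so the simulation of $T$'s run between gadget invocations is faithful; the only extra memory needed at level $k$ is the bounded finite-state information ``which letter $a_j$ of which output word $w$ is currently being emitted, and which move to perform afterwards'', so the level-$k$ component grows only polynomially.

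To make the calls meaningful I would modify every transducer in the chain to recognise the emit marbles. When called, the modified assistant at any level first travels to $\vdash$; if it finds there the emit marble $(\tau,a)$ — which, being a level-$k$ marble, is part of the input read by every lower level — it enters an \emph{emit mode} in which it simply travels back to $\vdash$ and calls its own assistant, relaying the request down the chain while producing $\epsilon$; at the bottom, the simple transducer, seeing $(\tau,a)$ on position $0$, accepts immediately with output $a$. Thus one emit call produces exactly the single letter $a$ and returns. If instead the level-$k$ marble on position $0$ lies in $C$, every transducer behaves exactly as before (\emph{normal mode}). One checks that in both modes all assistant calls accept (emit calls by construction; normal calls because $T$ is an $\NMTSPF$), that every level except the bottom one produces only $\epsilon$ — hence the resulting $T'$ is bottom producing — and that $f_{T'}=f_T$, since in $T'$ each output word $w=a_1\cdots a_\ell$ of $T$ is replaced, at the same moment of the run, by the ordered sequence of single-letter emissions $a_1,\dots,a_\ell$ at level $0$. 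All modifications are local and carried out level by level, so the whole conversion is in \ptime.

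The main obstacle is precisely the state-passing-free constraint: we cannot instruct the assistant to ``emit the letter $a_j$'' through its entry state. The fix, and the one delicate point, is to pass this bounded request through a level-$k$ marble, exploiting that (i) such a marble is readable by every lower level, so the request propagates down the chain without any further re-encoding, and (ii) the marble discipline guarantees that the round trips to $\vdash$ used to deposit the request and to relay it down leave every marble stack unchanged. Secondary care is needed for the ordering of the emissions relative to the original head move (all emissions first, then the move) and for navigating back to the correct position $i$ after each round trip, which is what the dedicated transit marble $\tau$ is for.
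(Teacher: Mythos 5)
Your proof is correct, and at its core it is the same construction as the paper's: since the top level cannot produce output itself, the pending output of a transition is encoded into a marble dropped on the way to $\vdash$, and an assistant call is inserted whose job is to detect that marble and actually produce the output, after which the head navigates back (using the marble discipline, which restores the stack) and performs the original move with empty output. The differences are in the mechanics. The paper drops the \emph{whole} word $w=\out(t)$ as a single marble near the current position and lets the level-$(k{-}1)$ assistant output $w$ in one call, then invokes the induction hypothesis on that modified assistant to push production further down; you instead emit $w$ letter by letter, one round trip per letter, placing a single-letter request $(\tau,a_j)$ on position $0$ and relaying it through every level down to the simple transducer, which outputs $a_j$. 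Your variant is slightly more expensive (a factor $M$ more round trips) but it handles explicitly the constraint that a simple transducer outputs words in $\Gamma^{\le 1}$ — a point the paper's recursion glosses over at its base, where a level-$1$ transducer with $|w|>1$ would otherwise ask its level-$0$ assistant to output a word it cannot. Your observation that the request must travel through a top-level marble rather than through the entry state (because of state-passing freeness) is exactly the right one, and your bookkeeping — emissions before the head move, transit marbles to find the way back to position $i$, all emit calls accepting by construction — is sound.
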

\begin{proof}
Whenever $T$ produces some output word $w\in\Gamma^*$, instead $T'$ writes it on the input (as a marble), moves to ${\vdash}$, and calls a modified assistant transducer which first scans its input looking for some marble in $\Gamma^*$, which it outputs, otherwise behaves as the initial assistant transducer. The modified assistant transducer is then made bottom producing inductively. 

  Formally, the proof goes by induction on $k$. If $k=0$, it is already bottom
    producing. 
    Let $T = (\Sigma,\Gamma,C,c_0,Q,q_0,F,\delta,\out,R)$ be an
    $\NMTSPF$ of level $k$. We modify $T$ into an equivalent
    $\NMTSPF$ $T' =
    (\Sigma,\Gamma,C',c'_0,Q',q'_0,F',\delta',\out',R')$ such that
    $\out'$ is the constant function returning $\epsilon$ and $R'$
    is of level $k{-}1$. By induction hypothesis, $R'$ can be turned
    into a equivalent bottom producing $\NMTSPF$ of level $k{-}1$,
    yielding the result.

    Let us sketch the construction of $T'$. Whenever $T$ applies some transition $t\in\delta$ such that
    $w:=\out(t)\neq\epsilon$, instead, $T'$ drops a marble $w$ and
    moves to the leftmarker $\vdash$, dropping a dummy marble $\perp$ and looping in state
    $\overleftarrow{t}$ (this state indicates that $T'$ must move
    left and keeps in memory the transition $t$ which should have been
    applied). Once on the leftmarker, $T'$ calls
    the assistant transducer $R'$, which  works as follows: it scans until
    it sees, on its input, a letter of the form $(\sigma,w)\in
    \Sigma\times \Gamma^*$, in
    which case it produces $w$ and moves to an accepting state (thus
    returning to $T'$). If it sees only input symbols in $\Sigma\times
    C$, it returns to its initial position and behaves as $R$. When
    $R'$ returns, $T'$ is still in state $\overleftarrow{t}$ on
    $\vdash$. $T'$ then moves right, looping in state $\overrightarrow{t}$ until it
    sees the marble $w$, in which case it applies transition $t$, as
    $T$ initially intended. The new set of states is therefore $Q' = Q\cup
    \overleftarrow{\delta}\cup \overrightarrow{\delta}$, and the new set
    of marbles is $C' = C\cup \textsf{codom}(\out)\cup \perp$.
\end{proof}

\subsubsection{Proof of Theorem~\ref{thm:marble2lex}}

\begin{figure*}
\centering
      \begin{tikzpicture}[baseline=0, inner sep=0, outer sep=0, minimum size=0pt]
\tikzset{>={Latex[width=2mm,length=2mm]}}

\draw (-6,0) node (nodeleft) {$\vdash$}; 
\draw (6,0) node (noderight) {$\dashv$}; 
\draw (-5.8,0) -- (5.8,0);
\draw (0,0.2) node (input) {input $u$};

\draw[dotted,thick] (3.25,0) -- (3.25,-7); 
\draw (3.25,0.15) node () {$i$};


\draw (6,-1) -- (4,-1);
\draw (4,-1) arc[start angle=90, end angle=270, radius=0.25];

\draw (4,-1.5) -- (5.75,-1.5);
\draw (5.75,-1.5) arc[start angle=90, end angle=-90, radius=0.25];

\draw (5.75, -2) -- (0, -2);
\draw (0,-2) arc[start angle=90, end angle=270, radius=0.25];


\draw[->] (5,-1) -- (4.8,-1);

\draw[->] (0,-3) -- (-0.2,-3);
\draw[->] (0,-6) -- (-0.2,-6);

\draw[->] (-3,-3.5) -- (-2.8,-3.5);

\draw[->] (4.8,-1.5) -- (5,-1.5);

\draw (0,-2.5) -- (3,-2.5);
\draw (3,-2.5) arc[start angle=90, end angle=-90, radius=0.25];

\draw (3,-3) -- (-5.75,-3);
\draw (-5.75,-3) arc[start angle=90, end angle=270, radius=0.25];

\fill (-6,-3.25) circle (2pt);

\draw (-5,-2.75) node () {configuration $\nu_1$};

\draw (-5.75,-3.5) -- (-1,-3.5);
\draw (-1,-3.5) arc[start angle=90, end angle=-90, radius=0.25];

\draw (-1,-4) -- (-5.75,-4);
\draw (-5.75,-4) arc[start angle=90, end angle=270, radius=0.25];

\draw (-5.75,-4.5) -- (3,-4.5);
\draw (3,-4.5) arc[start angle=90, end angle=-90, radius=0.25];

\draw (3,-5) -- (-1.5,-5);
\draw (-1.5,-5) arc[start angle=90, end angle=270, radius=0.25];

\draw (-1.5,-5.5) -- (3,-5.5);
\draw (3,-5.5) arc[start angle=90, end angle=-90, radius=0.25];

\draw (3,-6) -- (-6,-6);

\fill (-6,-6) circle (2pt);
\draw (-5,-6.25) node () {configuration $\nu_2$};

\draw[->,dashed] (3.25,-2.75) .. controls (5, -4.25)  .. (3.25,-5.75);

\fill (3.25,-2.75) circle (2pt);
\fill (3.25,-5.75) circle (2pt);

\draw (5.54,-4.25) node () {$\begin{array}{c} \text{right-to-right} \\ \text{traversal}\end{array}$};

\draw (5,-2.75) node () {$T$ is in state $\textsf{last}_{\nu_1}(i)$};
\draw (5,-5.75) node () {$T$ is in state $\textsf{last}_{\nu_2}(i)$};

\end{tikzpicture}
\caption{Moves of $T$ illustrated, with $\nu_1\rightarrow_T^* \nu_2$\label{fig:illustrateorder}}
\end{figure*}

\begin{proof} 
 Let $f : \Sigma^*\partialf\Gamma^*$ be recognizable by some
    $\NMTSPF$ $T$ of  level $k$. We show that $f\in\LEX_k$.
    The proof goes by induction on $k$. It is immediate for $k=0$.
    We prove the induction step.

    \vspace{2mm}
    \emph{Assumptions}  Let $T = (\Sigma,\Gamma,C,c_0,Q,q_0,F,\delta,R)$ be an \NMTSPF of level $k$.
    To simplify the proof, we make a few
    assumptions on $T$, which are without loss of generality. First, we assume $T$ to be bottom
    producing thanks to Lemma~\ref{lem:botproducing} (that is why
    the output function $\out$ is not mentioned in the tuple above).


    We also assume that $T$, whenever it processes an input
    word, always start by initially doing a full right-to-left pass marking the input with
    $c_0$, and then comes back to the rightmarker
    $\dashv$. If that is not the case, then $T$ can easily be modified
    by adding a loop on state $q_0$ which always drops marble
    $c_0$ until $\vdash$ is read, then coming back to
    $\dashv$. The subtransducer $R$ is also modified in such a way
    that it does not produce anything on the first initial
    marking in $c_0^*$. This assumption will be useful
    when ordering configurations, so that initial markings will always
    be of the latter form.

   \vspace{2mm}
   \emph{Useful notions: full configurations and right-to-right traversals}
   Since $T$ is bottom-producing, only configurations whose
    reading head is placed on $\vdash$ must be considered in the
    enumeration.     A \emph{full configuration} of $T$ on some word ${\vdash}
    u{\dashv}$ is a configuration whose reading head is placed on
    $\vdash$, \ie a triple $(q,i,v)$ such that $i=0$. We denote by $\mathbb{C}_T^{f}(u)$ the set of full
    configurations on input ${\vdash}
    u{\dashv}$ \emph{reachable} from the initial configuration.

    We now define the  notion of \emph{right-to-right traversals}, which is
    standard in the theory of two-way machines. 
    Given a word
    ${\vdash}u{\dashv}$, a position $i$ and a marble $c$, we say that a pair of
    states $(q_1,q_2)$ is a right-to-right traversal (RR for short) at
    position $i$
    marked $c$, if,  informally, 
    $T$ can go from a configuration 
    $(q_1,i,cv)$ for some $v$ to the configuration $(q_2,i,cv)$ without
    visiting any position to the right of $i$. Formally, there
    must exist a sequence of configurations
    $$
\begin{array}{c}    
(q_1,i,cv)\rightarrow_T 
    (p_1,i_1,v_1cv)\rightarrow_T(p_2,i_2,v_2cv)\rightarrow_T\dots\\
    \rightarrow_T (p_m,i_m,v_mcv)\rightarrow_T (q_2,i,cv)
    \end{array}
    $$
    such that $i_j\leq i$ for all $1\leq j\leq m$. In particular, $v$
    could be substituted by any word $v'$ of same length in the
    sequence above. We denote by
    $\rr_T(u,i)\in (C\rightarrow 2^{Q^2})$ the function which
    associates with any $c\in C$, the set $\rr_T(u,i)(c)$ (just
    denoted $\rr_T(u,i,c)$), of RR
    traversals at position $i$ marked $c$. Note that for all $c,q$, it
    holds that, $(q,q)\in \rr_T(u,i,c)$.

    \vspace{2mm}
    \emph{Main argument} The main idea of the proof is to define an
    ordered alphabet $\ordera = (B,\order)$ where the set $B$ is 
    of the form $C\times B'$ for some $B'$, and 
    for all $u\in\dom(f)$ an injective mapping $\Phi_u : \mathbb{C}_T^{f}(u)\rightarrow B^*$
    such that the following properties hold:
    \begin{enumerate}
      \item for all $\nu = (q, 0, v)\in \mathbb{C}_T^{f}(u)$, $|\Phi_u(\nu)| =
        |u|+2$ and $\pi_C(\Phi_u(\nu)) =v$.  
      \item for all $\nu_1,\nu_2\in\mathbb{C}_T^f(u)$, $\nu_1\rightarrow^*_T\nu_2$ iff $\Phi_u(\nu_1)\order
        \Phi_u(\nu_2)$
      \item $\{({\vdash}u{\dashv})\otimes \Phi_u(\nu)\mid
        u\in\dom(f)\wedge \nu\in \mathbb{C}_T^{f}(u)\}$ is regular.
    \end{enumerate}

    Property $(1)$ ensures that the $\Phi_u(\nu)$ preserves the sequence of
    marbles of $\nu$. Property $(2)$ ensures that the sequence of
    successive configurations of $T$ on ${\vdash}u{\dashv}$ can be
    embedded, modulo $\Phi_u$, in the lexicographic enumeration of
    $B^{|u|+2}$. However, some words in $B^{|u|+2}$ may not correspond
    to any full configuration. Property $(3)$ ensures that this can be
    controlled with a finite automaton. If those properties are
    guaranteed, then it suffices to modify $R$ into an \NMTSPF $R'$ of
    same level which, when reading a word in $({\vdash}u{\dashv})\otimes
    \overline{b}$, where $u\in\Sigma^*$ and $\overline{b}\in B^*$, proceeds
    as follows:
    \begin{enumerate}

\item $R'$ first checks  whether $u\in\dom(f)$ and $\overline{b}=\Phi_u(\nu)$ for some
    $\nu$ a full configuration. By property $(3)$, this can be done by
    a finite automaton $A$, which is simulated as follows: $R'$ first
    moves to $\vdash$, then simulates $A$ from left to right, ending
    up in $\dashv$ in some state $q$. 

    \item if $q$ is an accepting state of $A$, then $R'$ simulates $R$ on the $C$-projection of
      $\overline{b}$, otherwise it does nothing (thus producing
      $\epsilon$). 
\end{enumerate}

By induction hypothesis, $R'$ recognizes a
    transduction $f'\in \LEX_{k{-}1}$, and as an immediate consequence of 
    $(1)$ and $(2)$, we get $f\ =\ \maplex_\ordera\ f'$, which proves
    the result.

    It remains to define $\ordera=(B,\order)$ and $\Phi_u$, and prove that they
    satisfy properties $(1)-(3)$. 
    
    \vspace{2mm}
    \paragraph{Definition of $\ordera=(B,\order)$ and $\Phi_u$} Let $u\in\dom(f)$. First, given a full
    configuration $\nu\in\mathbb{C}^f_T(u)$, a position $i\in\{0,\dots,|u|+1\}$, and a
    state $q\in Q$, we say that that $q$ is the last visited state for
    $\nu$ at position $i$, denoted $\textsf{last}(\nu,i)=q$, if,
    informally, $q$ is the last state in which $T$ was at position
    $i$, before reaching configuration $\nu$. Formally, by definition
    of $\mathbb{C}^f_T(u)$, $\nu$ is reachable from the initial
    configuration, hence there exists a finite sequence
    $\nu_0 = (q_0,|u|+1,c_0)\rightarrow_T (q_1,i_1,v_1)\rightarrow_T
    \dots\rightarrow_T (q_\ell,i_\ell=i,v_\ell)\rightarrow_T \dots
    \rightarrow_T (q_m,i_m=0,v_m)=\nu$ such that $\nu_0$ is the initial
    configuration and $\ell$ is such that $i_j<i$ for all $\ell <
    j\leq m$. We let $\textsf{last}(\nu,i) = q_\ell$.

    The set of marbles $B$ is defined as $B = C\times
    Q\times 2^{Q\times Q}$, where the second component is aimed to be
    the last states and the third component the set of RR-traversals.
    Tuples $(c_1,p_1,X_1)$ and $(c_2,p_2,X_2)$ such that $c_1\neq c_2$
    or $X_1\neq X_2$ are arbitrarily ordered (the choice of ordering
    does not matter). Otherwise, we let $(c,p_1,X)\order (c,p_2,X)$ if
    $p_1\neq p_2$ and $(p_1,p_2)\in X$.

    We now define $\Phi_u$ (for $u\in\dom(f)$). Let $\nu = (q,0,v) \in\mathbb{C}_T^f(u)$ be a full configuration
    of $T$ on ${\vdash}u{\dashv}$. By definition of
    $\mathbb{C}_T^f(u)$, $\nu$ is reachable from the initial
    configuration, hence there exists a sequence of configurations
    $\nu_0\rightarrow_T\nu_1\rightarrow\dots\rightarrow \nu$ reaching
    $\nu$ from the initial configuration $\nu_0$. We define
    $\Phi_u(\nu)$ as $v\otimes r$ where for all $0\leq i\leq |u|+1$,
    $r[i] = (\textsf{last}(\nu,i),\rr(u,i,v[i]))$.

\vspace{2mm}
    \emph{Proof of the properties of $\Phi_u$} Given $\alpha =
    (c_1,q_1,X_1)\dots (c_n,q_n,X_n)\in B^*$,
    we let $\pi_C(\alpha) =c_1\dots c_n$, $\pi_Q(\alpha)=q_1\dots q_n$
    and $\pi_\rr(\alpha) = X_1\dots X_n$. First, injectivity of
    $\Phi_u$ and property $(1)$ are immediate by definition.

    Let us show property $(2)$, which is
    the most interesting. Let $\nu_1 = (q_1,0,v_1)$ and $\nu_2 =
    (q_2,0,v_2)$.
    From right to left, suppose that
    $\Phi_u(\nu_1)\order \Phi_u(\nu_2)$. Then, either they are
    equal and so are $\nu_1$ and $\nu_2$ because $\Phi_u$ is
    injective, or they are different. In the latter case they can be
    decomposed into
    $\Phi_u(\nu_1) = w'_1(c_1,p_1,X_1)w$ and $\Phi_u(\nu_2) =
    w'_2(c_2,p_2,X_2)w$ for some $w'_1,w'_2,w\in B^*$, $p_1,p_2\in Q$,
    $c_1,c_2\in C$ such that $(c_1,p_1,X_1)\neq (c_2,p_2,X_2)$. Let $i = |w'_2| =
    |w'_1|$.

    We first show that necessarily, $c_1=c_2$. Let us assume that
    $c_1\neq c_2$. We derive a contradiction. Indeed, by
    definition of marble transducers, the only way to modify a marble
    at some position $i$ is to lift the current marble, move right and
    eventually return to position $i$ and drop another
    marble. This cannot be done while having the same suffix $w$. 
Less informally, if we denote by $\rho_1$ the configuration
    $(p_1,i,c_1\pi_C(w))$ and by $\rho_2$ the configuration
    $(p_2,i,c_2\pi_C(w))$, then either $\rho_1\rightarrow^*_T\rho_2$
    or $\rho_2\rightarrow^*_T\rho_1$. Assume that
    $\rho_1\rightarrow^*_T\rho_2$, the other case being symmetric. 
    In the sequence of
    configurations to go from $\rho_1$ to $\rho_2$, there is necessarily a configuration whose reading head
    is to the right of $i$, \ie there exists $\rho_3 = (p_3,j,v')$
    such that $j>i$ and $v'$ is a suffix of $\pi_C(w)$ such that
    $\rho_1\rightarrow_T^* \rho_3\rightarrow_T^*\rho_2$. Assume that
    $j$ is the largest integer such that such a configuration $\rho_3$
    exists. It implies that $p_3 = \textsf{last}(\nu_2,j)$. Since $v'$
    is a suffix of $\pi_C(w)$, we also get that
    $p_3=\textsf{last}(\nu_1,j)$. Therefore, there is a cycle $\rho_3\rightarrow_T^*\rho_1\rightarrow_T^*\rho_3$, contradicting
    $u\in\dom(f)$.

    Therefore, $c_1=c_2$. As the definition of $X_1$ and $X_2$ only
    depends on the input word, the position and the marble $c_1=c_2$,
    we get $X_1=X_2$. Since $(c_1,p_1,X_1)\neq (c_2,p_2,X_2)$, we then have
    $p_1\neq p_2$. Moreover,
    $c_1\pi_C(w) = c_2\pi_C(w)$. Let $v$ denote this word. Hence we
    get $v_1 = v'_1v$ and $v_2=v'_2v$ for some $v'_1,v'_2\in C^*$. 
    By definition of $\Phi_u$, $(p_1,p_2)\in \rr(u,i,c_1=c_2)$, we have
    $(p_1,i,v)\rightarrow_T^*(p_2,i,v)$. 
    Moreover, since
    $p_2 = \textsf{last}(\nu_2,i)$, we also get $(p_2,i,v)\rightarrow_T^*
    \nu_2$. Similarly, since $p_1 =
    \textsf{last}(\nu_1,i)$, we have $(p_1,i,v)\rightarrow_T^*
    \nu_1$. Therefore, since $T$ is deterministic, either $\nu_1\rightarrow_T^*\nu_2$ or
    $\nu_2\rightarrow_T^*\nu_1$ hold. Suppose that
    $\nu_2\rightarrow_T^*\nu_1$, we show a contradiction. In that
    case, we have
    $(p_1,i,v)\rightarrow_T^*\nu_2\rightarrow_T^*\nu_1$. Since
    $p_1\neq p_2$, the sequence of configurations from $\nu_2$ to
    $\nu_1$ has to visit again input position $i$ in state $p_1$, and
    hence $\nu_2\rightarrow_T (p_1,i,v)$. Therefore, there is a cycle
    from $(p_1,i,v)$ to itself, which contradicts that $u\in
    \dom(f)$. Hence $\nu_1\rightarrow_T^*\nu_2$.

    From left to right, suppose that $\nu_1\rightarrow_T^*\nu_2$ and
    $\nu_1\neq \nu_2$. Let $w_1 = \Phi_u(\nu_1)$ and $w_2 =
    \Phi_u(\nu_2)$. Let $w$ be the longest common suffix of $w_1$ and
    $w_2$. Suppose that $|w| = |w_1|=|w_2|$. Since $q_1 = \textsf{last}(\nu_1,0)$ and $q_2 =
    \textsf{last}(\nu_2,0)$, by definition of $\Phi_u$, the first
    letter of $w_1$ has $q_1$ as 2nd component, and the first letter
    of $w_2$ has $q_2$ as 2nd component, from which we get $q_1=q_2$,
    which contradicts $\nu_1\neq \nu_2$. 

    Therefore  $|w|<|w_1|=|w_2|$. Hence $w_1$ and
    $w_2$ can be decomposed into $w'_1(c_1,p_1,X_1)w$ and
    $w'_2(c_2,p_2,X_2)w$ where $(c_1,p_1,X_1)\neq (c_2,p_2,X_2)$. Using the same argument as in the right-to-left
    direction, we can show that necessarily, $c_1=c_2$ and therefore,
    $X_1=X_2$ and
    $p_1\neq p_2$. Let $c=c_1=c_2$ and $X=X_1=X_2$. 
   It remains to show that $(p_1,p_2)\in X =
    \rr(u,i,c)$. Let $x=1,2$. By definition of $p_x = \textsf{last}(\nu_x,i)$, it holds that
    $(p_x,i,c\pi_C(w))\rightarrow_T^* \nu_x$ following a sequence of
    configurations which never visit positions strictly larger than
    $i$. Therefore, since $\nu_1\rightarrow_T^*\nu_2$, either
    $\nu_1\rightarrow_T^* (p_2,i,c\pi_C(w))$ or
    $(p_2,i,c\pi_C(w))\rightarrow_T^* \nu_1$. We consider the two
    cases:
    \begin{enumerate}
      \item $\nu_1\rightarrow_T^* (p_2,i,c\pi_C(w))$: the sequence
        of configurations from $\nu_1$ to $(p_2,i,c\pi_C(w))$ cannot
        visit positions at the right of $i$, because $w$ is the
        longest common suffix of both $w_1$ and $w_2$, so moving right
        $i$ would imply modifying that suffix. Therefore, we get
        $(p_1,i,c\pi_C(w))\rightarrow_T^*\nu_1\rightarrow_T^*(p_2,i,c\pi_C(w))$,
        moreover by sequences which never visit positions at the right
        of $i$. It witnesses that $(p_1,p_2)\in \rr(u,i,c)$.

        \item $(p_2,i,c\pi_C(w))\rightarrow_T^* \nu_1$: we show that
          this case is impossible. Indeed, $p_2 =
          \textsf{last}(\nu_2,i)$, hence
          $(p_2,i,c\pi_C(w))\rightarrow_T^*\nu_2$ be a sequence
          which never visits positions $i$ again. In this sequence,
          there is necessarily $\nu_1$, because
          $\nu_1\rightarrow_T^*\nu_2$. It imples that $T$ can go from
          $\nu_1$ to $\nu_2$ be a sequence of configurations which
          never visit position $i$. This contradicts that $w$ is the
          longest suffix. 
      \end{enumerate}

      It remains to prove Property $(3)$. It is
      possible to construct a finite automaton which, given a
      word of the form $({\vdash}u{\dashv})\otimes \overline{b}$,
      checks whether $u\in\dom(f)$ and whether there exists
      $\nu\in\mathbb{C}_T^{f}(u)$ such that $\overline{b} =
      \Phi_u(\nu)$. We do not give the full details of this proof, but
      rather convey some intuitions. Checking whether $u\in\dom(f)$ is
      doable by a finite automaton as \NMT have regular domains, by
      Corollary~\ref{coro:regdomnmt}. Now, given $\overline{b} \in (C\times
      Q\times 2^{Q\times Q})^*$, the automaton has to check whether
      $\pi_C(\overline{b})$ is a reachable annotation of the input
      ${\vdash}u{\dashv}$. It is not difficult to see that it can be
      done with a nested pebble automaton which simulates $T$ until it
      sees $\pi_C(\overline{b})$ (checking equality can be done using
      a linear pass over the input). The result follows as nested
      pebble automata recognize regular languages by
      Lemma~\ref{lem:regaut}. Now, let $\overline{b} =
      (c_1,q_1,X_1)\dots (c_n,q_n,X_n)$. The finite automaton also has
      to check that $(1)$ $q_i = \textsf{last}(\nu,i)$ and $(2)$ that $X_i =
      \rr_T(u,i,c_i)$, for all $1\leq i\leq n$. This can be done by a
      nested pebble automaton which first poses a pebble on position
      $i$, and then checks the two properties above using a assistant
      nested pebble automaton. For property $(1)$, this can be done
      using a nested pebble automaton which simulates $T$. For
      property $(2)$, the right-to-right traversals can be computed
      using a nested marble automaton which, again, simulates
      $T$. In more details, this nested pebble automaton marks the
      current input position $i$ with some special marble, and picks some pair
      $(q_1,q_2)\in X_i$. It then simulates $T$ from position $i$,
      pebble $c_i$ and state $q_1$, until it reaches some $q_2$ as the
      same position. The special marble is used to check that no
      positions strictly greater than $i$ are visited. If no $q_2$ is
      seen, the simulation will either stop (in a non-accepting state)
      or loop foreover. The nested pebble automaton does this check
      for all pairs $(q_1,q_2)\in X_i$. It also needs to check that
      $X_i$ is complete, by also considering all pairs in
      $Q^2\setminus X_i$, and checking that none of them is a
      traversal. This can be done similarly. Once $X_i$ has been
      correctly verified to be equal to $\rr(u,i,c_i)$, the special
      marble is moved to the next position (to the left). 
      Overall, regularity follows from Lemma~\ref{lem:regaut}
      again. 
\end{proof}

\section{Omitted proofs of Section~\ref{sec:ppties}}

\subsubsection{Proof of Theorem~\ref{thm:precompose}}\label{app:precompose}

\precompose*

We first need the
following intermediate lemma:
\begin{lemma}\label{lem:sumclosure}
    Let $f,g:\Sigma^*\rightarrow \Gamma^*$ be two lexicographic
    transductions of level $k_f$ and $k_g$ respectively, such that $\dom(f)\cap \dom(g)=\varnothing$. Then $f+g$
    is lexicographic of level $\text{max}(k_f,k_g)$. 
\end{lemma}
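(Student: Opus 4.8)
The plan is to argue by induction on $n:=\max(k_f,k_g)$, merging the two lexicographic enumerations into a single one over an ``ordinal sum'' ordered alphabet. Two preliminary facts are needed, both established by a routine induction on the level and entirely self‑contained: $\LEX_j$ is closed under precomposition by a letter‑to‑letter renaming (for $j=0$ replace each language $L_i$ of a simple transduction $\sum_i L_i\ch w_i$ by the regular preimage $\rho^{-1}(L_i)$; for $j\ge 1$ use $(\maplex_{\ordera} g)\circ\rho=\maplex_{\ordera}\bigl(g\circ(\rho\times\mathrm{id})\bigr)$), and $\LEX_j$ is likewise closed under extending the input alphabet; in particular $\LEX_j\subseteq\LEX_{j+1}$ (this last inclusion also follows from Corollary~\ref{coro:stricthierarch}). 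By this inclusion we may lift $f$ and $g$ to the common level $k:=n$, so assume $k_f=k_g=k$. If $k=0$, both $f=\sum_i L_i\ch w_i$ and $g=\sum_j K_j\ch v_j$ are simple, and since $\dom(f)=\bigcup_i L_i$ and $\dom(g)=\bigcup_j K_j$ are disjoint, $f+g=\sum_i L_i\ch w_i+\sum_j K_j\ch v_j$ is again a finite union of regular constant transductions over pairwise disjoint regular languages with outputs of length $\le 1$, hence simple. This is the base case.

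For $k\ge 1$, write $f=\maplex_{\ordera_f}f'$ and $g=\maplex_{\ordera_g}g'$ with $\ordera_f=(B_f,\order_f)$, $\ordera_g=(B_g,\order_g)$ and $f',g'\in\LEX_{k-1}$; renaming if necessary (harmless by the first preliminary), assume $B_f\cap B_g=\varnothing$. Let $B=B_f\cup B_g$ and let $\order$ be the linear order on $B$ that restricts to $\order_f$ on $B_f$, to $\order_g$ on $B_g$, and puts every element of $B_f$ below every element of $B_g$; set $\ordera=(B,\order)$. The key point is that the lexicographic extension of $\order$ to $B^n$, restricted to $B_f^n$, coincides with the lexicographic extension of $\order_f$ to $B_f^n$ (the rightmost differing position, and the comparison at it, are unchanged), and symmetrically on $B_g^n$; hence the $\order$‑enumeration of $B^{|u|}$ lists the words of $B_f^{|u|}$ in their $\order_f$‑order and those of $B_g^{|u|}$ in their $\order_g$‑order, with ``mixed'' annotations interspersed.

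Let $\pi_\Sigma,\pi_B$ be the projections of $(\Sigma\times B)^*$ onto $\Sigma^*$ and $B^*$. Using that $\dom(f),\dom(g)$ are regular (Lemma~\ref{lem:regdomlex}), the languages $D_1=\{z:\pi_B(z)\in B_f^*,\ \pi_\Sigma(z)\in\dom(f)\}$, $D_2=\{z:\pi_B(z)\in B_g^*,\ \pi_\Sigma(z)\in\dom(g)\}\setminus D_1$ and $D_3=\pi_\Sigma^{-1}(\dom(f)\cup\dom(g))\setminus(D_1\cup D_2)$ are regular, pairwise disjoint, cover $\pi_\Sigma^{-1}(\dom(f)\cup\dom(g))$, with $D_1\subseteq\dom(f')$ and $D_2\subseteq\dom(g')$ (if $\maplex_{\ordera_f}f'$ is defined on $u$ then $f'$ is defined on every $u$-annotation over $B_f$). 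Put $h=f'_{|D_1}+g'_{|D_2}+(D_3\ch\epsilon)$, viewing $f',g'$ as transductions over the larger alphabet $\Sigma\times B$. By Example~\ref{ex:domrestrict}, $f'_{|D_1},g'_{|D_2}\in\LEX_{k-1}$, and $D_3\ch\epsilon\in\LEX_0\subseteq\LEX_{k-1}$; these three have pairwise disjoint domains, so two applications of the induction hypothesis (at level $k-1$) give $h\in\LEX_{k-1}$. Finally one checks $\maplex_{\ordera}h=f+g$: since $\dom(h)=\pi_\Sigma^{-1}(\dom(f)\cup\dom(g))$ and $B\neq\varnothing$, $\dom(\maplex_{\ordera}h)=\dom(f)\cup\dom(g)$; and for $u\in\dom(f)$ (hence $u\notin\dom(g)$) the only annotations $w\in B^{|u|}$ on which $h$ can be non‑empty lie in $B_f^{|u|}$, where $h(u\otimes w)=f'(u\otimes w)$, and these occur in $\order_f$‑order by the key point, so $\maplex_{\ordera}h(u)=\maplex_{\ordera_f}f'(u)=f(u)$; the case $u\in\dom(g)$ is symmetric.

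The main obstacle, and the reason the argument is phrased this way, is a potential circularity: the ``natural'' construction would precompose $f'$ and $g'$ with the morphisms that forget the $B_g$‑ resp.\ $B_f$‑component of the annotation, but closure of $\LEX$ under precomposition by morphisms is a special case of Theorem~\ref{thm:precompose}, whose proof relies on the present lemma. The fix above uses only the trivially self‑contained closures under letter renamings and alphabet extension and under domain restriction (Example~\ref{ex:domrestrict}), and invokes the lemma strictly at level $k-1$, i.e.\ as a genuine induction. The one delicate point in the verification is that the ``mixed'' annotations of $u$ must lie in $\dom(h)$ — because $\maplex$ requires every annotation to be defined — yet must output $\epsilon$; this is precisely why $D_3$ enters as a constant‑$\epsilon$ summand rather than being left undefined.
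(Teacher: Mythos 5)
Your proof is correct and follows essentially the same route as the paper's: merge the two marble alphabets into a disjoint union ordered so as to extend both original orders, define the assistant transduction as the sum of the restrictions of $f'$ and $g'$ to the ``pure'' annotations plus a constant-$\epsilon$ transduction on everything else, and conclude by induction together with closure under regular domain restriction. Your write-up is in fact slightly more careful than the paper's sketch — you handle unequal levels via $\LEX_j\subseteq\LEX_{j+1}$, and by intersecting $D_1$ with $\pi_\Sigma^{-1}(\dom(f))$ you ensure that $B_f$-pure annotations of words in $\dom(g)$ land in the $\epsilon$-summand, a case the paper's set $C$ as literally written does not cover.
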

\begin{proof}
    The proof is done by induction on the levels. We only show the
    induction step. Suppose that $f = \maplex_{(B_1,\order_1)}\ f'$ and
    $g=\maplex_{(B_2,\order_2)}\ g'$ for some $B_1,B_2$ disjoint and $f',g'$
    lexicographic.
    Then, let $\ordera=(B,\order)$ with $B = B_1\cup B_2$ ordered by
    $b\order c$ if $b,c\in B_1$ and $b\order_1 c$,  or $b,c\in B_2$ and $b\order_2 c$, or $b\neworder c$ where $\neworder$
    is an arbitrary linear order of $(B_1\times B_2)\cup (B_2\times
    B_1)$. Then, $f+g = \maplex_{\ordera} (f'|_{(\Sigma\times
      B_1)^*}+g'|_{(\Sigma\times B_2)^*} + C\ch \epsilon)$ where $C =
    ((\dom(f)\cup \dom(g))\otimes B^*)\setminus ((\Sigma\times B_1)^*\cup
    (\Sigma\times B_2)^*)$.  We can conclude by induction hypothesis
    and the fact that $k$-lexicographic transductions are closed under
    regular domain restriction, for any fixed $k$. 
\end{proof}

\begin{proof}[Proof of Theorem~\ref{thm:precompose}]   
    
    Any rational transduction $g$ can be decomposed as $g = h\circ e$
    where $h$ is a morphism and $e$ is a letter-to-letter rational
    transduction, in the sense that any transition outputs exactly one
    letter. This is easily seen: if $g$ is defined as a finite
    transducer, $e$ outputs the sequence of transitions, and $h$
    computes locally every output of the transitions. So, it suffices
    to show that $\LEX$ is closed under pre-composition by a
    morphism, and by a letter-to-letter rational transduction. 

    \vspace{2mm}
    \emph{Closure under morphism} Let $\psi : \Gamma\rightarrow
    \Lambda^*$ be a morphism. We show that $f\circ \psi\in \LEX$. We
    do it by induction on the level $k$ of $f$. If $f$ is simple,
    then $f\circ \psi$ is also simple. Otherwise, $f = \maplex_{\ordera}\
    f'$ for some $\ordera=(B,\order)$ and $f'$. We extend $\ordera$ into $\ordera'=(B',\order')$ with $B' = (\Lambda\times
    B)^{\leq M}$, where $M =
    \text{max}_{\gamma\in\Gamma}|\psi(\gamma)|$, and order it as
    follows: for all $v_1\otimes b_1, v_2\otimes b_2\in B'$, 
    $v_1\otimes b_1\order' v_2\otimes b_2$ if $v_1=v_2$ and $b_1\order b_2$, or $v_1\order_{\Lambda} v_2$ for $\order_{\Lambda}$ an arbitrary linear order. Then, let $L$ be
    the set of words in $(\Gamma\times
    (\Lambda\times B)^{\leq M})^*$, of the form
    $(\gamma_1,v_1\otimes b_1)\dots (\gamma_n,v_n\otimes b_n)$
    such that $v_i = \psi(\gamma_i)$ for all $i=1,\dots,n$. Define
    $f''$ the transduction which takes any word in $L$ and returns $f'((v_1\dots v_n)\otimes (b_1\dots b_n))$, otherwise it returns $\epsilon$. By definition of
    $f''$ and $\order_{B'}$, we have:
    $$
    f\circ \psi\ =\ \maplex_{\ordera'}\ f''
    $$
    We conclude by proving that $f''\in \LEX$. 
    It is a consequence of
    the fact that $(1)$ $f'' =  (f'\circ \pi)|_L + C\ch \epsilon$,
    where $C = (\Gamma\times (\Lambda\times B)^{\leq M})^*\setminus L$
    and $\pi$ is the
    morphism which maps any pair $(\gamma,v\otimes b)\in\Gamma\times
    (\Lambda\times B)^{\leq M}$ to $v\otimes b$, $(2)$
    the induction hypothesis, which implies that $(f'\circ \pi)\in \LEX$, $(3)$
    lexicographic transductions are closed under regular domain
    restriction (see Example~\ref{ex:domrestrict}) and $(4)$
    lexicographic transductions are closed under sum by Lemma~\ref{lem:sumclosure}. This does not change the level. 

    \vspace{2mm}
    \emph{Closure under letter-to-letter rational transductions} Let $g$
    be a letter-to-letter rational transduction, given by a finite
    transducer $(A,\out)$, where $A$ is a non-deterministic finite
    automaton with set of transitions $\Delta$, and
    $\out:\Delta\rightarrow \Lambda$ is a morphism. We show that
    $(f\circ g)\in \LEX$. Without loss of generality, we assume that $f=\maplex_{\ordera}\ f'$
    for some $\ordera=(B,\order),f'$ and $f' : (\Lambda\times B)^*\rightarrow \Gamma^*$ lexicographic. Then, $B$ is extended
    into $B' = B\times \Delta$, ordered as $(b_1,\delta_1)\order'
    (b_2,\delta_2)$ if $b_1\order b_2$ or $b_1=b_2$ and $\delta_1\order_\Delta
    \delta_2$ for $\order_\Delta$ an arbitrary order.

    Let $L\subseteq (\Gamma\times B')^*$ be the set of words
    $(\gamma_1,b_1,\delta_1)\dots (\gamma_n,b_n,\delta_n)$ such that
    $\delta_1\dots\delta_n$ is an accepting run of $A$ on
    $\gamma_1\dots\gamma_n$. Clearly, $L$ is regular. Then,
    $f\circ g = \maplex_{B'}\ (f'\circ \out\circ \pi_\Delta)|_L$
    where $\pi_\Delta$ is the $\Delta$-projection morphism.  The
    result follows as lexicographic transductions are closed under
    morphism and regular domain restriction. This does not change the level.
    \end{proof}

\end{document}